\newtheorem{theorem}{Theorem}
\journal{Knowledge-Based Systems}
\begin{document}

\begin{frontmatter}

%% Title, authors and addresses

%% use the tnoteref command within \title for footnotes;
%% use the tnotetext command for theassociated footnote;
%% use the fnref command within \author or \address for footnotes;
%% use the fntext command for theassociated footnote;
%% use the corref command within \author for corresponding author footnotes;
%% use the cortext command for theassociated footnote;
%% use the ead command for the email address,
%% and the form \ead[url] for the home page:
%% \title{Title\tnoteref{label1}}
%% \tnotetext[label1]{}
%% \author{Name\corref{cor1}\fnref{label2}}
%% \ead{email address}
%% \ead[url]{home page}
%% \fntext[label2]{}
%% \cortext[cor1]{}
%% \affiliation{organization={},
%%             addressline={},
%%             city={},
%%             postcode={},
%%             state={},
%%             country={}}
%% \fntext[label3]{}

\title{Consecutive Knowledge Meta-Adaptation Learning for Unsupervised Medical Diagnosis}

%% use optional labels to link authors explicitly to addresses:
%% \author[label1,label2]{}
%% \affiliation[label1]{organization={},
%%             addressline={},
%%             city={},
%%             postcode={},
%%             state={},
%%             country={}}
%%
%% \affiliation[label2]{organization={},
%%             addressline={},
%%             city={},
%%             postcode={},
%%             state={},
%%             country={}}

\author[inst1]{Yumin Zhang}

\affiliation[inst1]{organization={Department of Statistics, Jinan University},%Department and Organization
            % addressline={Address One}, 
            city={Guangzhou},
            postcode={510632},
            % state={State One},
            country={China}}
\author[inst1]{Yawen Hou}
\affiliation[inst2]{organization={Baidu Inc.},%Department and Organization
            % addressline={Address Two}, 
            city={Beijing},
            postcode={100080}, 
            % state={State Two},
            country={China}}
\author[inst2]{Xiuyi Chen\corref{cor1}}
\cortext[cor1]{Corresponding author.}
\ead{chenxiuyi01@baidu.com}
\author[inst3]{Hongyuan Yu}
\affiliation[inst3]{Multimedia Department
Xiaomi Inc.}
\author[inst2]{Long Xia}

\begin{abstract}
Deep learning-based Computer-Aided Diagnosis (CAD) has attracted appealing attention in academic researches and clinical applications. 
Nevertheless, the Convolutional Neural Networks (CNNs) diagnosis system heavily relies on the well-labeled lesion dataset, and the sensitivity to the variation of data distribution also restricts the potential application of CNNs in CAD.
Unsupervised Domain Adaptation (UDA) methods are developed to solve the expensive annotation and domain gaps problem and have achieved remarkable success in medical image analysis.
Yet existing UDA approaches only adapt knowledge learned from the source lesion domain to a single target lesion domain, which is against the clinical scenario: the new unlabeled target domains to be diagnosed always arrive in an online and continual manner. 
Moreover, the performance of existing approaches degrades dramatically on previously learned target lesion domains, due to the newly learned knowledge overwriting the previously learned knowledge  (\emph{i.e.}, catastrophic forgetting).
% To deal with the above issues, we develop a meta-adaptation framework named \underline{C}onsecutive \underline{L}esion \underline{K}nowledge \underline{M}eta-Adaptation (CLKM) that focuses on adapting the diagnosis model to unlabeled continuous target lesion domains and alleviate catastrophic forgetting.
% Specifically, CLKM mainly consists of two phases: Semantic Adaptation Phase (SAP) and Representation Adaptation Phase (RAP). 
% (1) In the SAP, the semantic knowledge learned from the source lesion domain is transferred to consecutive target lesion domains.
% To alleviate catastrophic forgetting during adaptation, meanwhile, a domain-quantizer is designed for reserving utility knowledge from previously learned target lesion domains.
% Besides, since the semantic knowledge is continuously changing and artificial-designed kernels can't model such complex distribution, we thus introduce a self-adaptive kernel to help measure the shift across the source and consecutive target lesion domains flexibly.
% (2) In the RAP, the feature-extractor is optimized to align the transferable representation knowledge across the source and multiple target lesion domains.
% % (2) In the RAP, to make SAP more effective, the feature-extractor is optimized, which can align the transferable representation knowledge across the source and multiple target lesion domains.
% Moreover, the domain-quantizer is also updated for reserving more important learned knowledge.
To deal with the above issues, we develop a meta-adaptation framework named \underline{C}onsecutive \underline{L}esion \underline{K}nowledge \underline{M}eta-Adaptation (CLKM), which mainly consists of Semantic Adaptation Phase (SAP) and Representation Adaptation Phase (RAP) 
% to adapt the diagnosis model to unlabeled continuous target lesion domains.
to learn the diagnosis model in an online and continual manner.
In the SAP, the semantic knowledge learned from the source lesion domain is transferred to consecutive target lesion domains.
In the RAP, the feature-extractor is optimized to align the transferable representation knowledge across the source and multiple target lesion domains.
Furthermore, to alleviate catastrophic forgetting during the continuous unsupervised domain adaptation, a domain-quantizer is designed for reserving utility knowledge from previously learned target lesion domains.
Besides, since the semantic knowledge is continuously changing and artificial-designed kernels can't model such complex distribution, a self-adaptive kernel is designed to help measure the shift across the source and consecutive target lesion domains flexibly in the SAP.
Experimental results show the effectiveness of CLKM on medical images analysis tasks.

\end{abstract}

%%Graphical abstract
% \begin{graphicalabstract}
% \includegraphics{grabs}
% \end{graphicalabstract}

%%Research highlights
% \begin{highlights}
% \item Research highlight 1
% \item Research highlight 2
% \end{highlights}

\begin{keyword}
%% keywords here, in the form: keyword \sep keyword
Unsupervised Domain Adaptation \sep Medical Lesion Diagnosis
\sep Continual Learning
% PACS codes here, in the form: \PACS code \sep code
% \PACS 0000 \sep 1111
% % MSC codes here, in the form: \MSC code \sep code
% % or \MSC[2008] code \sep code (2000 is the default)
% \MSC 0000 \sep 1111
\end{keyword}

\end{frontmatter}

%% \linenumbers

%% main text
\section{Introduction}
% Computer-Aided Diagnosis (CAD) technology can greatly improve the speed and accuracy of lesion diagnosis.
% With the breakthrough of deep learning, many researchers employ it for medical lesion diagnosis and have achieved numerous significant successes, such as diabetic retinopathy diagnosis~\cite{ting2017development}, epithelium-stroma classification~\cite{huang2017epithelium}, holistic classification of CT patterns for interstitial lung disease and liver segmentation~\cite{dou20163d}.
% However, above deep learning-based computer-aided diagnosis methods~\cite{ting2017development,huang2017epithelium,dou20163d} requires dense annotated medical data for training a deep neural network.

In medical imaging field, Computer-Aided Diagnosis (CAD) \cite{ahmadi2019computer} is the computer-based system that helps doctors to take decisions swiftly.
With the development of artificial intelligence, deep learning-based CAD has attracted appealing attention in academic researches, where deep neural networks (e.g. CNNs) are trained on dense annotated medical data for medical lesion diagnosis~\cite{zhao2018lung,liu2018segmentation,lakhani2017deep}.
Recently, there are numerous clinical applications of deep learning-based CAD, such as diabetic retinopathy diagnosis~\cite{ting2017development}, epithelium-stroma classification~\cite{huang2017epithelium}, holistic classification of CT patterns for interstitial lung disease and liver segmentation~\cite{dou20163d}.

% Nevertheless, medical lesion data annotation is time-consuming and label-intensive and also requires specialized biomedical knowledge.
However, above approaches of deep learning-based CAD  are data-hungry while there usually does not exist sufficient medical lesion data due to the time-consuming label-intensive annotation and specialized biomedical knowledge.
Furthermore, in clinical applications, due to imaging protocol, device vendors, personal physical, and so on, there is a significant domain distribution discrepancy between medical training data (a.k.a. source lesion domain) and medical testing data (a.k.a. target lesion domain). 
The gap across domains causes the lesion diagnosis model trained on source medical lesion (a.k.a. source lesion domain) to suffer from dramatic performance degradation when performing on the target medical lesion (a.k.a. target lesion domain).
Thus, Unsupervised Domain Adaptation (UDA) methods \cite{perone2019unsupervised,xie2022unsupervised} are proposed to address the above challenges by adapting a trained model on well-labeled source lesion domain to a different but relevant unlabeled target lesion domain.

Generally speaking, current UDA-based medical lesion diagnosis methods can be categorized into three main streams.
The first stream aims to build domain-invariant feature spaces through adversarial learning~\cite{ganin2016domain}.
For example, Han et al. \cite{han2018gan}~utilized Generative Adversarial Network (GAN) to generate synthetic brain Magnetic Resonance (MR) samples for data augmentation.  
The second stream aims to mitigate the distribution discrepancy across domains via a reasonable metric.  
Specifically, Tzeng et al.~\cite{tzeng2014deep} constructed domain confusion loss to make domains indistinguishable.
A similar strategy is used in the COVID-19 diagnosis task~\cite{niu2021distant}.
The third stream is pseudo-labeling~\cite{lee2013pseudo} which 
predicts the labels of unsupervised target data and improves the performance  of the lesion diagnosis model via iterative training.
% is generated iteratively for unsupervised target data and then re-train the model to improve the performance.
% In the whole heart segmentation task~\cite{wang2021few}, the quality of the predictions is evaluated by a shape-constrained network and then to select high-quality pseudo labels, which is the key to success in the self-training stage.
In the whole heart segmentation task~\cite{wang2021few}, the shape-constrained network was designed to evaluate the quality of the predictions and then to select select high-quality pseudo labels, which is the key to success in the self-training stage.
% \cite{wang2021few} designed a shape-constrained network to select high-quality pseudo labels in the whole heart segmentation task, where the quality of the predictions is the key to success in the self-training stage.
Above UDA-based medical diagnosis methods~\cite{han2018gan,niu2021distant,wang2021few} have achieved great success on discrete target lesion domains adaptation.
However, the key to adversarial learning is to distinguish discrete domains and thus can't apply to consecutive target domains directly. Besides, due to complex target distributions, reasonable metrics and reliable pseudo-labels are hardly accessible in the real-clinical environment.
% We adopt the second stream and propose a self-adaptive kernel to help flexibly measure the discrepancy across complex distributions.
Therefore, 
% in this paper, 
we here focus on the second stream, and deal with more complex distribution during the continuous unsupervised domain adaptation.
% , which will be explained as bellow.

% \begin{figure}[htb] 
\begin{figure}[t!]
 \center{\includegraphics[width=17cm]{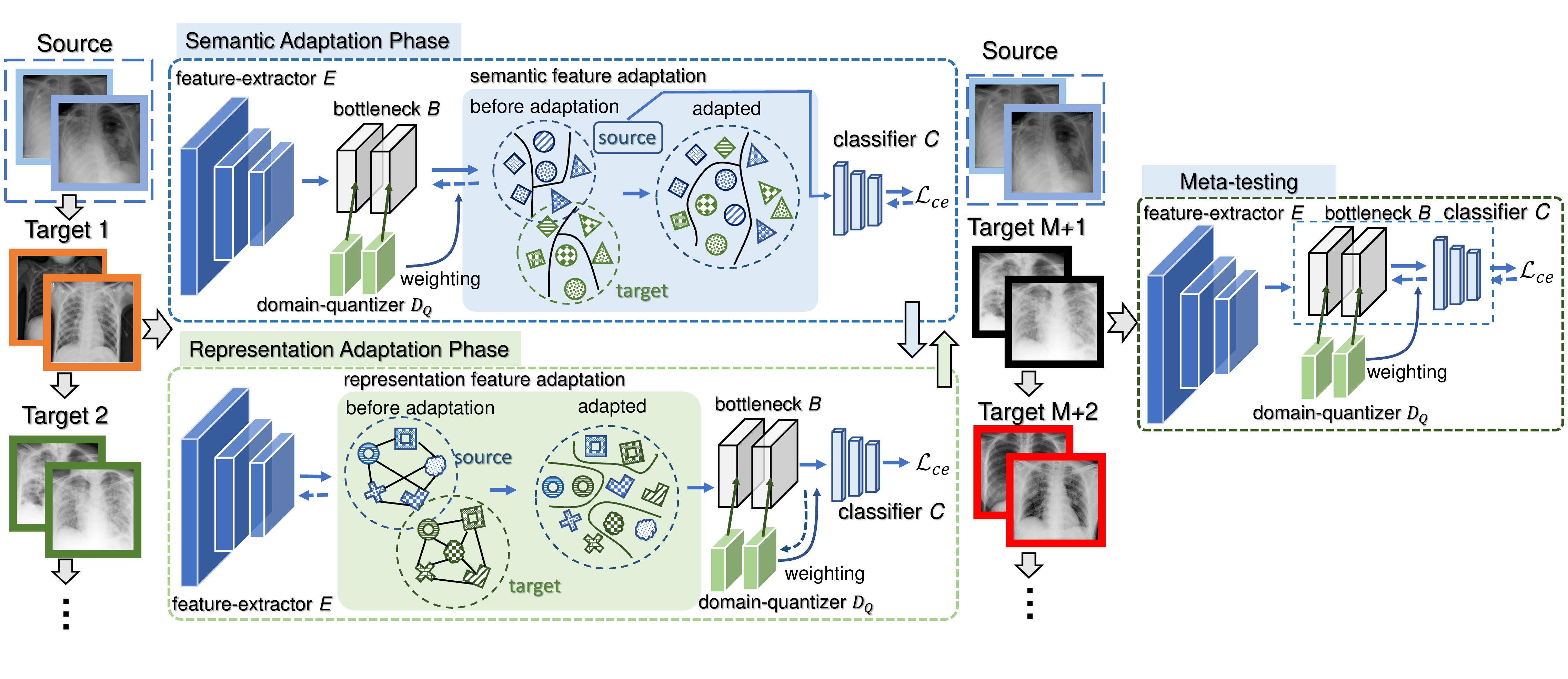}}
    \caption{
    Demonstration of Consecutive Lesion-Domain Adaptation (CLDA). 
    The source lesion data is well-labeled and stationary, while the unlabeled target lesion data to be diagnosed arrive in an online and continual manner.
    }
    \label{fig:fig1}
\end{figure}

To better illustrate the clinical problem that consecutive unlabeled target lesion domains to be diagnosed, we here leverage pneumonia caused by various factors as a possible example and provided in Figure~\ref{fig:fig1}. 
Suppose we have a diagnosis system trained in a fixed scenario with well-labeled datasets (bacterial pneumonia) and apply the system in the clinical environment (COVID-19 and COVID-19 variants to be diagnosed).
Since the strong infectivity of COVID-19 and its variants, it's hard to obtain reliable annotated lesion data in a short time.
Besides, due to the mutation of COVID-19 quickly, the diagnosis system is in a dynamic clinical environment, facing consecutive target lesion domains to be diagnosed.
Although such a diagnosis system performs well on the source lesion data, correctly diagnosing the target lesion data collected from a real-clinical environment is difficult.
% That is because the characteristics of images are continually changing over consecutive target lesion domains, and hence the performance of the diagnosis system would have dramatic degeneration due to the domain shift.
Because the characteristics of images are continually changing over consecutive target lesion domains,
there is domain shift and the diagnosis system would have dramatic performance degeneration.
Moreover, the inevitable medical data privacy is another inescapable challenge.
The above issues make the UDA-based works in medical lesion diagnosis~\cite{ouyang2019data, ahn2020unsupervised} has drawn surging attention recently.
Meanwhile, as the target lesion data arrive online, the currently learned knowledge will overwrite the previously learned knowledge.
Such methods~\cite{han2018gan,niu2021distant,wang2021few,ouyang2019data, ahn2020unsupervised} face another challenge: catastrophic forgetting~\cite{mccloskey1989catastrophic}.
To cope with the above challenges, in our work, a realistic medical lesions diagnosis setting is considered: it focuses on the medical diagnosis model trained on a well-labeled source lesion domain that can adapt to the continuously arriving unlabeled target lesion domains online while maintain the diagnostic ability on previous learned domains.
% We formulate the following settings as Continuous Lesion-Domain Adaptation (CLDA):
We formulate this setting as Continuous Lesion-Domain Adaptation (CLDA) which is discribed as follows:

$\bullet$ \textbf{Continuous Diagnosis Ability.} 
Generally, the medical diagnosis system is trained under a fixed scenario yet applied in a dynamic clinical environment. 
Thus, the system should have the continuous diagnosis ability, which is crucial to its deployment.
Moreover, appropriate metric across the source and target lesion domains is key to continual adaptation. 
The effectiveness of Maximum Mean Discrepancy (MMD)~\cite{gretton2012kernel}, widely used in domain adaptation, depends on the prior designed kernel. 
However, as the semantic knowledge of consecutive target lesion
domains changing continuously, setting appropriate artificial-designed kernels to model such complex distributions is difficult~\cite{liu2020learning1}.
To deal with this, we propose a self-adaptive kernel to measure distribution divergence across the source and target lesion domains adaptively.
% To more flexible measure the gap across the source and target lesion domains, we adopt a strategy to set a self-adaptive kernel, and then construct MMD based on it.

$\bullet$ \textbf{Catastrophic Forgetting.} 
The catastrophic forgetting~\cite{mccloskey1989catastrophic} in neural networks during adaptation makes the adapted model perform poorly on the previously learned lesion domains.
Generally, there have been many efforts to overcome this problem.
For example, Elastic Weight Consolidation (EWC)~\cite{kirkpatrick2017overcoming} as a regularization method has been successfully applied in two common medical tasks~\cite{baweja2018towards} : segmentation of normal structures and segmentation of white matter lesion in brain MRI.
Besides, replay methods are proposed in the task of segmenting humerus and scapula bones on MR images~\cite{ozdemir2018learn} and chest CT classification~\cite{hofmanninger2020dynamic} to prevent catastrophic forgetting.
These methods either rely on supervised information to determine the importance of knowledge to preserve  \cite{baweja2018towards} or require storing previously learned data ~\cite{ozdemir2018learn,hofmanninger2020dynamic}. 
However, in the real-clinical environment, prior knowledge of the target lesion domain is usually absent, and it is unrealistic to store previously learned data due to the limited resources of the online diagnosis devices.

Therefore, aiming to tackle the above challenges, we propose Consecutive Lesion Knowledge Meta-Adaptation (CLKM), a meta-adaptation framework that purpose to adapt the diagnosis model to consecutive unlabeled target lesion data that coming online meanwhile to maintain performance on all the previous and the new learned target lesion domain.
Concretely, our framework consists of two phases: Semantic Adaptation Phase (SAP) and Representation Adaptation Phase (RAP), 
which is alternatively optimized via meta-training mechanism.
In the SAP, we continuously transfer learned semantic features from the source lesion domain to consecutive target lesion domains. 
In the RAP, we optimize the feature-extractor to further align the learned transferable representation across the source and target lesion domains, which makes the subsequent semantic adaptation phase more effective.
To prevent the diagnosis model from catastrophic forgetting during semantic adaptation, we introduce a domain-quantizer to learn how to select the knowledge that needs to be reserved via an anti-forgetting regularizer. 
% Besides, the domain-quantizer is also optimized in the RAP, to learn how to reserve more previously learned important knowledge.
As the unlabeled target lesion data are collected continuously, the semantic features change over time, which makes setting appropriate artificial-designed kernels to model such complex distributions impossible. 
Hence, the self-Adaptive Kernel-based Maximum Mean Discrepancy (AK-MMD) is designed to help measure the shift across the source and consecutive target lesion domains in the SAP.
% Further, since the semantic feature is extracted from representation features, thus we optimize the feature-extractor to align the learned transferable representation across the source and target lesion domains in the RAP, which makes the subsequent semantic adaptation phase more effective.
In the meta-testing phase, the parameters of the feature-extractor and domain-quantizer are fixed, and we only fine-tune the last few layers (\emph{i.e.}, classifier and bottleneck) to minimize the consumption of computing resources.
Compared with other domain adaptation methods, the experimental results validate the effectiveness of CLKM in adapting knowledge and alleviating catastrophic forgetting.
Our main contributions in this work are summarized as follows.

\begin{enumerate}[(1)]
\item We introduce the continuous lesion diagnosis concept into UDA medical lesion diagnosis field and formulate a new but more realistic setting: CLDA. 
Besides, to address this issue, a meta-adaptation framework CLKM is proposed that mainly consists of two phases: SAP and RAP, which aims to adapt the diagnosis model to consecutive target lesion domains and alleviate catastrophic forgetting.
% Besides, to address this issue, a meta-adaptation framework CLKM is proposed that mainly consists of SAP and RAP to learn the diagnosis model in an online and continual manner. 

\item 
The SAP and RAP cooperate in the meta-adaptation framework via innerloop and outerloop to learn the fast adaptation meta-test features.
In the SAP, the semantic knowledge is adapted to consecutive target lesion domains preliminarily. 
In the RAP, we optimize the feature-extractor to align representation features across the source lesion domain and multiple target lesion domains.

\item 
% To alleviate catastrophic forgetting, we introduce the domain-quantizer, which can reserve the previously learned knowledge.
We propose the domain-quantizer against catastrophic forgetting via reserving previously learned utility diagnostic knowledge. Moreover, the self-adaptive kernel is designed to help measure the discrepancy across the source and consecutive target domains, whereas the artificial-designed kernel is not suitable due to the complexity of distributions.

% Moreover, as the target lesion domains arrive in a continual and online manner, the artificial-designed kernel can’t model such complex distributions, and thus we further introduce the self-adaptive kernel in the SAP to help measure the discrepancy across the source and target lesion domains.

% \item 
% In the SAP, the semantic knowledge is adapted to consecutive target lesion domains preliminarily. 
% To alleviate catastrophic forgetting, we introduce the domain-quantizer, which can reserve the previously learned knowledge.
% Moreover, as the target lesion domains arrive in a continual and online manner, the artificial-designed kernel can’t model such complex distributions, and thus we further introduce the self-adaptive kernel to help measure the discrepancy across the source and target lesion domains.

% \item
% In the RAP, we optimize the feature-extractor to align representation features across the source lesion domain and multiple target lesion domains, which aims to make the SAP more effective. 
% Besides, we also update the domain-quantizer, to preserve more utility knowledge.
% Compared with other adaptation methods, the experimental results show the advantages of CLKM.

\item 
Extensive experiments on six medical images analysis tasks show that the proposed CLKM enjoys the superiority over other adaptation methods according the diagnosis accuracy,
% the superiority of CLKM over other adaptation methods according xx metrics, 
and CLKM manages to adapt the diagnosis model to consecutive target lesion domains while maintain
% performance 
the diagnosis knowledge of previous learned domains.

\end{enumerate}

\section{Related Works}

\subsection{Classic Unsupervised Domain Adaptation (UDA)}
Numerous UDA methods are developed recently, and the mainstream of UDA can be come down to as adversarial learning, metric learning, and pseudo-labeling.

\textbf{(1) Adversarial Domain Adaptation.}
The main strategy of adversarial learning~\cite{goodfellow2014generative} in UDA is to align the features of different domains via a game of generator and discriminator. 
For example, Genin et al.~\cite{ganin2016domain} utilized a domain-classifier to distinguish the different domains, and leveraged the generator to confuse the domain-clasifier via generating an aligned distribution across different domains.
Similarly, the Adversarial Discriminative Domain Adaptation was proposed in~\cite{tzeng2017adversarial}, they first pre-trained source encoder (generator) and then a shared domain classifier was used to align the output of feature encoders belong different domains. 
Pinheiro et al.~\cite{pinheiro2018unsupervised} proposed SimNet, leveraging a similarity-based classifier and adapting to the unlabeled domain via optimizing the adversarial loss.
Moreover, Zhang et al. \cite{zhang2018collaborative}~proposed the Collaborative and Adversarial Network (CAN), introducing the collaborative adversarial concept into UDA, and CAN further be extended as Incremental CAN by a pseudo-labeling method.
Classic methods \cite{tzeng2017adversarial,pinheiro2018unsupervised,zhang2018collaborative} distinguish the features of different domains by training domain classifier, which ignores the existence of each domain’s special characteristics. 
Without the domain classifier, Lee et al. \cite{lee2019drop} proposed to Drop to Adapt, which learns strongly differentiated features by enforcing the cluster assumption and adversarial dropout. 
Similarly, further considering the boundaries between classes,
Saito et al. \cite{saito2017adversarial}~leveraged the adversarial process to align domain distribution.
Besides, several methods also aim to align conditional or leverage joint distributions. 
For example, Long et al.~\cite{long2017conditional} focused on harnessing the cross-variance between representations in feature space and classifier prediction, and thus they designed two conditioning strategies and proposed a Conditional Domain Adversarial Network.
Cicek and Soatto~\cite{cicek2019unsupervised} trained a shared embedding to align the joint distributions over inputs and outputs simultaneously.
Recently, Dong et al.~\cite{dong2021and} proposed Knowledge Aggregation-induced Transferability Perception Adaptation Network to distinguish transferable or untransferable knowledge across domains to reduce the the negative transfer influence.

\textbf{(2) Metric Learning for Domain Adaptation.}
Metric learning methods aim to map inputs to embedding space, and then the discrepancy between different samples can be measured. 
From this concept, researchers constructed diverse metric-based loss functions, to mitigate the domain shift.
Among these methods, Maximum Mean Discrepancy (MMD)~\cite{tzeng2014deep} is widely used to measure the shift across different domains. 
Moreover, Li et al.~\cite{li2020maximum} proposed an adversarial tight match training strategy, leveraging maximum density divergence to qualify the distribution divergence.
In~\cite{sun2016return}, the domain discrepancy is mitigated by aligning the covariance of the source and target distributions. 
Besides, Luo et al.~\cite{luo2020unsupervised} adopted the affine Grassmann distance and the Log-Euclidean metric in unsupervised domain adaptation.

\textbf{(3) Pseudo-Labeling for Domain Adaptation.}
In such methods, the pre-trained model assigns pseudo labels to the unlabeled target data in advance and then is trained together with labeled source data.
As the model is updated iteratively, the pseudo labels of the target data will also be iterated and become more accurate.
In~\cite{zhang2017joint}, the pseudo labels for unsupervised target data were matched by a pre-trained classifier.
Another method to produce pseudo-label is clustering.
For instance, Sener et al.~\cite{sener2016learning} leveraged the K-Nearest Neighbor to obtain pseudo labels.
Notably, the accuracy of pseudo labels will affect model performance significantly. 
Therefore, the above methods are not applicable when data distributions embody complex multimodal structures. 
Some works~\cite{kang2019contrastive,tang2020unsupervised} rely on complex noise reduction to address such a limitation.
For example, Chen et al.~\cite{chen2019progressive} utilized an Easy-to-Hard Transfer Strategy to select reliable pseudo-labeled target data, then introduced an Adaptive Prototype Alignment to minimize the negative influence of falsely-labeled samples.

However, we cannot apply the above methods in CLDA directly. 
The reason is that the domain discriminator successfully distinguish two discrete domains is the key to adversarial learning, while in our setting, the target lesion domains are arriving continuously.
Besides, metric learning relies on appropriate artificial-designed kernels, which is impossible facing complex distribution. 
For pseudo-labeling methods, we have not pre-trained models to generate reliable pseudo-labels.

\subsection{Medical Imaging Analysis}
Medical imaging classification plays a critical role in many clinical environments. 
Since the variety of imaging protocols, device vendors, and patient populations, domain shift in the clinical environment is a frequent problem.
Numerous approaches are proposed to tackle this problem, such as, Zhang et al.~\cite{zhang2018task} transfered the knowledge learned from CT images to X-ray images via CycleGAN. 
Chen et al.~\cite{chen2019synergistic} conducted domain translation from MR to CT domain for heart segmentation problems.
Dou et al.~\cite{dou2019pnp} proposed two parallel domain-specific encoders and decoders where the weights are shared between domains to boost the model performance in both single domain and cross-domain scenarios.
More recently, under the privacy-preserving condition, Tian et al.~\cite{tian2021privacy} proposed a novel gradient aggregation method, which can better extract the shareable information among the data from multiple local servers. 
Besides, self-supervised learning also has been widely applied in medical image tasks.
For example, Xie et al.~\cite{xie2020instance} introduced a triplet loss for self-supervised learning in nuclei segmentation task. 
By appending a classification branch to discriminate the high-level feature, Haghighi et al.~\cite{haghighi2020learning} improved Model Genesis~\cite{zhou2019models}.
Moreover, Dong et al.~\cite{dong2020can} leveraged the dependencies between the target lesion datasets to propose a new weakly-supervised semantic transfer model in an unsupervised situation, which alternatively explores transferable domain-invariant knowledge between the source and target domains.

Nevertheless, the medical imaging methods mentioned above, neglect the real-clinical environment where the target lesion data to be diagnosed arrives continuously, and they are not capable of continual learning ability.

\section{Methodology}
In this section, we first formulate the problem of adaptation on consecutive unlabeled target lesion domains and present the overall architecture of the model framework. 
Then, we introduce the details of the implementation, including the comprehensive theoretical derivation.

\subsection{Problem Setting and Overview}
\textbf{Problem Setting:}
Consider a source lesion domain $\mathcal{D}^{s}$
% = \{(\textbf{X}_{i}^{s},\textbf{Y}_{i}^{s})\}_{i=1}^{n_s}$ 
and $M$ consecutive target lesion domains $\{\mathcal{D}^{t}_1, \mathcal{D}^{t}_2, \cdots, \mathcal{D}^{t}_M\}$, where the $\mathcal{D}^{s}=\{\mathbf{x}_{i}^s, \mathbf{y}_{i}^s\}^{N^s}_{i=1} \subset\mathcal{P}^s$ consists of $N^s$ pairs of samples $\mathbf{x}_{i}^s$ and their one-hot encoding labels $\mathbf{y}_{i}^s$, and $\mathcal{P}^s$ denotes the source lesion distribution. 
Define $\mathcal{D}^{t}_m=  \{\mathbf{x}_{m,j}^{t}\}_{j=1}^{N_m^{t}}\subset\mathcal{P}^{t}_m$ ($m=1, 2, \cdots, M$) as the $m$-th unlabeled target lesion domain with $N^{t}_m$ target samples $\mathbf{x}_{m,j}^{t}$, where the $\mathcal{P}^{t}_m$ is the corresponding distribution of the $m$-th target lesion domain.
Different from traditional domain adaptation methods~\cite{tzeng2017adversarial,dong2020can,long2015learning,sankaranarayanan2018generate,shu2018dirt,bousmalis2016domain} that only transfer knowledge learned from a source domain to a single target domain, our model aims to learn transferable knowledge from well-labeled source lesion domain $\mathcal{D}^s$, and achieve knowledge adaptation to 
consecutive multiple unlabeled target lesion domains $\{\mathcal{D}^{t}_1, \mathcal{D}^{t}_2, \cdots, \mathcal{D}^{t}_M\}$. 
Note that the target lesion domain $\{\mathcal{D}^{t}_m\}_{m=1}^M$ arrives continuously in real-clinical environment, and we have no any priors about total number of target domains and each target distribution. 
In summary, our method focuses on learning a diagnosis model to recognize medical lesions from novel target domains continuously via exploring transferable knowledge from the source lesion domain, while alleviating catastrophic forgetting during adaptation.

\textbf{Overview:}
The overview of our proposed Consecutive Lesion Knowledge Meta-Adaptation (CLKM) is depicted in Figure~\ref{fig:fig2}.
Specifically, CLKM mainly consists of two phases: Semantic Adaptation Phase (SAP) and Representation Adaptation Phase (RAP). 
Given the $m$-th target lesion domain $\mathcal{D}_m^t$, we construct the support set $\mathcal{S}_m=\{\mathbf{x}_{m,i}^t\}_{i=1}^{N_{sup}}\in\mathcal{D}_m^t$ and the query set $\mathcal{Q}_m=\{\mathbf{x}_{m,j}^t\}_{j=1}^{N_{que}}\in\mathcal{D}_m^t$, where $N_{sup}$ and $N_{que}$ represent the number of sampled images $\mathbf{x}_{m,i}^t$ and $\mathbf{x}_{m,j}^t$ respectively. 
In the SAP, we first forward source lesion data $\mathcal{D}^s$ and support set $\mathcal{S}_m$ into the feature-extractor $E$ to obtain the mid-level features $\mathrm{F}^s$ 
%$F^s$ 
and 
%$F^{t,sup}_m$
$\mathrm{F}^{t,sup}_m$, which are then fed into a bottleneck $B$ to extract high-level features $\mathrm{G}^s$
%$G^s$ 
and $\mathrm{G}^{t,sup}_m$
%$G^{t,sup}_m$ 
for source lesion domain and the $m$-th target lesion domain. 
For exploring transferable knowledge from the source lesion domain $\mathcal{D}^s$ to the $m$-th unsupervised target lesion domain, we propose a self-Adaptive Kernel-based Maximum Mean Discrepancy (AK-MMD) loss $\mathcal{L}_{ak}$, which can measure distribution divergence across different domains adaptively via a self-adaptive kernel $\delta_{ak}$. 
Moreover, we introduce a domain-knowledge quantizer $D_Q$ to quantify the contributions of knowledge from consecutive target lesion domains, which alleviates the catastrophic forgetting by incorporating it with anti-forgetting regularizer $\mathcal{L}_w$. 
Besides, $\mathrm{G}^s$ is passed into classifier $C$ to compute cross-entropy loss $\mathcal{L}_{ce}$.

Furthermore, in the RAP, the source lesion data $\mathcal{D}^s$ and query set $\mathcal{Q}_m$ are passed into the feature-extractor $E$ to get mid-level features $\mathrm{F}^s$ and $\mathrm{F}^{t,que}_m$, which are then fed into the bottleneck $B$ to obtain high-level features $\mathrm{G}^s$ and $\mathrm{G}^{t,que}_m$ for source lesion domain and the $m$-th target lesion domain.
Similarly in SAP, the $\mathrm{G}^s$ will be passed into classifier $C$ to calculate the cross-entropy loss $\mathcal{L}_{ce}$, and the domain discrepancy between $\mathrm{G}^s$ and $\mathrm{G}^{t,que}_{m}$ will be measured by Maximum Mean Discrepancy (MMD)~\cite{tzeng2014deep}.
To make the SAP more effective, the feature-extractor $E$ is updated to align the transferable representation features across the source and multiple target lesion domains.
%mid-level features from the source lesion domain and multiple target lesion domains well.
Meanwhile, to retain the important knowledge learned from previous target data, the domain-quantizer $D_Q$ is also be optimized. 
Since lack of the supervised information on target lesion domains, we introduce the upper bound loss $\mathcal{L}_u$ to optimize the feature-extractor $E$ and the domain-quantizer $D_Q$ simultaneously.
Inspired by~\cite{finn2017model}, we utilize meta-learning mechanism: SAP and RAP cooperate to learn fast adaptation meta-test features that perform diagnostic knowledge over consecutive target lesion domains.
% knowledge adaptation on consecutive target lesion domains.
% perform knowledge adaptation on consecutive target lesion domains.

% \begin{figure}[htb] 
\begin{figure}[t!] 
 \center{\includegraphics[width=17.5cm]{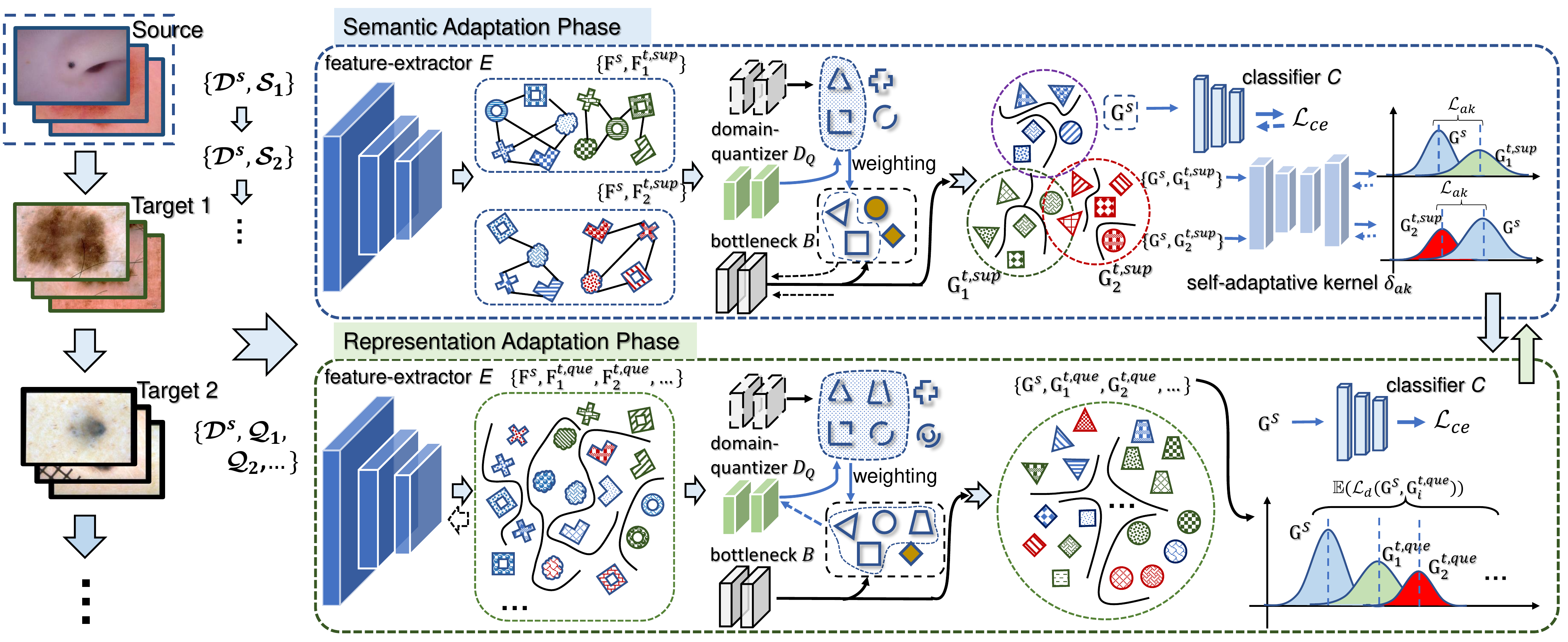}}
    \caption{Overview of Consecutive Lesion Knowledge Meta-Adaptation (CLKM), which mainly consists of the Semantic Adaptation Phase (SAP) and Representation Adaptation Phase (RAP).
    In the SAP, the semantic features are transferred preliminarily, with domain-quantizer to alleviate catastrophic forgetting, and the self-adaptive kernel to flexibly measure the gap across the source and consecutive target lesion domains;
    Further, in the RAP, the feature-extractor and domain-quantizer are optimized simultaneously for making SAP more effective.
    }
    
    \label{fig:fig2}
\end{figure}

\subsection{Semantic Adaptation Phase (SAP)}

In this subsection, we introduce how to transfer learned semantic knowledge from a source domain $\mathcal{D}^s$ to consecutive target domains, and show the details of adaptation to the $m$-th target domain $\mathcal{D}^t_m$.
To flexibly measure the gap across the source lesion domain and consecutive target lesion domains, a self-adaptive kernel $\delta_{ak}$ is designed.
Besides, we propose domain-quantizer $D_Q$ to alleviate catastrophic forgetting.
In the SAP, based on current feature-extractor $E$ and domain-quantizer $D_Q$, the classifier $C$ and bottleneck $B$ will be optimized for semantic knowledge adaptation.
% via cross-entropy $\mathcal{L}_{ce}$ and AK-MMD loss $\mathcal{L}_{ak}$, respectively.

\subsubsection{Classification loss $\mathcal{L}_{\rm{ce}}$ of source lesion domain}
In the process of adapting learned knowledge to consecutive target lesion domains, we can leverage the supervised information on source lesion domain to optimize the classifier $C$ via cross-entropy loss $\mathcal{L}_{ce}$.
Specifically, for the $i$-th element of source sample $\textbf{x}_{i}^{s}$ and its corresponding one-hot label $\textbf{y}_{i}^{s}$, the $\mathcal{L}_{ce}$ is formulated as follows:

\begin{equation}
     \mathcal{L}_{ce} = \mathbb{E}_{(\textbf{x}_{i}^s, \textbf{y}_{i}^s)\in\mathcal{D}^s}[-\textbf{y}_{i}^{s} \log(C(B(E(\textbf{x}_{i}^{s};\Theta_{E});\Theta_{B});\Theta_{C})],
\end{equation}
where $\Theta_{E}$, $\Theta_{B}$ and $\Theta_{C}$ are the parameters of feature-extractor $E$, bottleneck $B$ and classifier $C$.
The $E(\textbf{x}_i^s; \Theta_{E})$, $B(E(\textbf{x}_i^s; \Theta_{E});\Theta_{B})$, and
$C(B(E(\textbf{x}_i^s; \Theta_{E});\Theta_{B});\Theta_{C})$ denote the mid-level features $\mathrm{F}^s$ extracted by $E$ for source samples $\textbf{x}_{i}^{s}$, the high-level features $\mathrm{G}^s$ extracted by $B$, and the probability output of $C$, respectively.

For unlabeled consecutive target lesion domains, we cannot utilize supervised information like the source lesion domain.
Most existing methods align the source and unlabeled target domains in the feature space via reducing MMD~\cite{tzeng2014deep} based on traditional kernel methods.
Such methods \cite{tzeng2014deep,gretton2012kernel,long2017deep} work well for samples from simple distributions when using appropriate artificial-designed kernels.
However, in our settings, the unlabeled target lesion domains are collected continuously.
As a result, the semantic features (distributions) of target lesion data are dynamically changing, and a simple kernel cannot properly model such complex distributions.
To address this problem, we design a self-adaptive kernel to help measure the discrepancy over consecutive unlabeled target lesion domains flexibly.

\subsubsection{Self-Adaptive Kernel-based Maximum Mean Discrepancy (AK-MMD) loss $\mathcal{L}_{ak}$}
As mentioned above, the artificial-designed kernel $\delta$ is not flexible for consecutive target lesion domains. 
To measure the distribution discrepancy across lesion domains adaptively, we adopt a deep network approach to build a self-adaptive kernel.
Most previous works \cite{wilson2016deep,jean2018semi,wenliang2019learning} on deep kernels use a kernel directly measure the output of feature extraction network $F$:
\begin{equation}
    \delta(\textbf{x}^s,\textbf{x}^{t}_{m};\Theta_F)=\mathcal{K}_{\rho}(F(\textbf{x}^s),F(\textbf{x}^{t}_{m})),
\end{equation}
where the $\textbf{x}^s$ and $\textbf{x}^t_m$ are the samples from the source lesion domain $\mathcal{D}^s$ and $m$-th target lesion domain $\mathcal{D}^t_m$, respectively.
$\Theta_F$ represents the parameters of feature extraction network $F$.
However, measuring directly may lead to the learned kernel regarding extremely far-away inputs as similar.
Thus, according to \cite{liu2020learning1}, we add a safeguard in our kernel architecture:

\begin{equation}
    \delta_{ak}(\textbf{x}^s, \textbf{x}^t_m;\Theta_F) = [(1-\epsilon)\mathcal{K}_{\rho}(F(\textbf{x}^s),F(\textbf{x}^t_m)) + \epsilon]\mathcal{K}_{\gamma}(F(\textbf{x}^s), F(\textbf{x}^t_m)),
\end{equation}
where $\mathcal{K}_\rho(F(\textbf{x}^s), F(\textbf{x}^t_m)) = \exp(-\frac{1}{2\sigma^2_{\rho}}||F(\textbf{x}^s) - F(\textbf{x}^t_m)||^2)$ denotes a Gaussian kernel with lengthscale $\sigma_{\rho}$ on the output of $F$, and the $\mathcal{K}_{\gamma}$ denotes a Gaussian kernel with lengthscale $\sigma_{\gamma}$ on the extracted features.
We choose $0 < \epsilon < 1$.

\textbf{Learning the self-adaptive kernel.}
An appropriate metric should be able to distinguish between the distribution of the source lesion domain and the distribution of the $m$-th target lesion domain correctly (\emph{i.e.}, $\mathcal{P}^s \neq \mathcal{P}^t_m$). 
We use MMD~\cite{tzeng2014deep} as metric $d$, and use the $U$-statistic $\hat{d}^2_u$ as an unbiased estimator for $d^2$, which has nearly minimal variance among unbiased estimators \cite{gretton2012kernel}.
Then, the distance between the source lesion distribution and the $m$-th target lesion distribution can be represented as follows:

\begin{equation}
    \hat{d}^2_u(\mathcal{D}^s, \mathcal{D}^t_m, \delta_{ak})= \mathbb{E}_{i\neq j} \mathcal{M}_{ak}(u_i,u_j;\Theta_{F}),
\end{equation}
where $u_i$ denotes the pair $(\textbf{x}_i, \textbf{x}_{i,m}^t)$, and we define the function:
\begin{equation}
\begin{split}
    \mathcal{M}_{ak}(u_i,u_j;\Theta_F) = \delta_{ak}(\textbf{x}_i^s, \textbf{x}_j^s;\Theta_F)  &+ \delta_{ak}(\textbf{x}_{i,m}^{t},\textbf{x}_{j,m}^{t}; \Theta_F) \\ &- \delta_{ak}(\textbf{x}_{i}^{s}, \textbf{x}_{j,m}^{t};\Theta_F) - \delta_{ak}(\textbf{x}_{i,m}^{t},\textbf{x}_j^{s};\Theta_F).
\end{split}
\end{equation}
In hypothesis theory, when $\mathcal{P}^s \neq \mathcal{P}^t_m$, a standard central limit theorem holds \cite{serfling2009approximation}:
\begin{equation}
    \sqrt{n}(\hat{d}^2_{u} - d^2)\stackrel{\mathbf{D}}{\rightarrow}\mathcal{N}(0, \sigma^2_{\mathrm{H}_{1}}).
\end{equation}
Here, the $\mathrm{H}_{1}$ denotes the alternative  hypothesis, $\stackrel{\mathbf{D}}{\rightarrow}$ denotes convergence in distribution, $\sigma_{{\rm{H}_1}}^2 = 4(\mathbb{E}_{u_1}[(\mathbb{E}_{u_2}\mathcal{M}(u_1,u_2;\Theta_F))^2] - [\mathbb{E}_{u_1,u_2}(\mathcal{M}(u_1,u_2;\Theta_F))]^2)$, and $\mathcal{N}$ denotes the normal distribution.
We adopt $\hat{\sigma}_{\mathrm{H}_{1},\lambda}$ as a regularized estimator of $\sigma_{\mathrm{H}_1}$:
\begin{equation}
    \frac{4}{n^3}\sum_{i=1}^{n}\left(\sum_{j=1}^{n}\mathcal{M}(u_i,u_j;\Theta_F)\right)^2 - \frac{4}{n^4}\left(\sum_{i = 1}^{n} \sum_{j=1}^{n}\mathcal{M}(u_i,u_j;\Theta_F)\right)^2 + \lambda.
\end{equation}
For a reasonably large sample size $n$, the test power is dominated by:
\begin{equation}
    \hat{J}_\lambda(\mathcal{D}^s, \mathcal{D}^{t}_m; \delta_{ak}) = \frac{\hat{d}^2_{u}(\mathcal{D}^s, \mathcal{D}^{t}_m; \delta_{ak})}{\hat{\sigma}_{\mathrm{H}_{1},\lambda}(\mathcal{D}^s, \mathcal{D}^{t}_m; \delta_{ak})}.
\end{equation}
Thus, we can update the parameters $\Theta_{F}$ of the feature extraction network $F$ by maximizing $\hat{J}_\lambda$:

\begin{equation}
    \Theta_{F} \gets \Theta_{F} + \eta_{ker} \nabla_{\Theta_{F}} \hat{J}_\lambda,
\end{equation}
where $\eta_{ker}$ denotes the learning rate of training the self-adaptive kernel.
A version with more details is given in~\ref{appendix}.

\textbf{Constructing the AK-MMD loss $\mathcal{L}_{ak}$}.
We can obtain a trained self-adaptive kernel $\delta_{ak}$ through the above training strategy.
Then, the MMD based on the artificial-designed kernel is replaced by AK-MMD, to flexibly measure the gap across the source lesion domain and consecutive target lesion domains.
The following formula is the definition of $\mathcal{L}_{ak}$, when the learned knowledge is being transferred to the $m$-th target lesion domain:
\begin{equation}
    \mathcal{L}_{ak} = \hat{d}^2_{u}(\mathcal{D}^s, \mathcal{D}^t_m, \delta_{ak}).
\end{equation}
As shown in Figure~\ref{fig:fig2}, bottleneck $B$ is optimized via AK-MMD loss $\mathcal{L}_{ak}$ in SAP.

\begin{figure}[H]
    \centering
    \includegraphics[width=15cm]{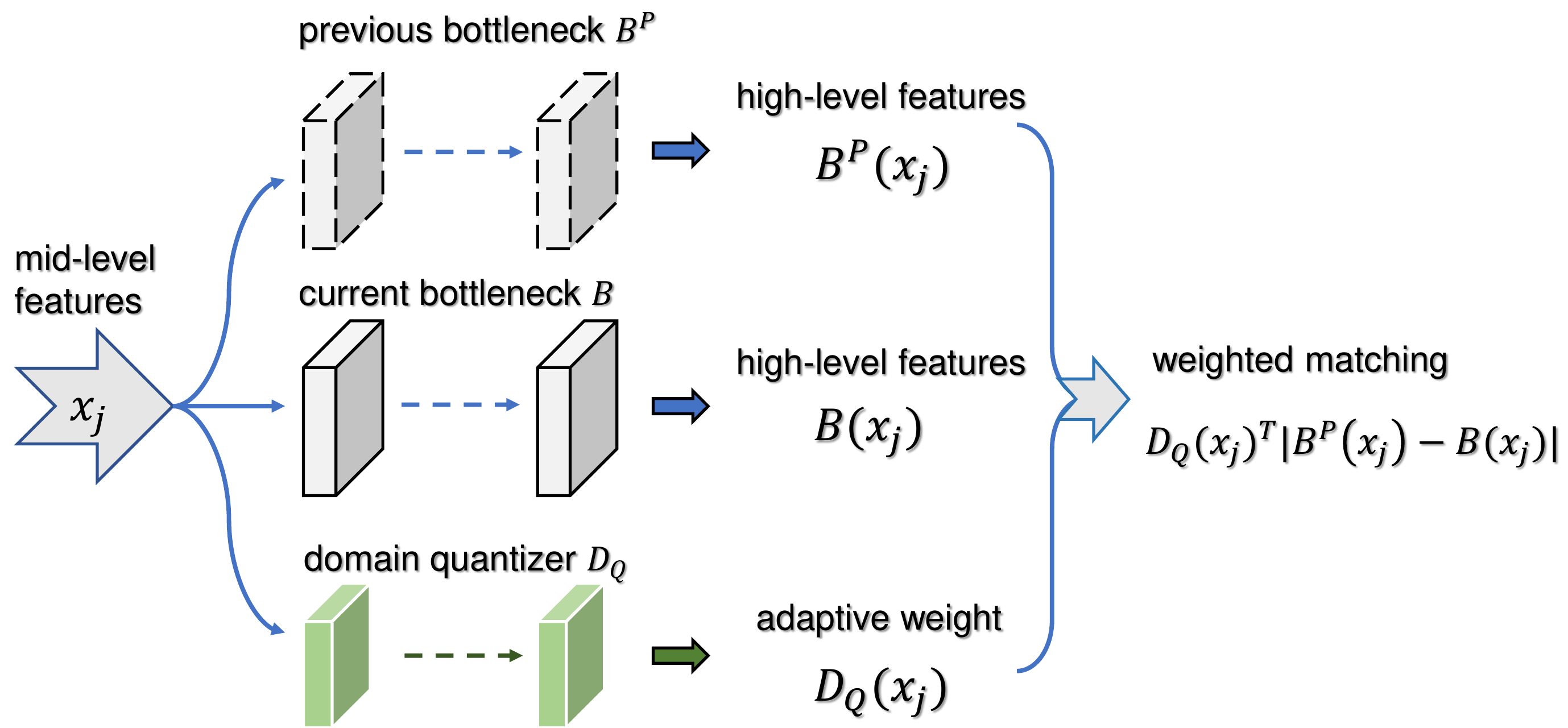}
    \caption{The weighted matching procedure.}
    \label{fig3}
\end{figure}

\subsubsection{Anti-forgetting regularizer $\mathcal{L}_w$ of matching weights}
The current learned knowledge will overwrite the previously learned knowledge as the target lesion domains are continuously arriving, which makes the adapted model degrades dramatically on previously domains (\emph{i.e.}, catastrophic forgetting~\cite{mccloskey1989catastrophic}).
To make predictions for both new target lesion data and past learned target lesion data, we should prevent catastrophic forgetting during continuous adaptation.
However, existing solutions~\cite{baweja2018towards,lee2017overcoming,zenke2017continual,shin2017continual} need to store previous samples or require prior professional knowledge, which doesn’t meet our online update settings.

For a bottleneck that is well optimized on previous target lesion domains, its intermediate features contain the utility knowledge for those domains, and thus reserving this knowledge during the model adaptation could help against forgetting.
Nevertheless, not all knowledge needs to be maintained, 
a more feasible approach is to select the important knowledge to be preserved by matching different weights.
Therefore, we introduce a domain-quantizer $D_Q$ into our model to alleviate catastrophic forgetting, and the weighed matching procedure is depicted in Figure~\ref{fig3}.

\textbf{Retain previous important knowledge}.
We use $B_{{l}}: \mathcal{R}^{d} \rightarrow \mathcal{R}^m$ to denote the $l$-th intermediate layer of the current bottleneck $B$, and $B_{l}^{p}$ represents the $l$-th intermediate layer of the previous bottleneck $B^p$.
For a mini-batch features $x \in \mathcal{R}^{n \times d}$, the weighted feature matching loss of $l$-th layer is $\sum_{j=1}^n w_j^{T}||B_{l}(x_j)-B_{l}^{p}(x_j)||$, where $w_j$ denotes the weight of feature $x_j$.
We design a domain-quantizer $D_Q$ to match appropriate weights for different layers relay on different features $w_{j} = D_{Q_l}(x_j)$, where the $D_{Q_l}$ is the $l$-th intermediate layer of domain-quantizer.
Thus, the total matching weighted feature loss is formulated as follows:
\begin{equation}
    \mathcal{L}_w = \sum_l \sum_j D_{Q_l}(x_j)^{T}||B_{l}(x_j)-B_{l}^{p}(x_j)||.
\end{equation}
As shown in Figure~\ref{fig:fig2}, in the optimize bottleneck $B$ process, we add $\mathcal{L}_{w}$ to retain previous important knowledge, thereby alleviating catastrophic forgetting.

\subsection{Representation Adaptation Phase (RAP)}
In the SAP, we fix the parameters of feature-extractor $E$, and transfer the lesion semantic knowledge to consecutive target domains preliminarily.
Since the semantic features are extracted from representation features, thus well-aligned representation features could make the semantic features adaptation more effective.
Therefore, we optimize the feature-extractor $E$ in the RAP to explore the transferrable representation features across the source and multiple target lesion domains.
Moreover, aiming to reserve more utility knowledge to alleviate forgetting, the domain-quantizer $D_Q$ is also updated in this phase.
However, we can’t calculate the loss on the target lesion data immediately due to the absence of supervised information. 
Inspired by the generalization bound of classic domain adaptation~\cite{long2017deep}, we extend it into CLDA and introduce the upper bound loss $\mathcal{L}_u$ on the consecutive target lesion domains, which is leveraged to optimize feature-extractor $E$ and domain-quantizer $D_Q$ simultaneously.
% Therefore, we extend the generalization bound of classic domain adaptation~\cite{long2017deep} in CLDA and introduce the upper bound loss $\mathcal{L}_u$ on the consecutive target lesion domains, which is used to optimize feature-extractor $E$ and domain-quantizer $D_Q$ simultaneously.

\subsubsection{Upper bound loss $\mathcal{L}_{u}$ of the target lesion domains}
Our purpose is to learn a diagnosis model $R$ that can maintain performance over the consecutive target lesion domains $\mathcal{D}^{t}_{m}(m = 1, 2, \cdots, M)$:

\begin{equation}
    \min_{\Theta_R} \mathcal{L}_{t}
    = \mathbb{E}_{m}\mathbb{E}_{(\textbf{x}_i^t, \textbf{y}_i^t) \in \mathcal{D}^t_m}[-\textbf{y}_i^t\log(R(\textbf{x}_i^t;\Theta_R))],
    \label{eq10}
\end{equation}
where model $R = C \circ B \circ E$, and $\Theta_R$ is the parameters of $R$.
Yet we cannot access the label $\textbf{y}_i^t$ of the $i$-th element sample $\textbf{x}_i^t$ from the target lesion domain, which is necessary for calculating the multiple target domains loss $\mathcal{L}_t$.
Reviewing the generalization bound analysis of discrete domain adaptation~\cite{long2017deep}, 
% for the source domain distribution $\mathcal{P}^s$ and the $m$-th target domain distribution $\mathcal{P}^t_m$, 
the upper bound of the target domain error $\mathcal{E}_{\mathcal{P}^t_m}$ can be represented as follows:

\begin{equation}
    \mathcal{E}_{\mathcal{P}^t_m}(R) \leq \mathcal{E}_{\mathcal{P}^s}(R) + d_{\mathcal{H}\Delta\mathcal{H}}(\mathcal{P}^s,\mathcal{P}^t_m) + \tau_m,
\end{equation}
where $\mathcal{E}_{\mathcal{P}^s}$ represents the error of $R$ on the source data, $d_{\mathcal{H}\Delta\mathcal{H}}(\mathcal{P}^s, \mathcal{P}^t_m)$ is the $\mathcal{H}\Delta\mathcal{H}$-divergence, and $\tau_m = \min_{\Theta_R}[\mathcal{E}_{\mathcal{P}^s}(R)+\mathcal{E}_{\mathcal{P}^t_m}(R)]$.
We further extend this analysis to our setting.

\begin{theorem}
For consecutive target lesion distributions $\mathcal{P}^t_m(m = 1, 2, \cdots, M)$, we assume the $d_{\mathcal{H}\Delta\mathcal{H}}(\mathcal{P}^t_m,\mathcal{P}^t_{m+1}) \leq \alpha|t_{m} - t_{m+1}|$, where $\alpha$ is a constant, and $t_m$ is the $\mathcal{P}^t_m$ arrival time. 
Then for any $R$, with probability at least $1 - \beta$ over the arrival time of target distributions $t_1, t_2, \cdots, t_M$,
\end{theorem}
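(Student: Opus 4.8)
The plan is to propagate the single-step discrete-adaptation bound along the chain of consecutive target domains and then absorb the accumulated inter-arrival gaps into a high-probability term by a concentration inequality over the random arrival times. First I would start from the bound stated immediately above the theorem, $\mathcal{E}_{\mathcal{P}^t_m}(R) \leq \mathcal{E}_{\mathcal{P}^s}(R) + d_{\mathcal{H}\Delta\mathcal{H}}(\mathcal{P}^s,\mathcal{P}^t_m) + \tau_m$, and rewrite the divergence term by telescoping through the sequence: because the $\mathcal{H}\Delta\mathcal{H}$-divergence obeys a triangle inequality, $d_{\mathcal{H}\Delta\mathcal{H}}(\mathcal{P}^s,\mathcal{P}^t_m) \leq d_{\mathcal{H}\Delta\mathcal{H}}(\mathcal{P}^s,\mathcal{P}^t_1) + \sum_{k=1}^{m-1} d_{\mathcal{H}\Delta\mathcal{H}}(\mathcal{P}^t_k,\mathcal{P}^t_{k+1})$. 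The temporal-smoothness hypothesis $d_{\mathcal{H}\Delta\mathcal{H}}(\mathcal{P}^t_k,\mathcal{P}^t_{k+1}) \leq \alpha|t_k - t_{k+1}|$ turns the inner sum into $\alpha\sum_{k=1}^{m-1}|t_k-t_{k+1}|$. Averaging the resulting per-domain inequality over $m=1,\dots,M$ — which is precisely the average target risk $\frac{1}{M}\sum_{m=1}^M\mathcal{E}_{\mathcal{P}^t_m}(R)$ underlying the objective $\mathcal{L}_t$ in Eq.~\eqref{eq10} — gathers the error contributions into $\mathcal{E}_{\mathcal{P}^s}(R) + d_{\mathcal{H}\Delta\mathcal{H}}(\mathcal{P}^s,\mathcal{P}^t_1) + \frac{1}{M}\sum_{m=1}^M\tau_m$ plus a purely time-dependent term $\frac{\alpha}{M}\sum_{m=1}^M\sum_{k=1}^{m-1}|t_k-t_{k+1}| = \frac{\alpha}{M}\sum_{k=1}^{M-1}(M-k)|t_k-t_{k+1}|$, a weighted sum of the consecutive arrival gaps in which all the randomness now lives.

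Next I would control that time-dependent term in high probability. Under the natural model where the arrivals form a renewal process with (bounded, or at worst sub-exponential) i.i.d.\ inter-arrival gaps $\Delta_k := |t_k-t_{k+1}|$ of mean $\mu$, the weighted sum $\sum_{k=1}^{M-1}(M-k)\Delta_k$ is a sum of independent, suitably tailed summands with total squared weight of order $M^3$; Hoeffding's (or Bernstein's) inequality then gives that with probability at least $1-\beta$ it stays below its mean $\mu\sum_{k=1}^{M-1}(M-k)$, which is of order $\mu M^2$, plus a deviation of order $M^{3/2}\sqrt{\log(1/\beta)}$. Dividing by $M$ and multiplying by $\alpha$ converts this into a drift term of order $\alpha\big(\mu M + \sqrt{M\log(1/\beta)}\big)$, so that, with probability at least $1-\beta$ over $t_1,\dots,t_M$, $\frac{1}{M}\sum_{m=1}^M\mathcal{E}_{\mathcal{P}^t_m}(R) \leq \mathcal{E}_{\mathcal{P}^s}(R) + d_{\mathcal{H}\Delta\mathcal{H}}(\mathcal{P}^s,\mathcal{P}^t_1) + O\!\big(\alpha(\mu M + \sqrt{M\log(1/\beta)})\big) + \frac{1}{M}\sum_{m=1}^M\tau_m$, which is the sought extension of the classical two-domain bound to consecutively arriving target domains; a statement holding for every $\mathcal{E}_{\mathcal{P}^t_m}(R)$ simultaneously follows the same way after a union bound over $m$ (or McDiarmid applied to the maximum over the sequence).

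The main obstacle I anticipate is the concentration step rather than the telescoping. If the arrival times are instead modelled as i.i.d.\ draws from a fixed window and then sorted, the gaps become order-statistic increments and are no longer independent, so Hoeffding over independent summands does not apply and one must fall back on McDiarmid's bounded-differences inequality applied to the underlying i.i.d.\ arrival times — noting that perturbing one arrival time alters at most two adjacent gaps — while carefully tracking the weights $M-k$ to land the correct polynomial dependence on $M$ in the deviation term. Pinning down exactly which boundedness or moment condition on the arrival process is needed to keep that deviation finite, and reconciling it with the constant $\alpha$ and the horizon $M$, is the delicate part; everything else reduces to the triangle inequality for $d_{\mathcal{H}\Delta\mathcal{H}}$ and the stated Lipschitz-in-time hypothesis.
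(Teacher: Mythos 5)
Your telescoping step is a detour that loosens the bound in a way that prevents you from recovering the theorem as stated. The claimed display keeps the \emph{averaged direct} divergence $\frac{1}{M}\sum_{m=1}^M d_{\mathcal{H}\Delta\mathcal{H}}(\mathcal{P}^s,\mathcal{P}^t_m)$ as a term; that term falls out immediately by applying the single-domain bound $\mathcal{E}_{\mathcal{P}^t_m}(R)\le \mathcal{E}_{\mathcal{P}^s}(R)+d_{\mathcal{H}\Delta\mathcal{H}}(\mathcal{P}^s,\mathcal{P}^t_m)+\tau_m$ to each $m$ and averaging over $m$, with no triangle inequality or chain decomposition required. By re-expressing $d_{\mathcal{H}\Delta\mathcal{H}}(\mathcal{P}^s,\mathcal{P}^t_m)$ through the chain $\mathcal{P}^s\to\mathcal{P}^t_1\to\cdots\to\mathcal{P}^t_m$ you replace the averaged direct divergence by the potentially much larger $d_{\mathcal{H}\Delta\mathcal{H}}(\mathcal{P}^s,\mathcal{P}^t_1)+\frac{\alpha}{M}\sum_{k=1}^{M-1}(M-k)|t_k-t_{k+1}|$, so even if the concentration step succeeded you would have established a different and strictly weaker inequality than the one in the theorem.

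The second gap is the form of the probabilistic remainder. The theorem claims $O\!\big(\frac{\alpha}{\beta n}\big)$: inverse in $\beta$, which is the fingerprint of a Markov-type argument applied to a non-negative random quantity, and decaying like $1/n$, so the randomness being controlled is tied to a per-domain sample size rather than to the length $M$ of the sequence of arrival times. Your Hoeffding/McDiarmid route produces a deviation of order $\alpha\big(\mu M+\sqrt{M\log(1/\beta)}\big)$, which grows with the horizon $M$ and depends on $\beta$ only through $\sqrt{\log(1/\beta)}$. These two shapes are not interchangeable and cannot be reconciled by tuning constants: to land on $O(\alpha/(\beta n))$ one must exhibit a non-negative random quantity whose \emph{mean} is of order $\alpha/n$ and then apply Markov's inequality, whereas what you bound is a weighted sum of inter-arrival gaps whose mean is of order $\mu M^2$. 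It is also worth noting that the paper never actually proves this theorem---the appendix proves only the uniform-convergence result for the regularized test-power criterion of the self-adaptive kernel---so there is no internal derivation to match against; what can be said is that your plan, as written, does not reduce to the stated inequality.
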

\begin{equation}
    \mathcal{L}_{t} \leq \mathcal{E}_{\mathcal{P}^s}(R)+\frac{1}{M}\sum_{m=1}^{M}[d_{\mathcal{H}\Delta\mathcal{H}}(\mathcal{P}^s,\mathcal{P}^t_m)] + \mathbb{E}_{m}\tau_m + O(\frac{\alpha}{\beta n}).
    \label{eq13}
\end{equation}
Thus, we can obtain the upper bound of Eq.(\ref{eq10}), and update the parameters $\Theta_{R}$ to control the upper bound loss $\mathcal{L}_{u}$. 
To learn the transferable representation tailored to CLDA, we sample $\textbf{x}^t_m$ from $\mathcal{Q}_m$ and use $\max_m d(R(\textbf{x}^t_m;\Theta_R),R(\textbf{x}^t_{m+1};\Theta_R))$ as an approximation of the constant in~Eq.(\ref{eq13}):
\begin{equation}
    \begin{split}
    \mathcal{L}_{u} = \mathbb{E}_{(\textbf{x}_{i}^{s},\textbf{y}_{i}^{s})\in\mathcal{D}^s}[-\textbf{y}_{i}^{s} \log(R(\textbf{x}_{i}^{s};\Theta_{R}))]
    & + \frac{1}{M}\sum_{m=1}^M d(R(\textbf{x}^s;\Theta_{R}),R(\textbf{x}^t_m;\Theta_{R}))\\
    & +\max_{m}d(R(\textbf{x}^t_{m};\Theta_R),R(\textbf{x}^t_{m+1};\Theta_R)).
    \end{split}
\end{equation}
In the RAP, the feature-extractor $E$ and domain-quantizer $D_Q$ are optimized via the upper bound loss $\mathcal{L}_u$, meanwhile the parameters of bottleneck $B$ and classifier $C$ are fixed.

\subsection{Training and Testing}
\textbf{Meta-training phase}. The meta-training phase consists of SAP and RAP.
In the SAP, feature-extractor $E$ extracts mid-level features $\mathrm{F}^s$ and $\mathrm{F}^{t,p}_m$ from source lesion data $\mathcal{D}^s$ and support set $\mathcal{S}_m$, and then they are forwarded into bottleneck $B$ to get high-level features $\mathrm{G}^s$ and $\mathrm{G}^{t,p}_m$ for calculating AK-MMD loss $\mathcal{L}_{ak}$.
To optimize classification loss $\mathcal{L}_{ce}$, the $\mathrm{G}^s$ is passed into classifier $C$.
Additionally, to alleviate catastrophic forgetting, penalty term $\mathcal{L}_\omega$ from the domain-quantizer $D_Q$ is added.
For $m = 0, 1, \cdots, M-1$, the object of training in the SAP is formulated as:
\begin{equation}
    (\Theta_B, \Theta_C)_{m+1} \gets (\Theta_B, \Theta_C)_{m} - \eta_{sap}\nabla_{(\Theta_B,\Theta_C)}(\mathcal{L}_{ce}+\mathcal{L}_{ad}+\lambda \mathcal{L}_\omega),
    \label{eq16}
\end{equation}
where $\eta_{sap}$ is the learning rate in the SAP, and $\lambda$ denotes the hyperparameter of the anti-forgetting regularization.
$(\Theta_B, \Theta_C)_{m}$ are the parameters of bottleneck $B$ and classifier $C$ when adapting to $m$-th target lesion domain.
In the SAP, based on current feature-extractor $E$, the semantic knowledge learned from the source lesion domain is preliminarily adapted to consecutive target lesion domains. 
In the RAP, we optimize feature-extractor $E$ aiming to make the SAP more effective.
Besides, the domain-quantizer $D_Q$ is updated for learning to preserve more critical knowledge.
The $\mathcal{L}_u$ is calculated from the source data $\mathcal{D}^s$ and query set $\mathcal{Q}_m$, and then used to optimize $E$ and $D_Q$:

\begin{equation}
    (\Theta_E, \Theta_{D_Q})\gets (\Theta_E, \Theta_{D_Q}) - \eta_{rap} \nabla_{(\Theta_E, \Theta_{D_Q})}\mathcal{L}_{u},
    \label{eq17}
\end{equation}
where $\eta_{rap}$ is the learning rate in the RAP.
The overall optimization procedure in the meta-training phase is summarized in Algorithm~\ref{alg1}.

\textbf{Meta-testing phase.}
In the meta-testing stage, we will fix the parameters of feature-extractor $E$ and domain-quantizer $D_Q$, and only fine-tune classifier $C$ and bottleneck $B$ a few epochs via small batch data from new target lesion domains.

\begin{algorithm}
    \caption{Optimization Framework of CLKM Model}
    \begin{algorithmic}[1] 
        \Require source lesion distribution $\mathcal{P}^{s}$ and consecutive target lesion distribution $\mathcal{P}^{t}_m$.
        \Ensure learned feature-extractor $E$ and domain-quantizer $D_Q$.
        \For{$t=0$ \textbf{to}  MaxIter}
        \State{Initialize the bottleneck $B$ and  classifier $C$;} 
        \State{Sample source lesion data $\mathcal{D}^{s} = \{\mathbf{x}_{i}^{s},\mathbf{y}_{i}^{s}\}$ from $\mathcal{P}^{s}$};
        \State{Sample support set $\mathcal{S}_{m}$ and query set
        $\mathcal{Q}_{m}$ from $\mathcal{P}^{t}_m$ $(m = 1, 2, \cdots, M)$};
       
        \For{$m=0$ \textbf{to} $M-1$}
        \State{In the SAP, update the $\Theta_C$ and $\Theta_B$ via Eq.(\ref{eq16}});
        \EndFor
        \State{In the RAP, update $\Theta_E$ and $\Theta_{D_Q}$ via Eq.(\ref{eq17});}
        \EndFor
    \end{algorithmic}
    \label{alg1}
\end{algorithm}

\section{Experiments}

In this section, we evaluate the performance of our method based on a skin lesion classification task. 
The details of the architectures and experimental settings are as follows.

\subsection{Datasets}

We adopt six publicly accessible skin lesion datasets:
HAM10000 (HAM)~\cite{tschandl2018ham10000}, MSK~\cite{codella2018skin}, UDA~\cite{codella2018skin}, SONIC (SON)~\cite{codella2018skin}, Derm7pt (D7P)~\cite{kawahara2018seven}, and PH2~\cite{mendoncca2013ph}.
The above datasets contain the following seven
diagnostic categories: melanocytic nevus (NV), melanoma (MEL), basal cell carcinoma (BCC), dermatofibroma (DF), benign keratosis (BKL), vascular lesion (VASC), and actinic keratosis (AKIEC).
We randomly divided each dataset into 50$\%$ training set, 20$\%$ validation set, and 30$\%$ testing set. 
HAM is selected as the only source lesion domain, and 
we remove the lesion data beyond the seven diagnostic categories.
\begin{figure}[H]
    \centering
    \includegraphics[width=7in]{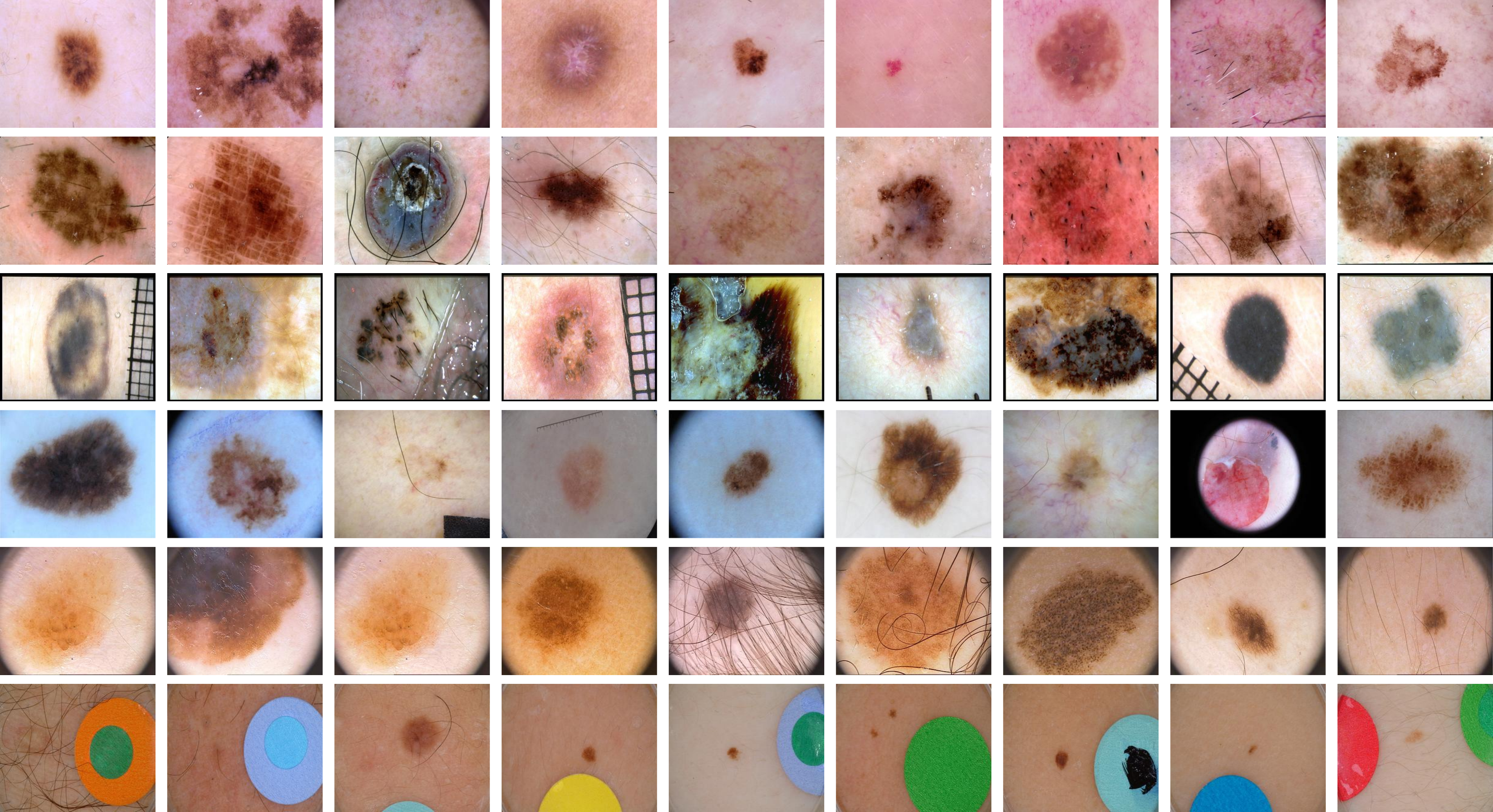}
    \caption{\textbf{
    The images in six accessible medical lesion datasets.
    } 
    The images from the first to the sixth rows are sampled from the HAM, MSK, D7P, UDA, PH2, and SON datasets respectively.}
    \label{fig:fig4}
\end{figure}

\textbf{Diagnostic categories}.
\textbf{(1) NV:} Melanocytic nevus is a benign tumor of melanocytes with many variations. 
In appearance, they are usually symmetrical in the distribution of color and structure, unlike melanomas.
\textbf{(2) MEL:} Melanoma is a malignant tumor that comes from melanocytes and can take on different mutations.
Melanomas can be invasive or noninvasive.
Melanoma in its early stages can be cured with simple surgical removal.
\textbf{(3) BCC:} Basal cell carcinoma is a common variant of epithelial skin cancer that rarely metastases but still requires early treatment. 
It exists in different morphologies, such as flat, nodular, pigmentation, and cystic.
\textbf{(4) DF:} Dermatofibroma is a benign skin lesion that is regarded as benign hyperplasia or minimal trauma to an inflammatory response.
The most common dermatoscopic presentation is reticular lines at the periphery with a central white patch denoting fibrosis.
\textbf{(5) BKL:} Benign keratosis is a common type including seborrheic keratoses and solar lentigo. 
Because of their biological similarities, the three are placed together, although they may look different under a dermoscopy.
\textbf{(6)VASC:} Vascular skin lesions include cherry angiomas, angiokeratomas, pyogenic granulomas, and hemorrhage.
\textbf{(7) AKIEC:} Actinic Keratoses (Solar Keratoses) and Intraepithelial Carcinoma (Bowen’s disease) are common non-invasive squamous cell carcinomas that can be treated locally without surgery.
However, these lesions may progress to invasive squamous cell carcinoma.
Actinic Keratoses are more common on the face, while Intrapithelia Carcinoma is more common to appear on other parts of the body.

% \begin{figure}[H]
\begin{figure}[!t]
    \centering
    \includegraphics[width=7in]{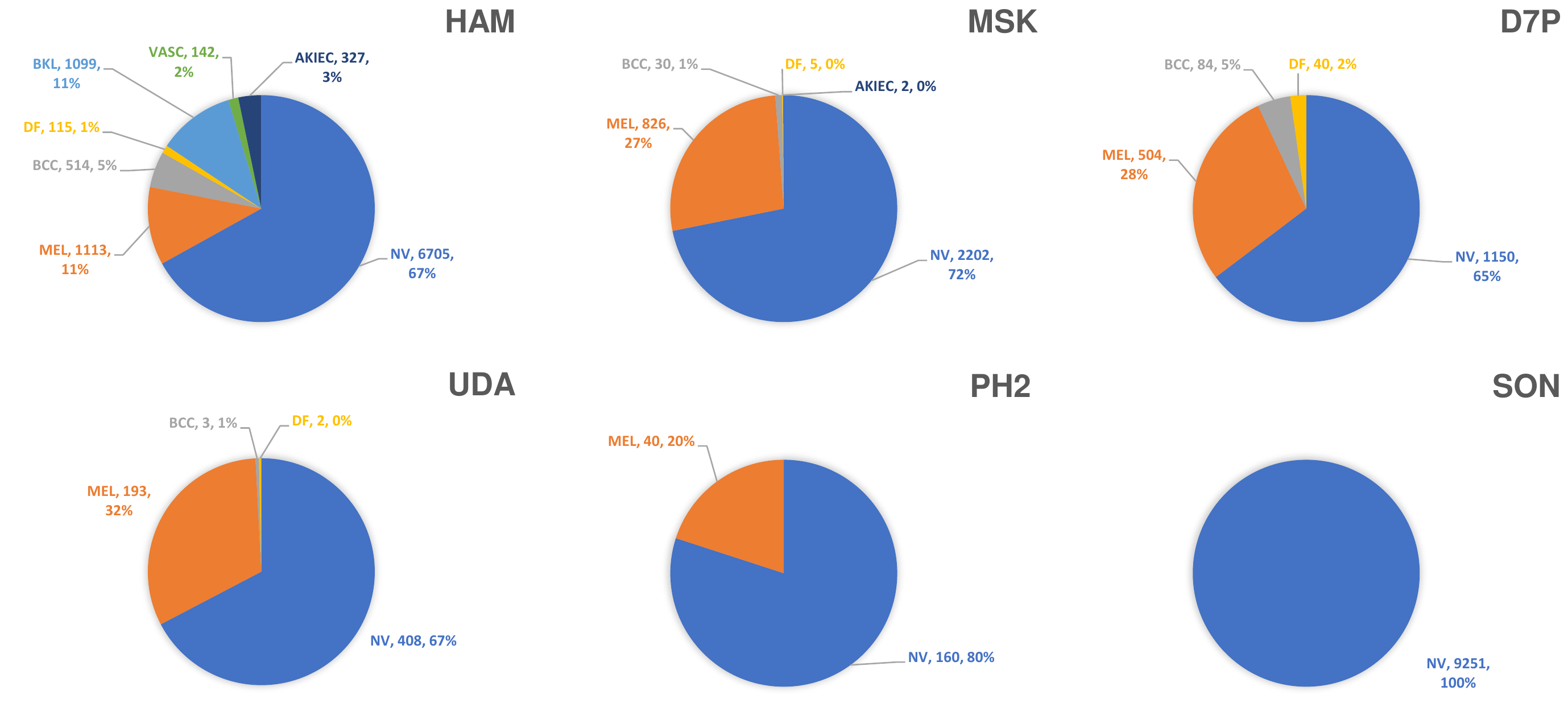}
    \caption{
    \textbf{
    The data distribution in six public skin lesion domain.}
    In each lesion domain, the proportion of different categories is extremely far away.
    Besides, the number of lesion contained in each domain is also different.
    HAM contains all seven categories of skin lesion, but only NV in SON.}
    \label{fig:fig5}
\end{figure}

\subsection{Implementation}

The original skin lesion images are shown in Figure~\ref{fig:fig4}. 
Except for the domain shift across different domains that should be resolved, we still need to address the unbalanced sample problem, which requires our model to capture the light information as much as possible.

\begin{table}[H]
    \setlength{\abovecaptionskip}{0cm}
    \setlength{\belowcaptionskip}{0.3cm}
    \vspace{-0.5cm}
    
    \captionsetup{font=normal}
    \caption{Classification Accuracy(\%) on the skin lesion task. }

    \footnotesize
    \centering

    \setlength{\tabcolsep}{1.5mm}
    \begin{tabular}
    { |l |c c c c c c >{\columncolor{lightgray!30}}c |>{\columncolor{lightgray!30}}c|}
    \hline
    
    \hline
        Methods & UDA, PH2
          & UDA, D7P & MSK, D7P & MSK, PH2 & SON, PH2 &SON, D7P & Avg. & $\Delta$ \\
    \hline
    SourceOnly & 60.4 & 66.8 & 63.6
            & 65.0 & 67.4 & 57.1 & 63.4 & $\Uparrow$ 10.2\\
    JAN~\cite{long2017deep} & 59.8 & 69.1 &
            68.5 & 69.2 & 70.1 & 71.5 & 68.0 & $\Uparrow$ 5.6\\
    DAN~\cite{ganin2016domain} & 69.9 & 62.7 & 65.8
        & 64.9 & 72.5 & 66.9 & 67.1 & $\Uparrow$ 6.5\\
    CDA(mer) \cite{long2018conditional} &70.5 &65.3 &65.4 &65.4 & 67.2& 61.5& 65.9 & $\Uparrow$ 7.7\\
    DAN(ewc) \cite{kirkpatrick2017overcoming} & 71.2 & 66.4 & 66.6 & 66.7 & 66.5 & 59.8 &  66.2 &$\Uparrow$ 7.4\\
    CaCo~\cite{huang2022category}&70.3 &67.5 &68.2 & 67.5&67.8 &64.1 &67.6 & $\Uparrow$ 6.0 \\
    SSRT~\cite{sun2022safe}&70.4 &71.1 &69.6 & 71.3& 69.1& 68.5& 70.0 & $\Uparrow$ 3.6\\
    RIPU~\cite{xie2022towards}&71.8 &72.5 &71.2 &70.5 &71.3 &70.7 & 71.3 & $\Uparrow$ 2.3\\
    \hline
    \hline
    CLKM(fe) & 71.3 & 65.6 & 71.5 & 70.7 & 64.7 & 65.3 & 68.2 &$\Uparrow$ 5.4\\
    CLKM(dq) & 71.3 & 66.5 & 71.1 & 71.6 & 71.6 & 71.6 & 70.6 &$\Uparrow$ 3.0\\
    CLKM(f\&d) & 71.9 & 71.9 & 72.2 & 72.4 & 74.3 & 70.1 & 72.1 &$\Uparrow$ 1.5\\
    CLKM(full) & \textbf{72.3} & \textbf{73.4} & 
    \textbf{74.0} & \textbf{73.4} & \textbf{74.4} & \textbf{74.3} & \textbf{73.6} & -\\
    \hline
    
    \hline
    \end{tabular}
    \label{tab1}
\end{table}
\textbf{Setting.} In the meta-training phase, 
HAM is selected as the only source lesion domain.
For the remaining data, we first select three domains as meta-training datasets and the other two as meta-testing datasets.
It is noted that, in the meta-testing phase, only a small batch sample from the meta-testing datasets is used to optimize the bottleneck and classifier.
Besides, to perform a comprehensive analysis, we also consider a scenario in which two domains are meta-training datasets and the other three domains are meta-testing datasets.
To simulate realistic training approaches, we augmented the data by resizing all images to 
$650\times650$, cropping the center $300\times300$, randomly resizing and cropping $224\times224$, and randomly horizontal flipping. 
Finally, we use the all test sets from the six skin lesion domains to evaluate each method.
Our method is implemented based on PyTorch, and we adopt ResNet18 as the backbone for CLKM as well as the other baselines.
For the bottleneck, we use a three-layer fully connected network and adopt ReLU as its activations.
We use stochastic gradient descent with 0.9 momentum and $5\times10^{-4}$ weight decay.
The initial learning rate of the SAP and the RAP will be adjusted according to the specific meta-training datasets, generally around at $1\times 10^{-6}$. 
For the self-adaptive kernel, we use a five-layer fully connected network and adopt softplus as the activation function.

\textbf{Baselines.}
\textbf{(1) SourceOnly:} 
The diagnosis model is only trained on the source lesion domain without adaptation.
\textbf{(2) DAN~\cite{ganin2016domain} and JAN~\cite{long2017deep}:} 
For these two classic discrete domain adaptations, we apply them to CLDA as follows.
In the meta-training phase, we train the model with labeled source lesion data and adapt it to accessible target domains. 
In meta-testing, the two methods are adapted to the continuous new target lesion data and then test their performance.
In meta-testing, we adapt both methods to the continuous new target data and test their performance.
\textbf{(3) DAN+EWC~\cite{kirkpatrick2017overcoming}:} EWC is a common method to overcome catastrophic forgetting.
We separate the source lesion domain into two task so that get EWC regularization, and then each task both adapt targets with DAN in the meta-training phase.
\textbf{(4) CaCo~\cite{huang2022category}:} CaCo is a novel technique that tackles UDA from the perspective of instance contrastive learning. 
We apply it in CLDA via designing a category-aware and domain-mixed dictionary, then construct the category contrast loss to pull closer to the same lesion category and push apart different lesion categories.
\textbf{(5) SSRT~\cite{sun2022safe}:} SSRT is a powerful discrete domain adaptation baseline, we transfer the knowledge learned from the source lesion domain to the accessible target lesion domains in meta-training. 
In meta-testing, we adapt it to the new continuous target lesion domains and then test its performance. 
\textbf{(6) RIPU~\cite{xie2022towards}:} As an active learning method for domain adaptation in the semantic segmentation task, we modify the pixel-level cross-entropy loss function in RIPU and then apply it to the CLDA. 

\textbf{Results.}
Table~\ref{tab1} shows the overall performance of each method.
The HAM is selected as the only source lesion domain, and randomly select two lesion domains as the meta-testing datasets to get an overall evaluation of each method.
The distribution of these datasets varies greatly, as shown in Figure~\ref{fig:fig5}, so we further provided fine-grained evaluation results in Table~\ref{tab2}-Table~\ref{tab7}.
As we can see, our method achieve the optimal performance.
Generally, the domain-quantizer help to retain previously learned knowledge, and thus the accuracy of CLKM(f\&d) are 3.9\% and 1.5\% higher than CLKM(fe) and CLKM(dq) respectively.
The self-adaptive learned kernel is further help to measure the discrepancy across different domains, and the accuracy of CLKM(full) is 2.3\% higher than state-of-the-art RIPU.

\begin{table}[!t]
    \setlength{\abovecaptionskip}{0cm}
    \setlength{\belowcaptionskip}{-0.5cm}
    \captionsetup{font=normal}
    \caption{Classification Accuracy(\%) when meta-testing datasets are: \textbf{SON},\textbf{PH2}}
    \small
    \centering

    \setlength{\tabcolsep}{1.5mm}
    \begin{tabular}
    { |l| c c c c c c >{\columncolor{lightgray!30}}c |>{\columncolor{lightgray!30}}c|}
    \hline
    
    \hline
        Methods & HAM
          & D7P & SON & MSK & UDA & PH2 & Avg. & $\Delta$ \\
    \hline
    SourceOnly & \textbf{72.3} & 51.7 & 79.5
            & 60.1 & 59.3 & 81.7 & 67.4 & $\Uparrow  7.0$ \\
    JAN \cite{long2017deep} & 64.1 & 55.7 & 93.2
             & 64.1 & 59.9 & 83.3 & 70.1 & $\Uparrow 4.3$ \\
    DAN~\cite{ganin2016domain} & 68.5 & 57.5 & 96.0
            & 67.3 & 62.1 & 83.3 & 72.5 & $\Uparrow 1.9$ \\
    CDA(mer) \cite{long2018conditional} &67.2 & 56.5& 85.4& 59.4&60.1 &79.1 & 68.0 & $\Uparrow$ 6.4\\
    DAN(ewc) \cite{kirkpatrick2017overcoming} & 68.9 & 53.2 & 82.3 & 58.8 & 57.7 & 78.3 & 66.5 &$\Uparrow 7.9$ \\
    CaCo~\cite{huang2022category}& 69.2&54.7 &83.6 &60.1 &59.2 &79.4 & 67.7 & $\Uparrow$ 6.7\\
    SSRT~\cite{sun2022safe}& 70.5&55.2 &80.7 &62.3 &58.4 &80.3 & 67.9 & $\Uparrow$ 6.5\\
    RIPU~\cite{xie2022towards}& 68.3& 58.2& 84.9&63.7 & 60.1& 82.6& 69.6 & $\Uparrow$ 4.8\\
    \hline
    \hline
    CLKM(fe) & 67.0 & 54.8 & 87.6 & 55.3 & 50.0 & 73.3 & 64.7 &$\Uparrow 9.7$ \\
    CLKM(dq) & 68.1 & 59.7 & 96.5 & 63.2 & 58.8 & 83.3 & 71.6 &$\Uparrow 2.8$ \\
    CLKM(f\&d) & 68.3 & 58.1 & 94.4 & 66.9 & 59.3 & 83.3 & 71.5 &$\Uparrow 2.9$ \\
    CLKM(full) & 67.1& \textbf{61.3} & \textbf{98.9} & \textbf{69.4} & \textbf{63.2} & \textbf{86.7} & \textbf{74.4} & -\\
    \hline
    
    \hline
    \end{tabular}
    \label{tab2}
\end{table}

\begin{table}[H]
    \setlength{\abovecaptionskip}{0cm}
    \setlength{\belowcaptionskip}{-0.5cm}
    \captionsetup{font=normal}
    \caption{Classification Accuracy(\%) when meta-testing datasets are: \textbf{D7P},\textbf{SON}}
    \small
    \centering

    \setlength{\tabcolsep}{1.5mm}
    \begin{tabular}
    { |l| c c c c c c >{\columncolor{lightgray!30}}c |>{\columncolor{lightgray!30}}c|}
    \hline
    
    \hline
        Methods & HAM
          & D7P & SON & MSK & UDA & PH2 & Avg. & $\Delta$ \\
    \hline
    SourceOnly & 65.9 & 43.4 & 63.2
            & 47.6 & 47.3 & 75.0 & 57.1 & $\Uparrow  17.0$ \\
    JAN \cite{long2017deep} & 66.4 & 60.1 & 93.2
             & 64.5 & 61.5 & 83.3 & 71.5 & $\Uparrow 2.6$ \\
    DAN \cite{ganin2016domain} & \textbf{68.9} & 54.1 & 70.2 & 63.5 & \textbf{64.8} & 80.0 & 66.9 & $\Uparrow 7.2$ \\
    CDA(mer) \cite{long2018conditional} &67.2 &50.3 &69.2 &54.6 &55.7 & 75.8& 62.1 & $\Uparrow$ 12.0\\
    DAN(ewc) \cite{kirkpatrick2017overcoming} & 66.0 & 48.1 & 68.7
            & 51.7 & 51.1 & 73.3 & 59.8 &$\Uparrow 14.3$ \\
    CaCo~\cite{huang2022category}&66.9 &49.7 &71.3 &53.4 &52.8 & 74.2& 61.4 & $\Uparrow$ 12.7 \\
    SSRT~\cite{sun2022safe}&66.3 & 50.8& 73.2&55.9 &56.2 &76.1 & 63.1 & $\Uparrow$ 11.0\\
    RIPU~\cite{xie2022towards}& 67.1&52.6 &75.4 & 58.2& 59.1& 76.9 & 64.9 & $\Uparrow$ 9.2\\
    \hline
    \hline
    CLKM(fe) & 68.3 & 60.1 & 97.4 & 64.8 & 61.0 & 80.0 & 71.9 &$\Uparrow 2.2$ \\
    CLKM(dq) & 68.8 & \textbf{61.5} & 92.3 & 64.2 & 59.3 & 83.3 & 71.6 &$\Uparrow 2.5$ \\
    CLKM(f\&d) & 68.2 & 61.3 & 99.2 & 68.2 & 61.0 & 81.7 & 73.2 &$\Uparrow 0.9$ \\
    CLKM(full) & 67.5 & 61.3 & \textbf{99.8} & \textbf{68.6} & 62.6 & \textbf{85.0} & \textbf{74.1} & -\\
    \hline
    
    \hline
    \end{tabular}
    \label{tab3}
\end{table}

\begin{table}[H]
    \setlength{\abovecaptionskip}{0cm}
    \setlength{\belowcaptionskip}{-0.5cm}
    \captionsetup{font=normal}
    \caption{Classification Accuracy(\%) when meta-testing datasets are: \textbf{D7P},\textbf{MSK}}
    \small
    \centering

    \setlength{\tabcolsep}{1.5mm}
    \begin{tabular}
    { |l| c c c c c c >{\columncolor{lightgray!30}}c |>{\columncolor{lightgray!30}}c|}
    \hline
    
    \hline
        Methods & HAM
          & D7P & SON & MSK & UDA & PH2 & Avg. & $\Delta$ \\
    \hline
    SourceOnly & 70.3 & 47.0 & 76.8
            & 54.5 & 56.0 & 76.7 & 63.6 & $\Uparrow 10.4 $ \\
    JAN \cite{long2017deep} & 68.5 & 54.6 & 
            88.7 & 60.1 & 57.1 & 81.7 & 68.5 & $\Uparrow 5.5 $ \\
    DAN~\cite{ganin2016domain} & \textbf{74.7} & 45.7 & 87.0
            & 55.0 & 52.2 & 80.0 & 65.8 & $\Uparrow 8.2$ \\
    CDA(mer) \cite{long2018conditional} &72.3 &46.5 &87.6 &57.2 &53.8 & 82.7& 66.7& $\Uparrow$ 7.3\\
    DAN(ewc) \cite{kirkpatrick2017overcoming} & 70.6 & 47.0 & 86.3 & 55.8 & 54.9 & 85.0 & 66.6 &$\Uparrow 7.4$ \\
    CaCo~\cite{huang2022category}&71.6 & 49.8& 88.3& 57.2& 57.4& 83.8& 68.0 & $\Uparrow$ 6.0\\
    SSRT~\cite{sun2022safe}&69.3 & 51.6&90.3 & 59.4& 58.1& 83.1&68.6 & $\Uparrow$ 5.4 \\
    RIPU~\cite{xie2022towards}&70.4 & 53.5& 92.6&62.5 & 60.4& 82.7& 70.4& $\Uparrow$ 3.6 \\

    \hline
    \hline
    CLKM(fe) & 67.4 & 60.3 & 97.5 & 65.1 & 58.8 & 80.0 & 71.5 &$\Uparrow 2.5$ \\
    CLKM(dq) & 68.4 & 60.1 & 96.0 & 61.9 & 58.8 & 81.7 & 71.1 &$\Uparrow 2.9$ \\
    CLKM(f\&d) & 67.7 & 59.2 & 98.7 & 64.9 & 59.3 & 83.3 & 72.2 &$\Uparrow 1.8$ \\
    CLKM(full) & 67.8 & \textbf{60.4} & \textbf{99.6} & \textbf{67.3} & \textbf{62.1} & \textbf{86.7} & \textbf{74.0} & -\\
    \hline
    
    \hline
    \end{tabular}
    \label{tab4}
\end{table}

\begin{table}[H]
    \setlength{\abovecaptionskip}{0cm}
    \setlength{\belowcaptionskip}{-0.5cm}
    \captionsetup{font=normal}
    \caption{Classification Accuracy(\%) when meta-testing datasets are: \textbf{D7P},\textbf{UDA}}
    \small
    \centering

    \setlength{\tabcolsep}{1.5mm}
    \begin{tabular}
    { |l| c c c c c c >{\columncolor{lightgray!30}}c |>{\columncolor{lightgray!30}}c|}
    \hline
    
    \hline
        Methods & HAM
          & D7P & SON & MSK & UDA & PH2 & Avg. & $\Delta$ \\
    \hline
    SourceOnly & 70.0 & 53.2 & 76.6 
            & 61.7 & 57.7 & 81.7 & 66.8 & $\Uparrow  6.6$ \\
    JAN~\cite{long2017deep} & 69.9 & 55.5 & \textbf{97.7}
             & 58.3 & 54.9 & 78.3 & 69.1 & $\Uparrow 4.3$ \\
    DAN~\cite{ganin2016domain} & 70.5 & 41.7 & 93.1
        & 47.3 & 53.3 & 70.0 & 62.7 & $\Uparrow 10.7$ \\
    CDA(mer) \cite{long2018conditional} &70.8 & 45.3& 88.2&49.7 &52.8 &73.5 & 63.4 & $\Uparrow$ 10.0 \\
    DAN(ewc) \cite{kirkpatrick2017overcoming} & 71.2 & 49.2 & 86.1
            & 55.0 & 54.9 & 81.7 & 66.4 &$\Uparrow 7.0$ \\
    CaCo~\cite{huang2022category}& 71.6& 51.4&88.7 &57.2 &56.5 &78.6 & 67.3 & $\Uparrow$ 6.1 \\
    SSRT~\cite{sun2022safe}&72.0 &52.3 &90.1 &59.0& 58.3& 79.2&68.5 & $\Uparrow$ 4.9\\
    RIPU~\cite{xie2022towards}&70.4 &53.9 &89.2 &61.7 &60.4 &82.5 & 69.7& $\Uparrow$ 3.7\\
    \hline
    \hline
    CLKM(fe) & \textbf{72.6} & 54.4 & 77.7 & 60.1 & 56.0 & 78.3 & 66.5 &$\Uparrow 6.9$ \\
    CLKM(dq) & 66.5 & 56.1 & 88.8 & 52.6 & 49.5 & 80.0 & 65.6 &$\Uparrow 7.8$ \\
    CLKM(f\&d) & 71.0 & 56.6 & 82.1 & 64.7 & 59.3 & 80.0 & 69.0 &$\Uparrow$ 4.4\\
    CLKM(full) & 67.9 & \textbf{60.6} & 93.7 & \textbf{67.8} & \textbf{62.1} & \textbf{88.3} & \textbf{73.4} & -\\
    \hline
    
    \hline
    \end{tabular}
    \label{tab5}
\end{table}

\begin{table}[H]
    \setlength{\abovecaptionskip}{0cm}
    \setlength{\belowcaptionskip}{-0.5cm}
    \captionsetup{font=normal}
    \caption{Classification Accuracy(\%) when meta-testing datasets are: \textbf{MSK},\textbf{PH2}}
    \small
    \centering

    \setlength{\tabcolsep}{1.5mm}
    \begin{tabular}
    { |l| c c c c c c >{\columncolor{lightgray!30}}c |>{\columncolor{lightgray!30}}c|}
    \hline
    
    \hline
        Methods & HAM
          & D7P & SON & MSK & UDA & PH2 & Avg. & $\Delta$ \\
    \hline
    SourceOnly & 68.7 & 50.5 & 77.9
            & 56.6 & 56.0 & 80.0 & 64.9 & $\Uparrow 8.5$ \\
    JAN~\cite{long2017deep} & 68.5 & 55.9 & 89.3
             & 63.9 & 57.7 & 80.0 & 69.2 & $\Uparrow 4.2$ \\
    DAN~\cite{ganin2016domain} & \textbf{72.9} & 45.0 & 91.3
            & 47.9 & 55.5 & 76.7 & 64.9 & $\Uparrow 8.5$ \\
    CDA(mer) \cite{long2018conditional} &71.3 &46.8 & 95.2& 50.4&55.8 &77.4 & 66.2 & $\Uparrow$ 7.2\\
    DAN(ewc) \cite{kirkpatrick2017overcoming} & 69.4 & 47.4 & \textbf{99.9} & 55.2 & 54.9 & 73.3 & 66.7 &$\Uparrow 6.7$ \\
    CaCo~\cite{huang2022category}& 70.3&50.1 &96.5 &57.9 &56.8 &76.4 & 68.0 & $\Uparrow$ 5.4\\
    SSRT~\cite{sun2022safe}& 68.5&52.6 &97.1 &58.5 &57.3 &77.9 & 68.7& $\Uparrow$ 4.7\\
    RIPU~\cite{xie2022towards}& 69.7&54.2 &98.2 &61.4 & 56.2 & 81.2& 70.2& $\Uparrow$ 3.2\\

    \hline
    \hline
    CLKM(fe) & 68.6 & 59.4 & 96.3 & \textbf{69.1} & 56.6 & 80.0 & 71.7 &$\Uparrow 1.7$ \\
    CLKM(dq) & 63.6 & \textbf{61.5} & 86.9 & 59.5 & \textbf{61.0} & \textbf{91.7} & 70.7 &$\Uparrow 2.7$ \\
    CLKM(f\&d) & 68.1 & 60.6 & 99.0 & 65.8 & 59.3 & 81.7 & 72.4 &$\Uparrow 1.0$ \\
    CLKM(full) & 67.7 & 61.3 & 99.2& 66.2 & 59.3 & 86.7 & \textbf{73.4} & -\\
    \hline
    
    \hline
    \end{tabular}
    \label{tab6}
\end{table}

\begin{table}[H]
    \setlength{\abovecaptionskip}{0cm}
    \setlength{\belowcaptionskip}{-0.5cm}
    \captionsetup{font=normal}
    \caption{Classification Accuracy(\%) when meta-testing datasets are: \textbf{UDA},\textbf{PH2}}
    \small
    \centering

    \setlength{\tabcolsep}{1.5mm}
    \begin{tabular}
    { |l| c c c c c c >{\columncolor{lightgray!30}}c |>{\columncolor{lightgray!30}}c|}
    \hline
    
    \hline
        Methods & HAM
          & D7P & SON & MSK & UDA & PH2 & Avg. & $\Delta$ \\
    \hline
    SourceOnly & 68.4 & 47.2 & 61.0
            & 52.0 & 53.9 & 80.0 & 60.4 & $\Uparrow 11.9$ \\
    JAN \cite{long2017deep} & 65.6 & 46.6 &
            81.5 & 48.8 & 49.5 & 66.7 & 59.8 & $\Uparrow 12.5$ \\
    DAN~\cite{ganin2016domain} & \textbf{73.8} & 52.6 & 93.3& 61.2 & 56.6 & 81.7 & 69.9 & $\Uparrow 2.4$ \\
    CDA(mer) \cite{long2018conditional} &72.2 &54.1 &95.8 &59.5 &58.3 &80.4 & 70.1& $\Uparrow$ 2.2\\
    DAN(ewc) \cite{kirkpatrick2017overcoming} & 71.1 & 54.8 & \textbf{99.0} & 58.2 & 61.0 & 83.3 &  71.2 &$\Uparrow 1.1$ \\
    CaCo~\cite{huang2022category}&70.5 & 57.6& 96.2& 63.4& 59.7 & 82.6 & 71.7 & $\Uparrow$ 0.6\\
    SSRT~\cite{sun2022safe}&72.8 &58.1 &90.4 & 62.9& 58.3& 81.7& 70.7& $\Uparrow$ 1.6\\
    RIPU~\cite{xie2022towards}& 69.4& 56.8& 89.2& 63.6&59.1 & 82.3& 70.1 & $\Uparrow$ 2.2\\
    \hline
    \hline
    CLKM(fe) & 70.2 & 55.9 & 94.0 & 67.0 & 59.3 & 81.7 & 71.3 &$\Uparrow 1.0$ \\
    CLKM(dq) & 67.5 & 60.1 & 94.5 & 64.0 & 59.9 & 81.7 & 71.3 &$\Uparrow 1.0$ \\
    CLKM(f\&d) & 68.7 & 58.6 & 84.0 & \textbf{68.3} & 61.5 & \textbf{90.0} & 71.9 &$\Uparrow 0.4$ \\
    CLKM(full) & 67.3 & \textbf{60.3} & 
    92.5 & 67.4 & \textbf{61.5} & 85.0 & \textbf{72.3} & -\\
    \hline
    
    \hline
    \end{tabular}
    \label{tab7}
\end{table}

\subsection{Analysis}
\textbf{Data-distribution.} 
In Figure~\ref{fig:fig5}, we have counted the proportions of the seven diagnostic categories in the six skin lesion domains.
As we can see, the number of different lesions is extremely unbalanced, which makes the model to classify medical images correctly more difficult.
Even so, CLKM still possesses effective results.
On the whole, the CLKM(f\&d) performed better than other classical methods, and the addition of a self-adaptive kernel further improved the performance of CLKM.
Further, we analyze the experimental results from the perspective of data distribution.

\textbf{(a) Meta-testing datasets: SON, PH2.}
When we select SONIC and PH2 as the meta-test sets, the NV is the majority in the test set, and there is only a small amount of MEL.
Comparing the other transfer methods, including SourceOnly, 
the CLKM with only feature-extractor (\emph{i.e.}, CLKM(fe)) isn't effective.
This is because the gap across training domains and testing domains is not that large, and even fine-tuning the model's classifier can achieve a not bad performance.
The result of SourceOnly illustrates this conclusion.
When we add a domain-quantizer to the model (\emph{i.e.}, CLKM(f\&d)), the performance of CLKM immediately improved.

\textbf{(b) Meta-testing datasets: D7P, SON.}
When we select D7P and SONIC as the meta-testing datasets, the performance of the regularized constrained DAN (\emph{i.e.}, DAN(ewc)) is inferior to that of normal DAN.
This degradation is also present when we select SON and PH2 as the meta-testing datasets.
In contrast, adding a domain-quantizer to CLKM always improves the model performance.

\begin{figure}[H]
\centering
\subfigcapskip = -13pt
\subfigure[meta-testing: UDA, PH2]{
\begin{minipage}[t]{0.32\linewidth}
\centering
\includegraphics[width=2.35in]{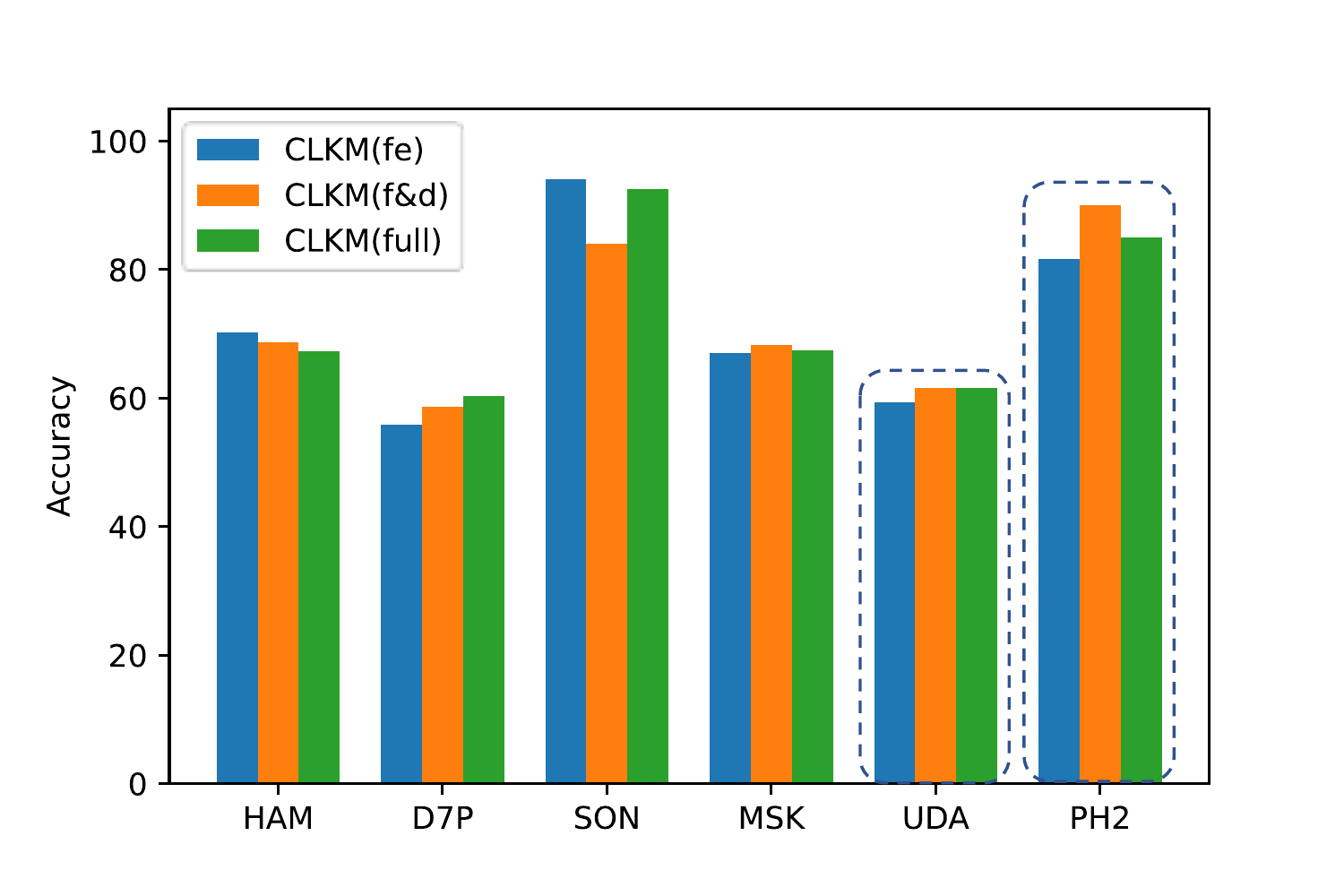}
\end{minipage}
}%
\subfigure[meta-testing: MSK, PH2]{
\begin{minipage}[t]{0.32\linewidth}
\centering
\includegraphics[width=2.35in]{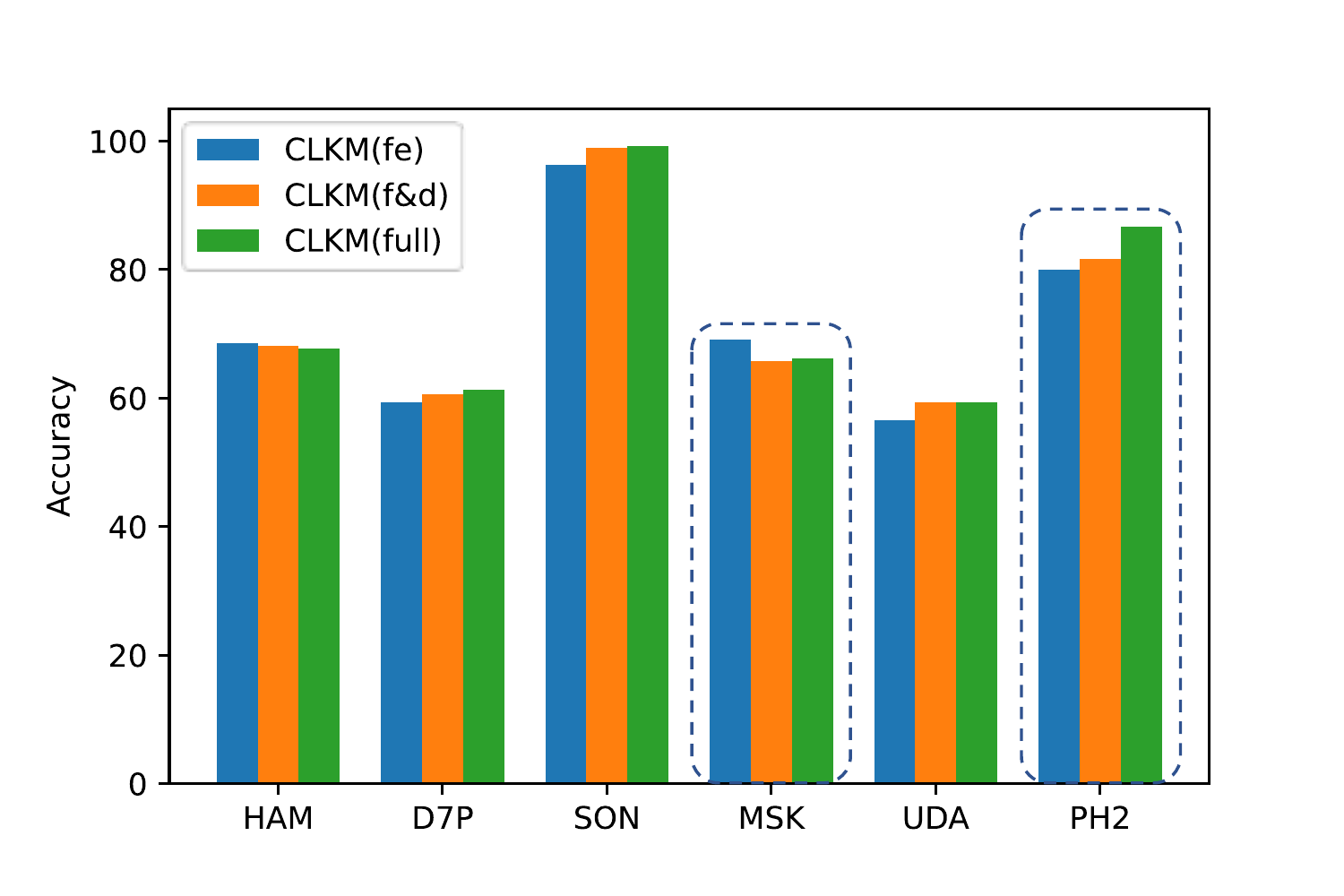}
\end{minipage}
}%
\subfigure[meta-testing: SON, PH2]{
\begin{minipage}[t]{0.32\linewidth}
\centering
\includegraphics[width=2.35in]{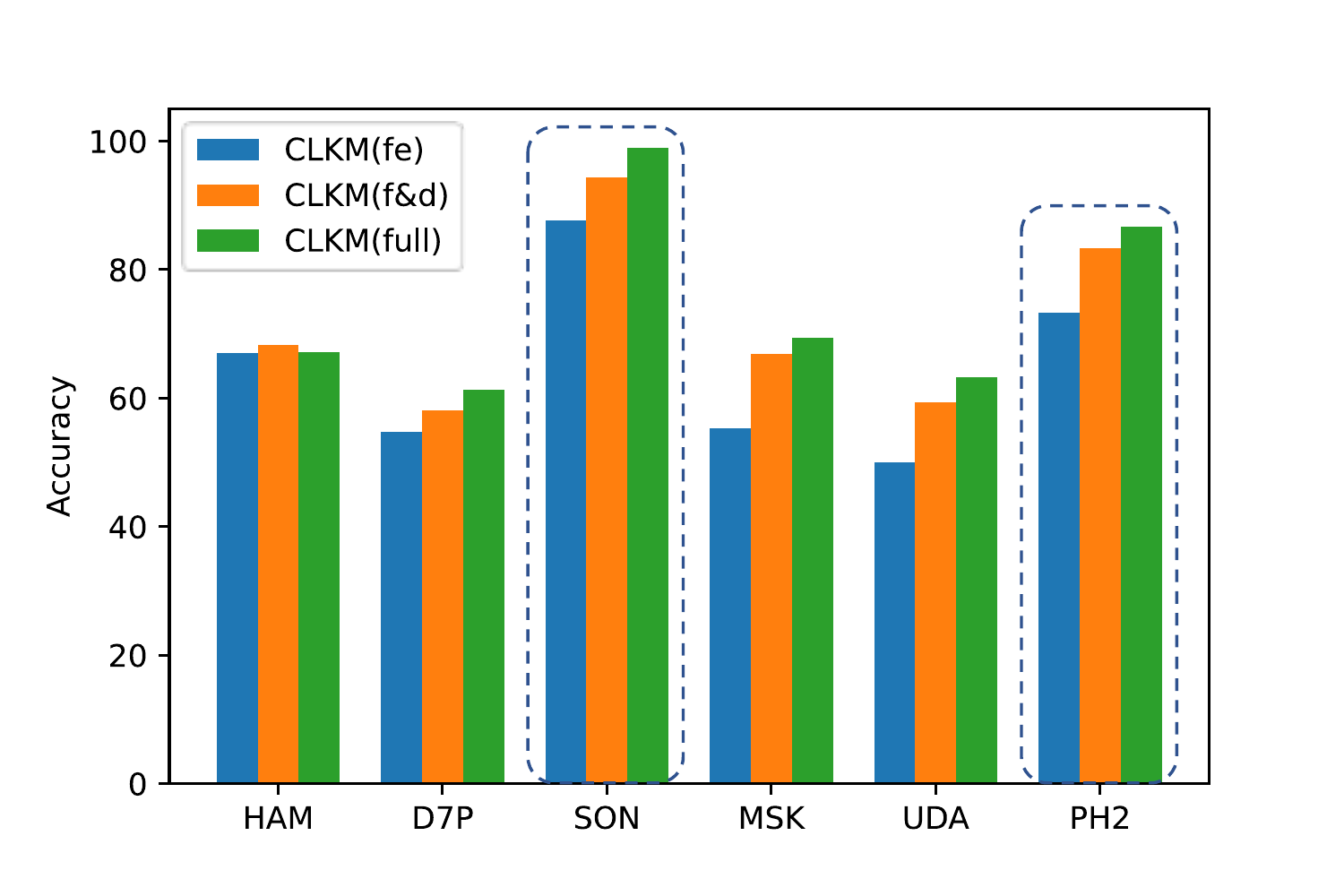}
\end{minipage}
}%
\vspace{-0.4cm}
\subfigure[meta-testing: D7P, SON]{
\begin{minipage}[t]{0.32\linewidth}
\centering
\includegraphics[width=2.35in]{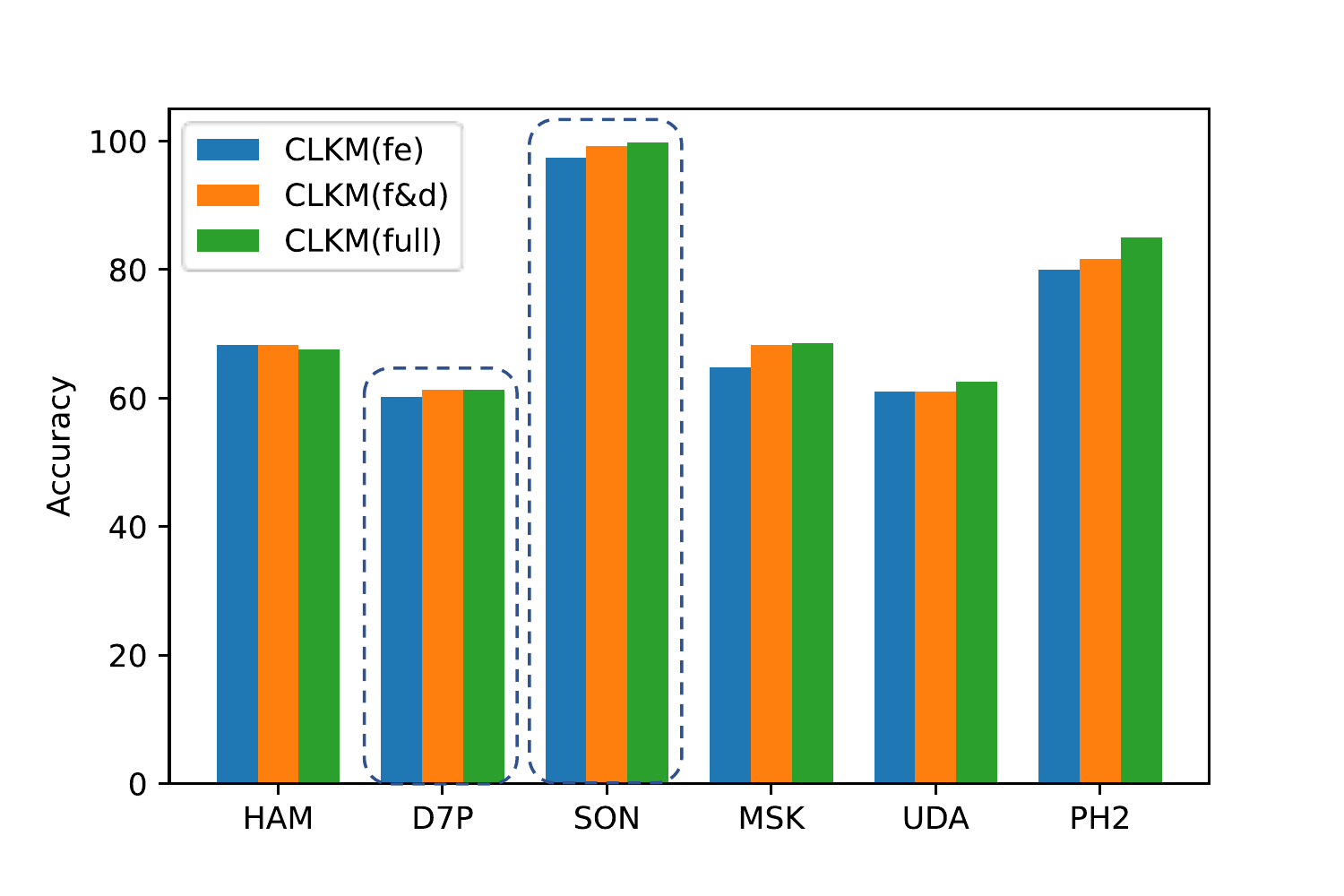}
\end{minipage}
}%
\subfigure[meta-testing: D7P, UDA]{
\begin{minipage}[t]{0.32\linewidth}
\centering
\includegraphics[width=2.35in]{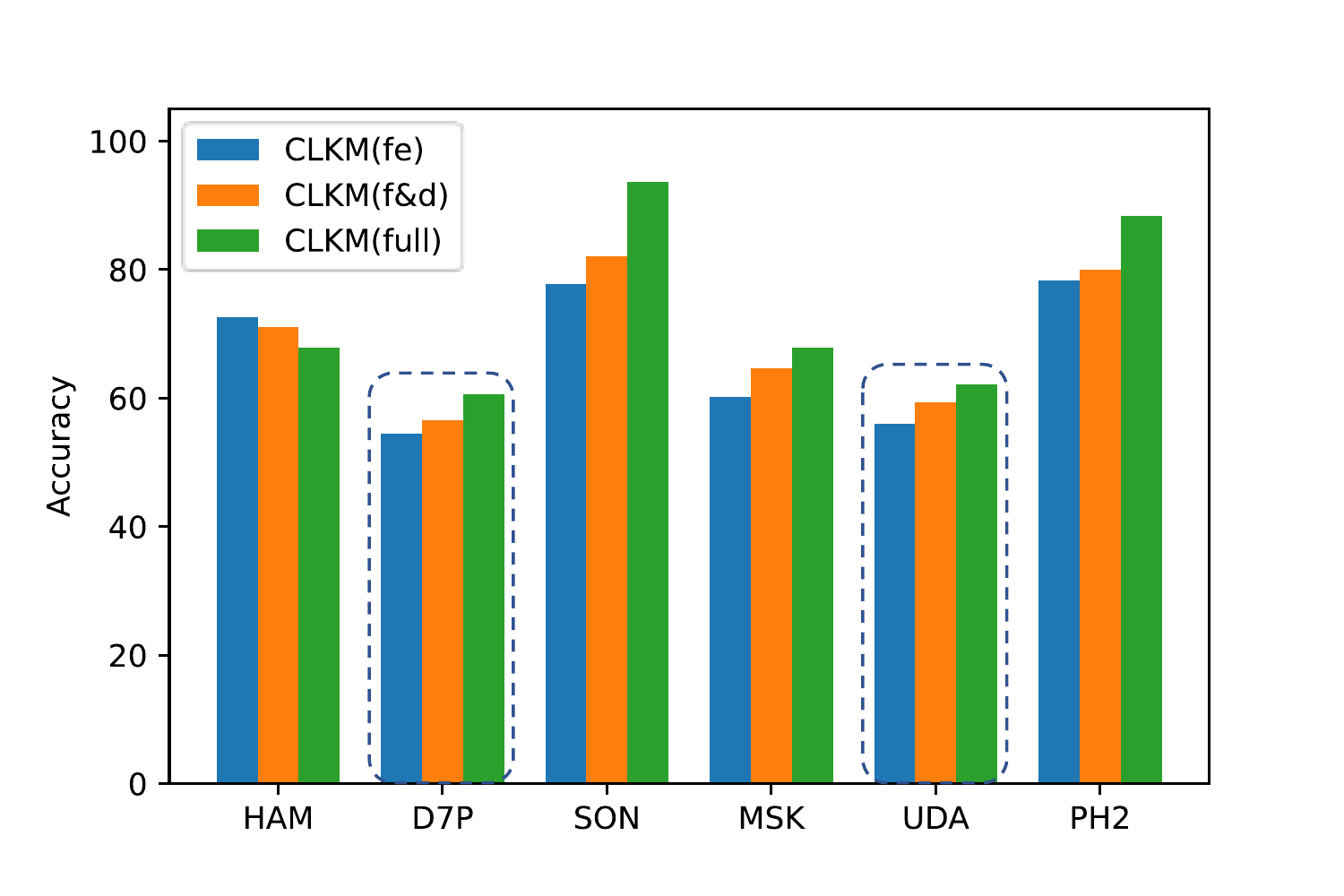}
\end{minipage}
}%
\subfigure[meta-testing: D7P, MSK]{
\begin{minipage}[t]{0.32\linewidth}
\centering
\includegraphics[width=2.35in]{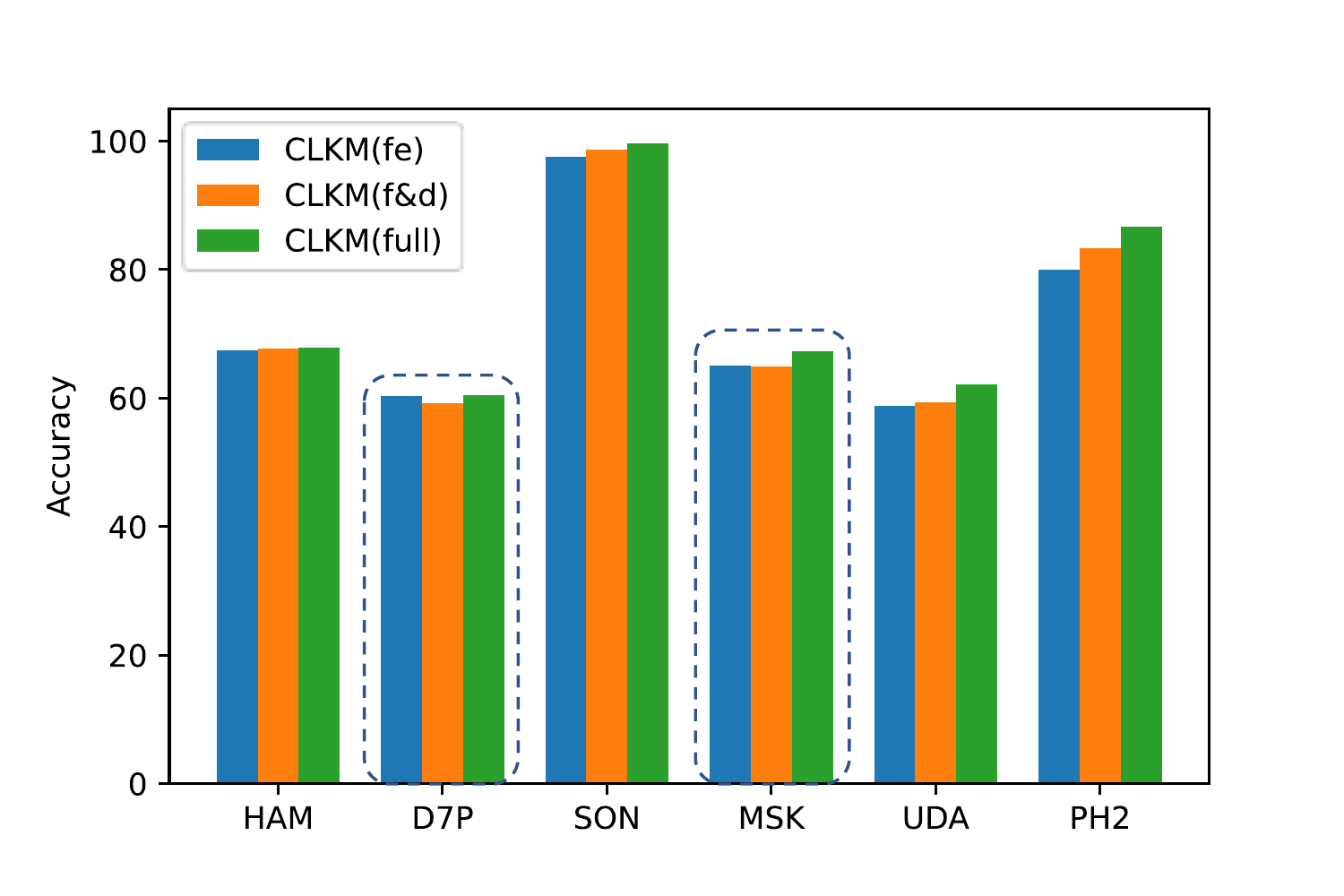}
\end{minipage}
}%
\setlength{\abovecaptionskip}{-2pt}
\setlength{\belowcaptionskip}{-1cm}
\caption{The accuracy of CLKM(fe), CLKM(f\&d), and CLKM(full) when different target lesion domains are selected as meta-testing datasets.}
\label{fig10}
\end{figure}

\begin{figure}[H]
\centering
\subfigcapskip = -8pt
\subfigure[meta-testing:UDA, PH2]{
\begin{minipage}[t]{0.32\linewidth}
\centering
\includegraphics[width=2.35in]{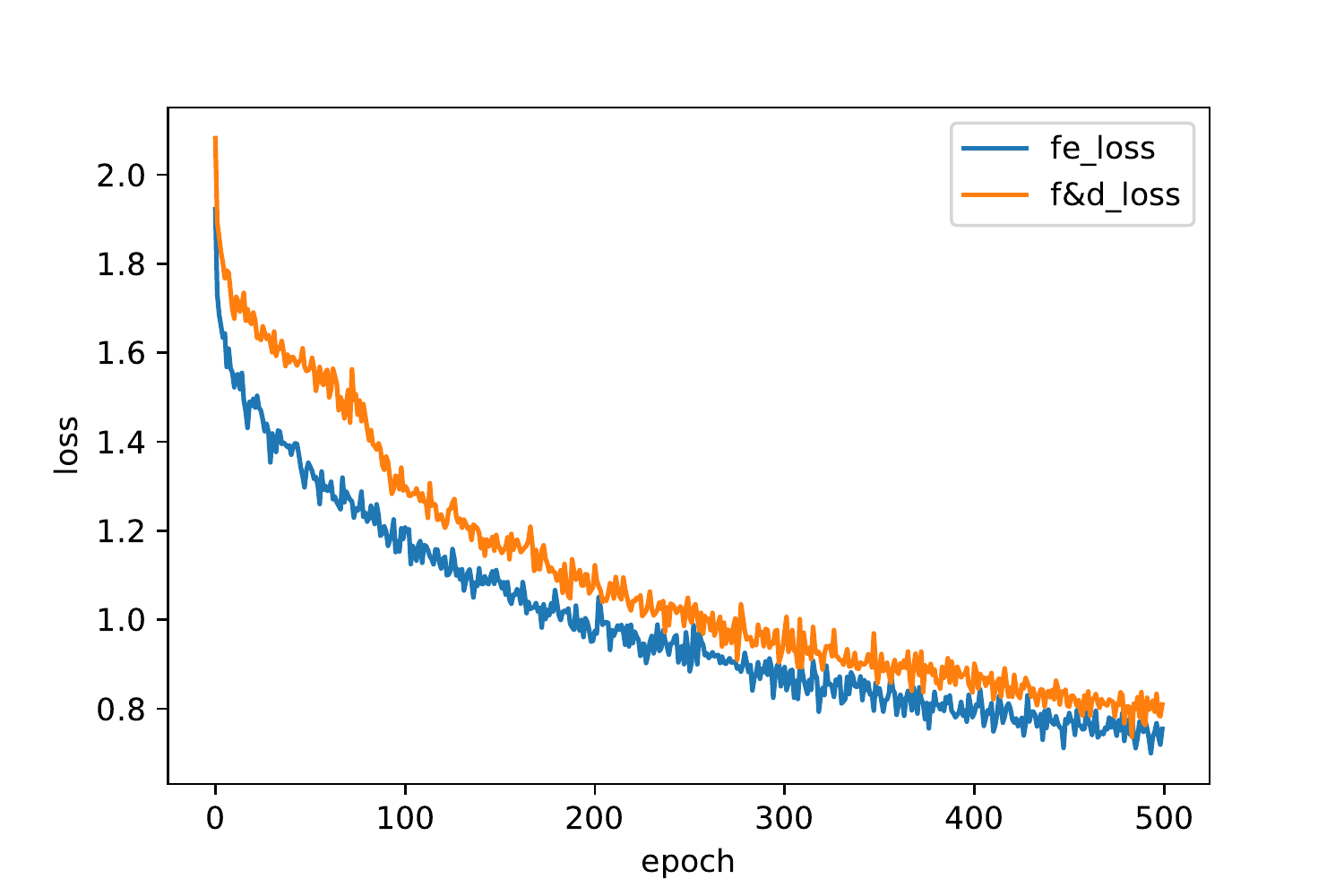}
\end{minipage}%
}
\subfigure[meta-testing:MSK, PH2]{
\begin{minipage}[t]{0.32\linewidth}
\centering
\includegraphics[width=2.35in]{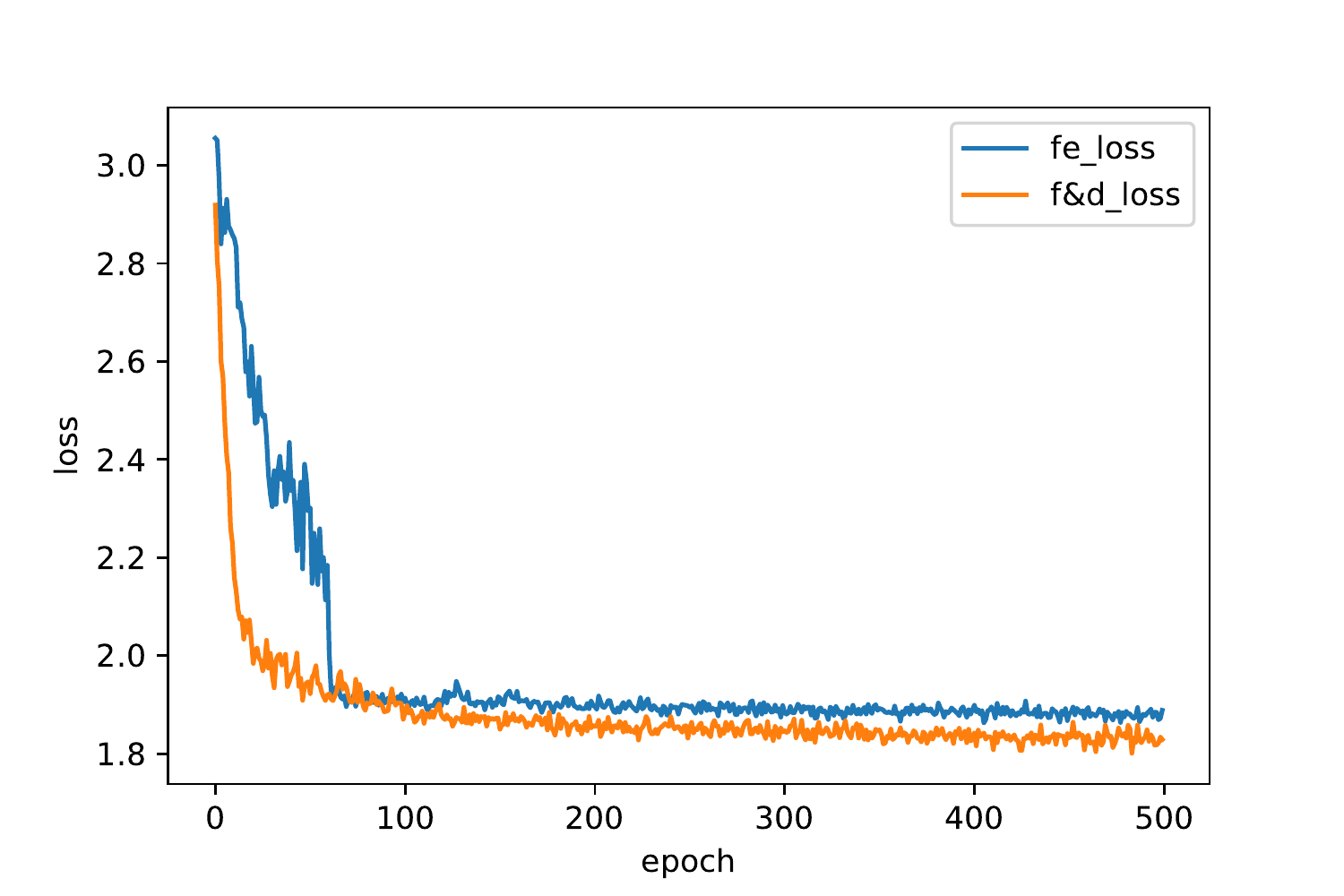}
\end{minipage}%
}
\subfigure[meta-testing:SON, PH2]{
\begin{minipage}[t]{0.32\linewidth}
\centering
\includegraphics[width=2.35in]{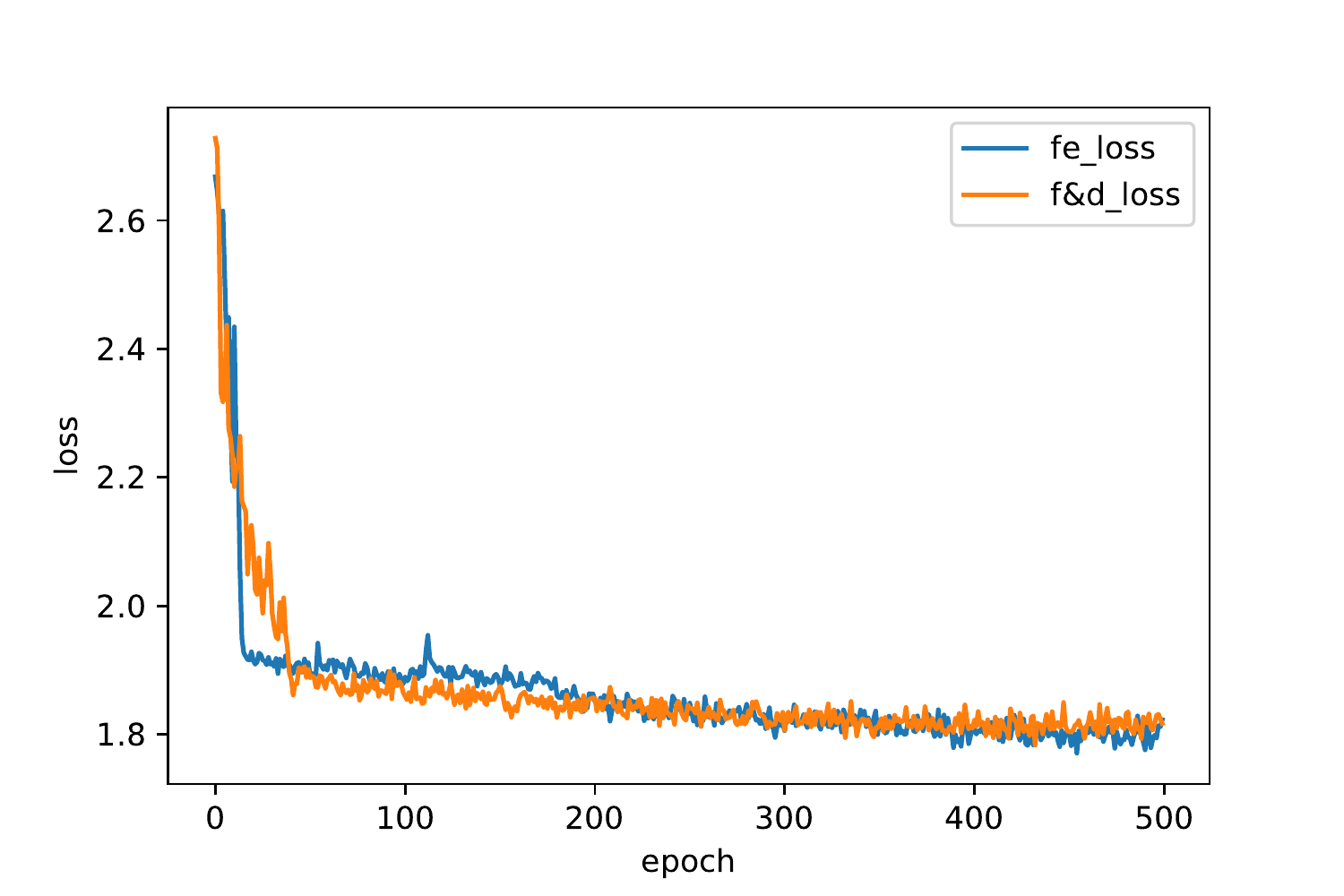}
\end{minipage}%
}%
\vspace{-0.5cm}
\subfigure[meta-testing:D7P, SON]{
\begin{minipage}[t]{0.32\linewidth}
\centering
\includegraphics[width=2.35in]{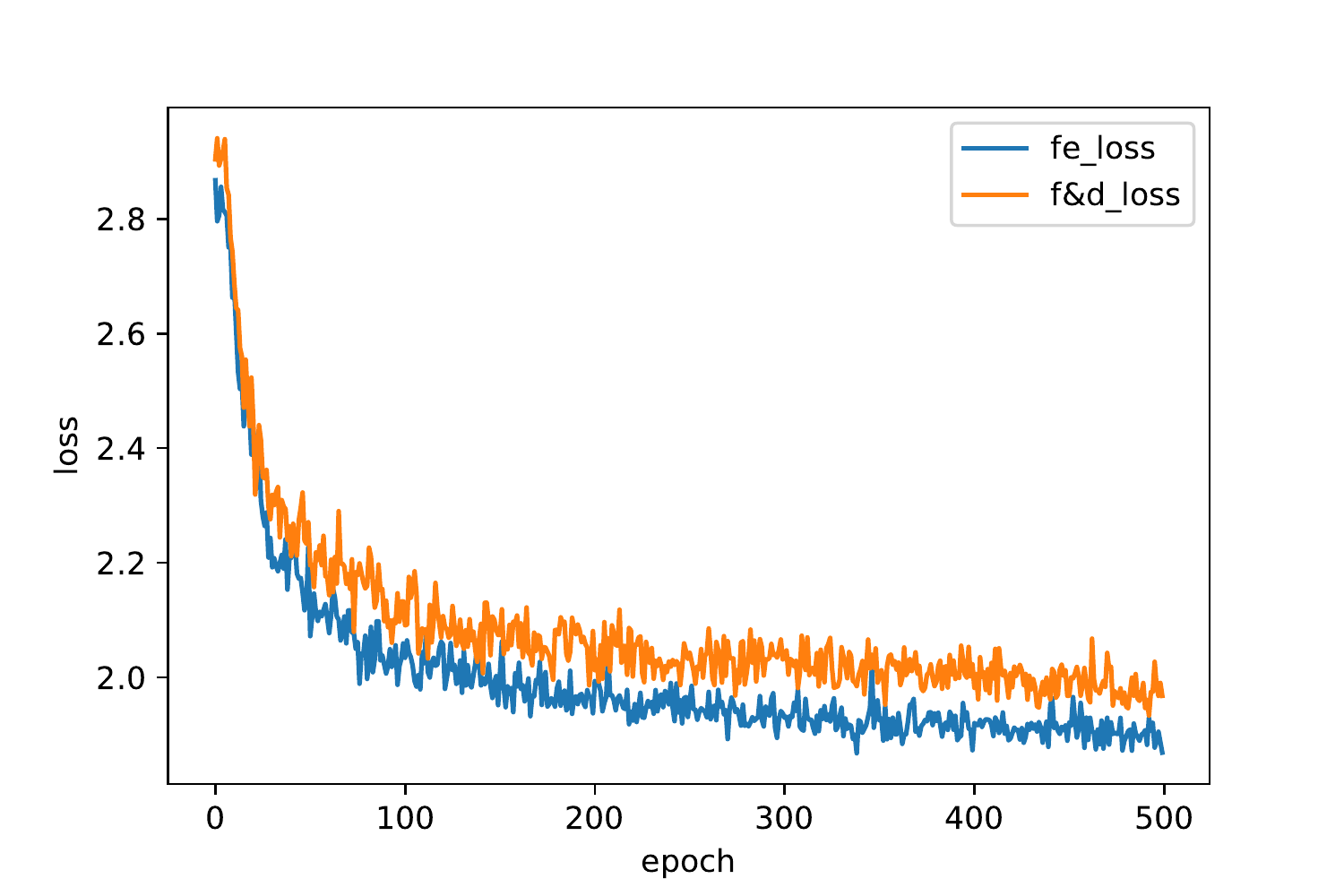}
\end{minipage}%
}
\subfigure[meta-testing:D7P, UDA]{
\begin{minipage}[t]{0.32\linewidth}
\centering
\includegraphics[width=2.35in]{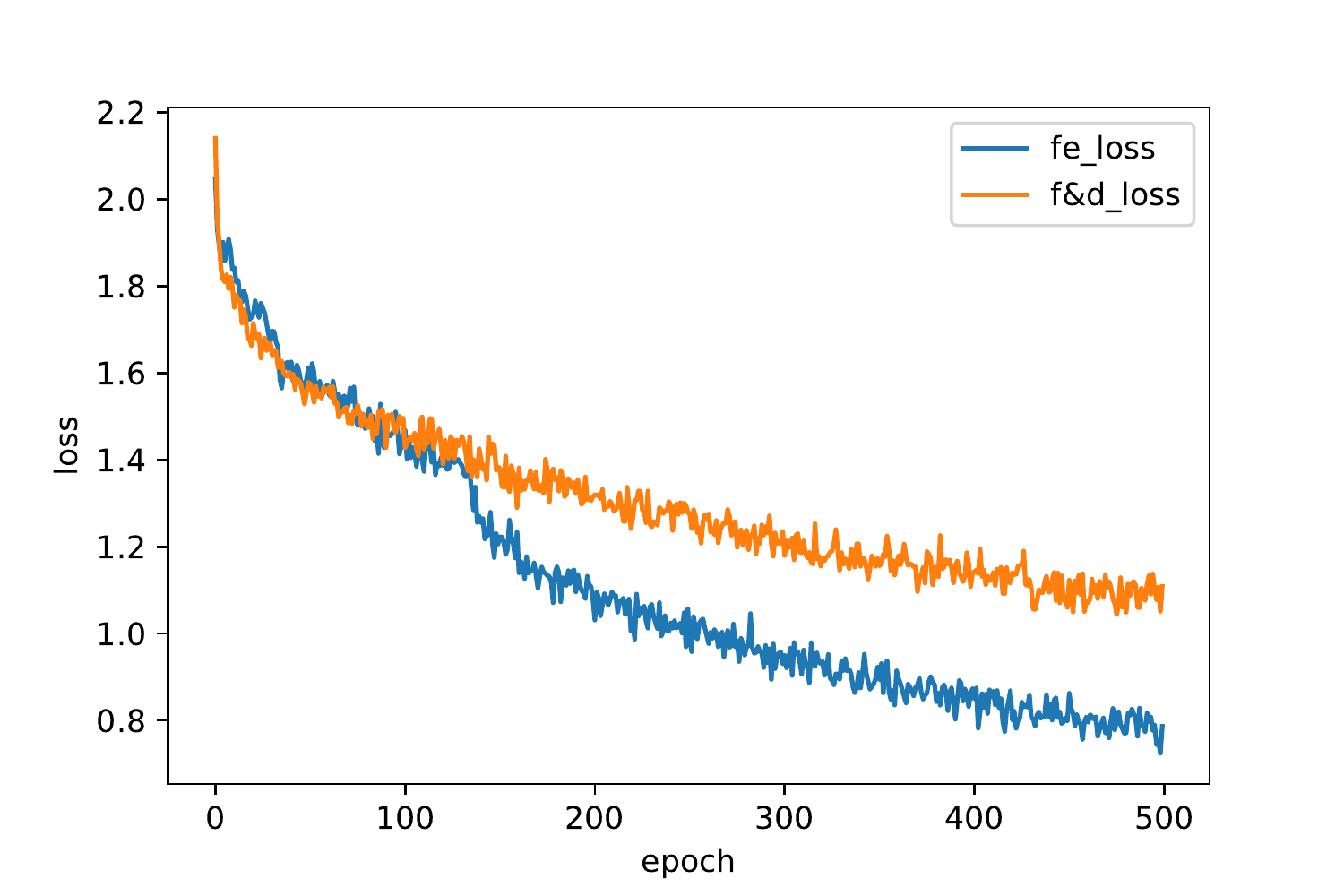}
\end{minipage}%
}
\subfigure[meta-testing:D7P, MSK]{
\begin{minipage}[t]{0.32\linewidth}
\centering
\includegraphics[width=2.35in]{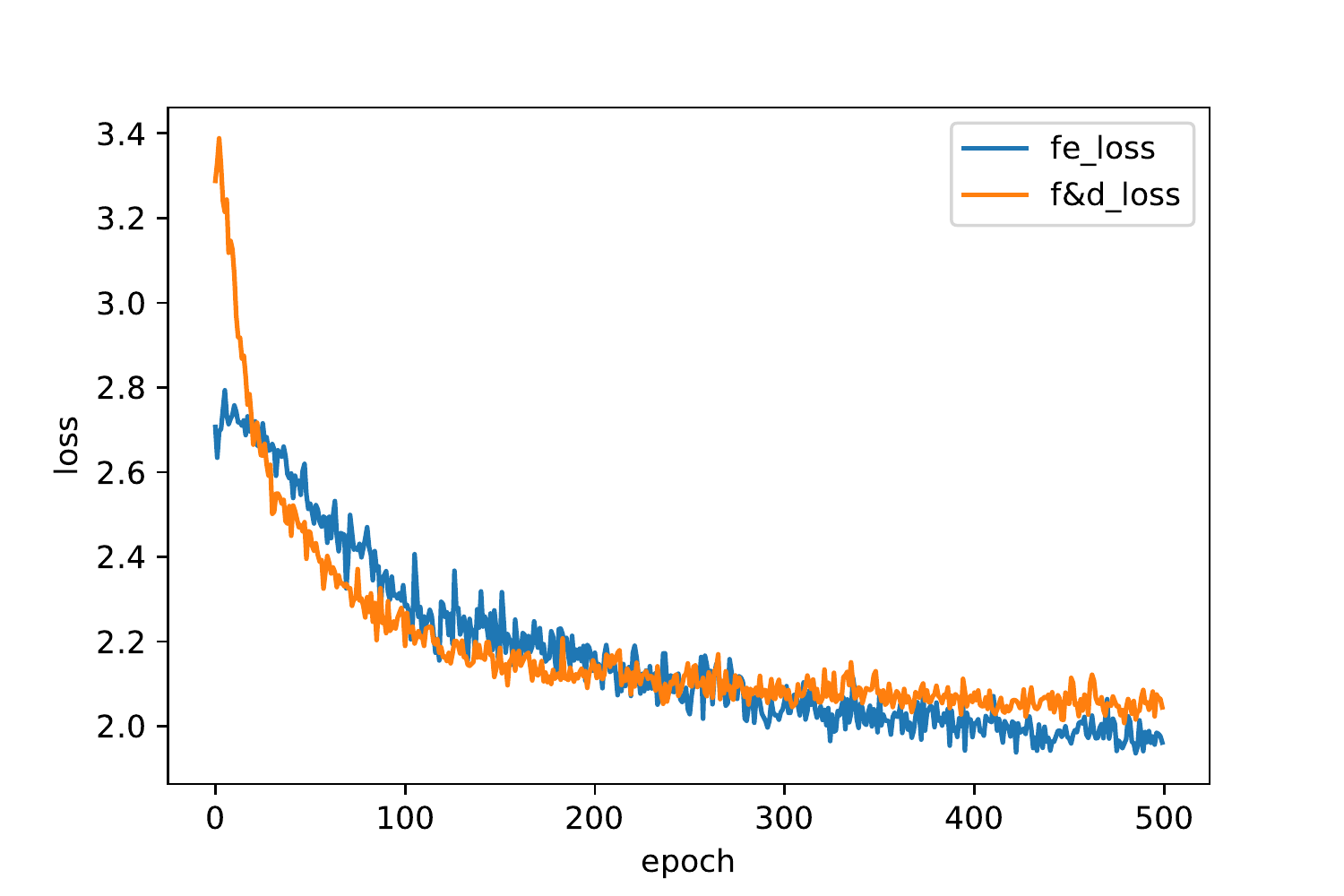}
\end{minipage}%
}
\setlength{\abovecaptionskip}{0pt}
\caption{\textbf{The training loss curves on different meta-testing datasets.} On the whole, when the penalty term is added to the loss function, the training loss is slightly increased.}
\label{fig:fig9}
\end{figure}

\textbf{(c) Meta-testing datasets: D7P, MSK.}
The results in Table~\ref{tab4} show that the discrete adaptive methods can slightly improve model performance compared with no adaptation, and the CLKM is suitable for this situation.
CLKM captures the differences across lesion domains, meantime, while maintaining the accuracy of the previously learned domains.

\textbf{(d) Meta-testing datasets: MSK, PH2.} In this setting, the distribution of samples in the meta-training datasets (D7P, UDA, and SONIC) is smooth to some extent.
From D7P to SON, the number of major categories decreases gradually.
In such cases, JAN performed better than DAN. 
Similarly result also appear in Table~\ref{tab4} and Table~\ref{tab5}.

\begin{figure}[H]
\centering
\subfigcapskip = -15pt
\subfigure[CLKM(full)]{
\begin{minipage}[t]{0.32\linewidth}
\centering
\includegraphics[width=2.3in]{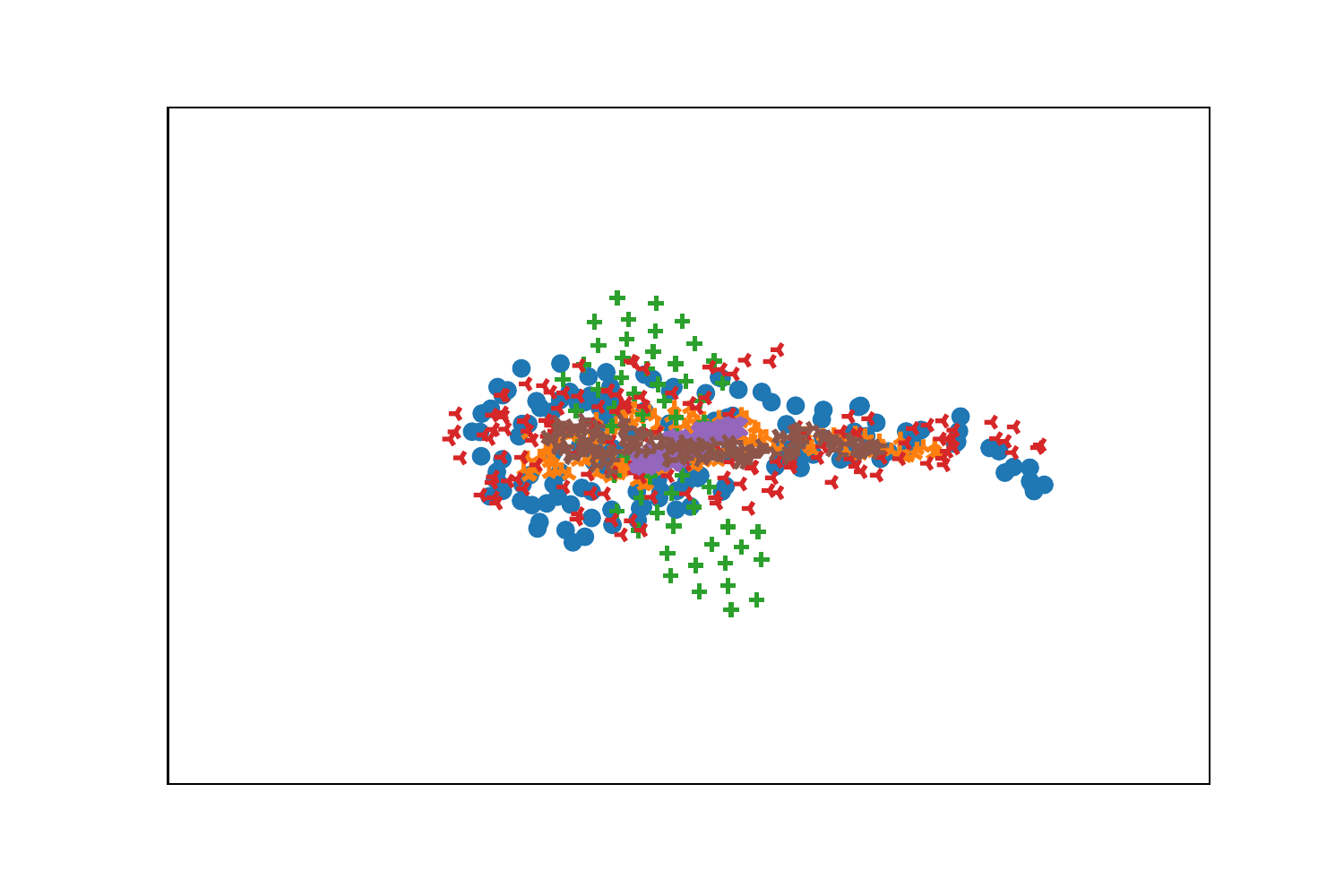}
\end{minipage}%
}
\subfigure[DAN]{
\begin{minipage}[t]{0.32\linewidth}
\centering
\includegraphics[width=2.3in]{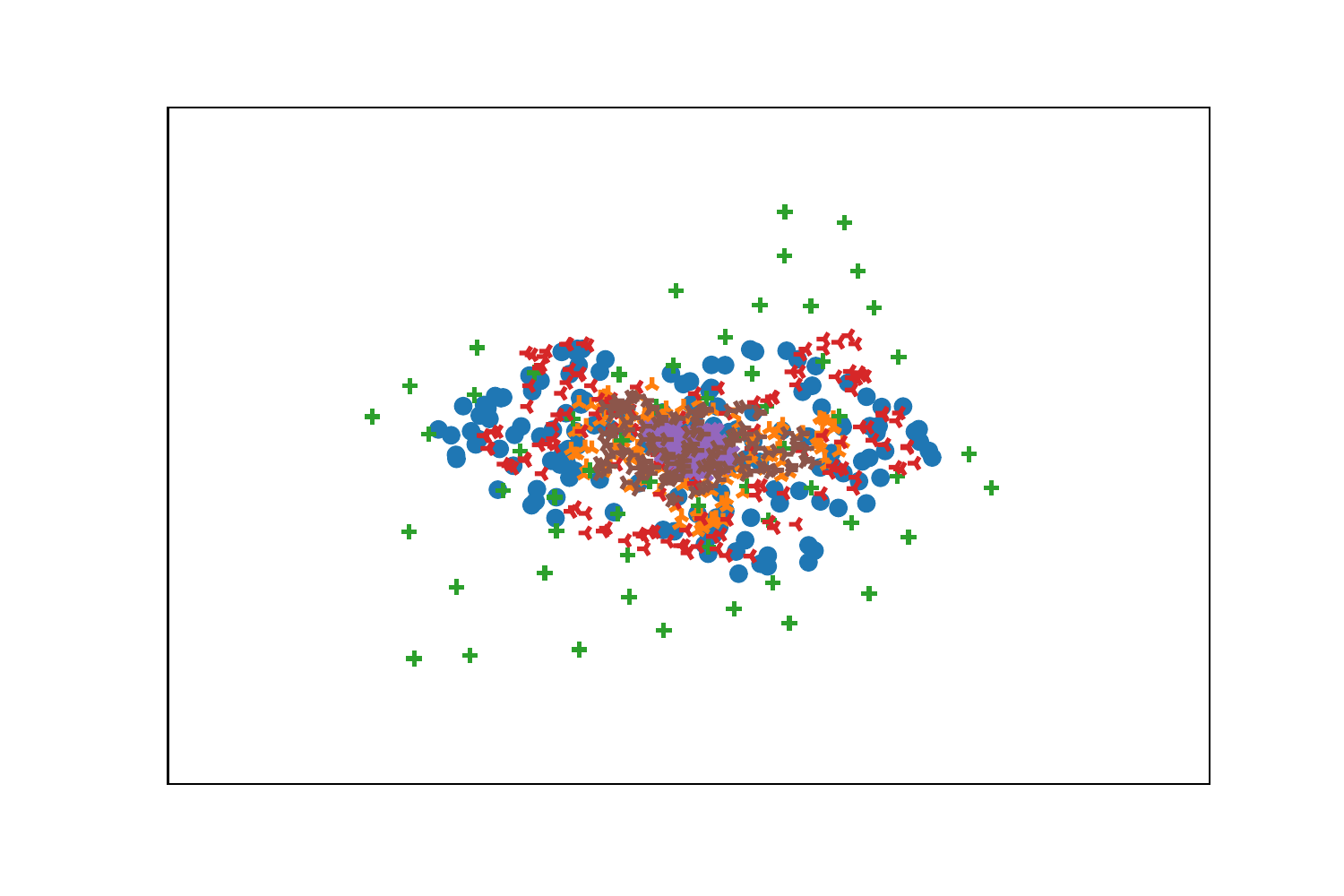}
\end{minipage}%
}
\subfigure[JAN]{
\begin{minipage}[t]{0.32\linewidth}
\centering
\includegraphics[width=2.3in]{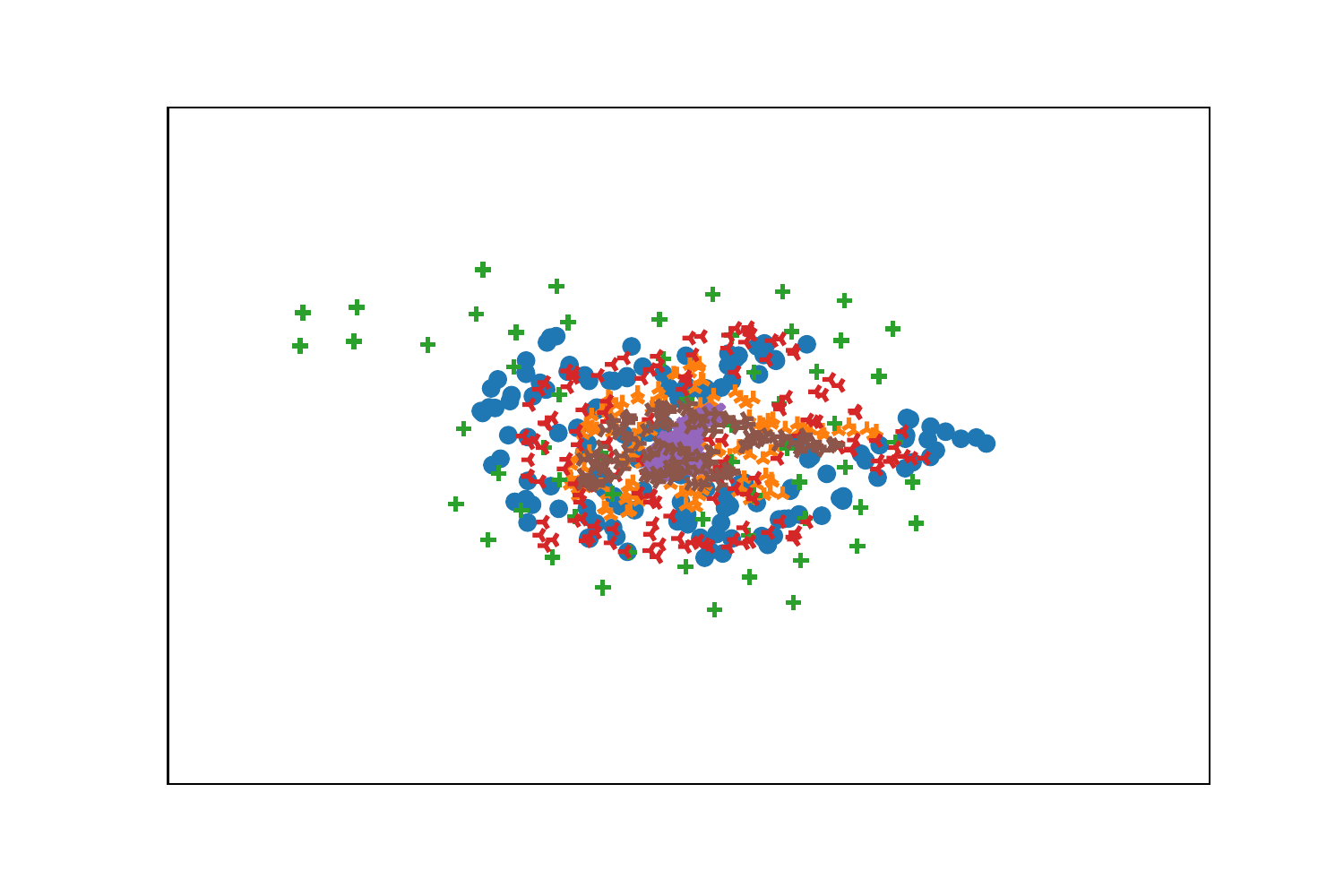}
\end{minipage}%
}%
\setlength{\abovecaptionskip}{-1pt}
\caption{The $t$-SNE feature visualizations when meta-training datasets are \textbf{D7P}, \textbf{UDA}, and \textbf{SONIC}.}
\label{fig:fig6}
\end{figure}
\vspace{-1.0cm}

\begin{figure}[H]
\centering
\subfigcapskip = -15pt
\subfigure[CLKM(full)]{
\begin{minipage}[t]{0.32\linewidth}
\centering
\includegraphics[width=2.35in]{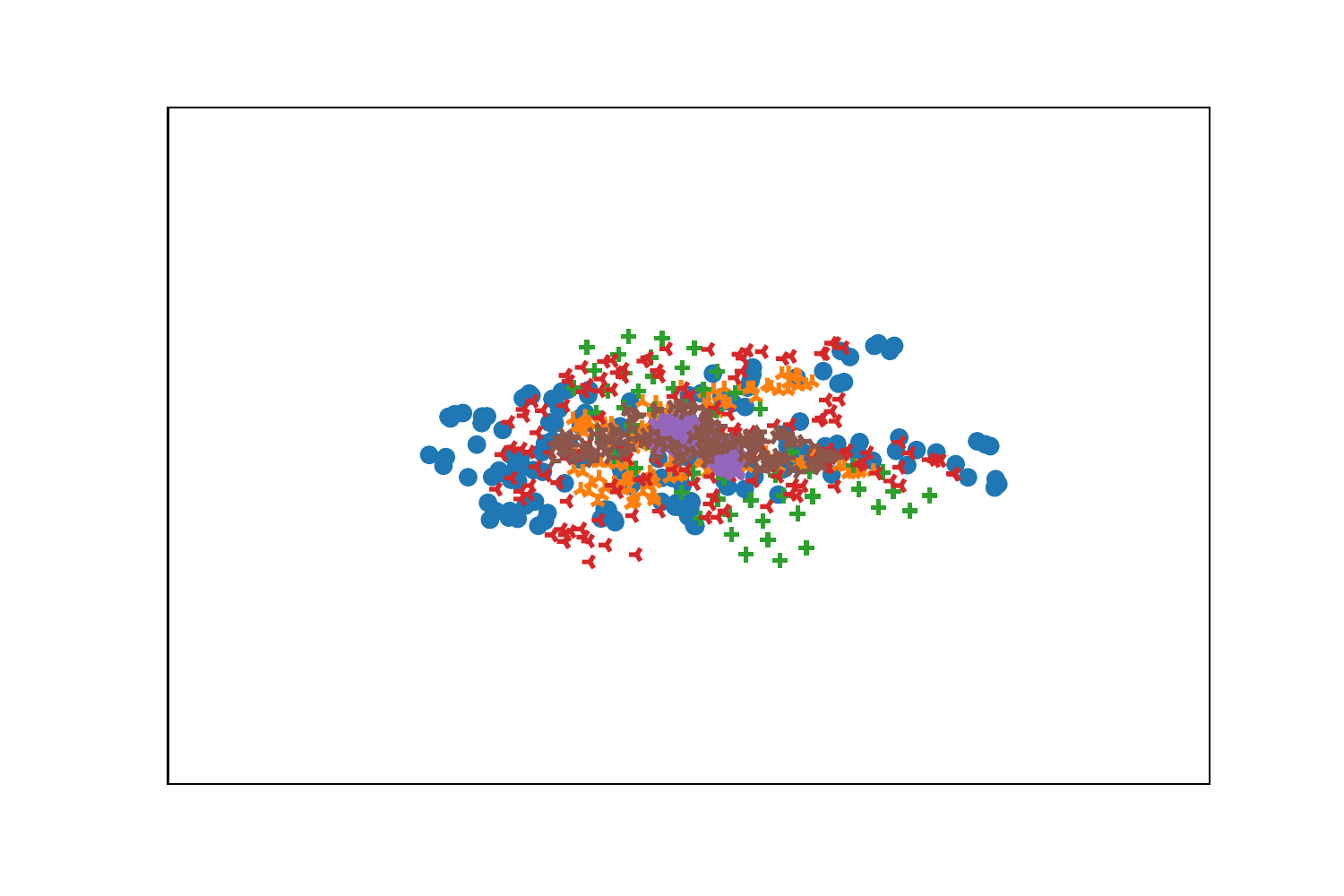}
\end{minipage}%
}
\subfigure[DAN]{
\begin{minipage}[t]{0.32\linewidth}
\centering
\includegraphics[width=2.35in]{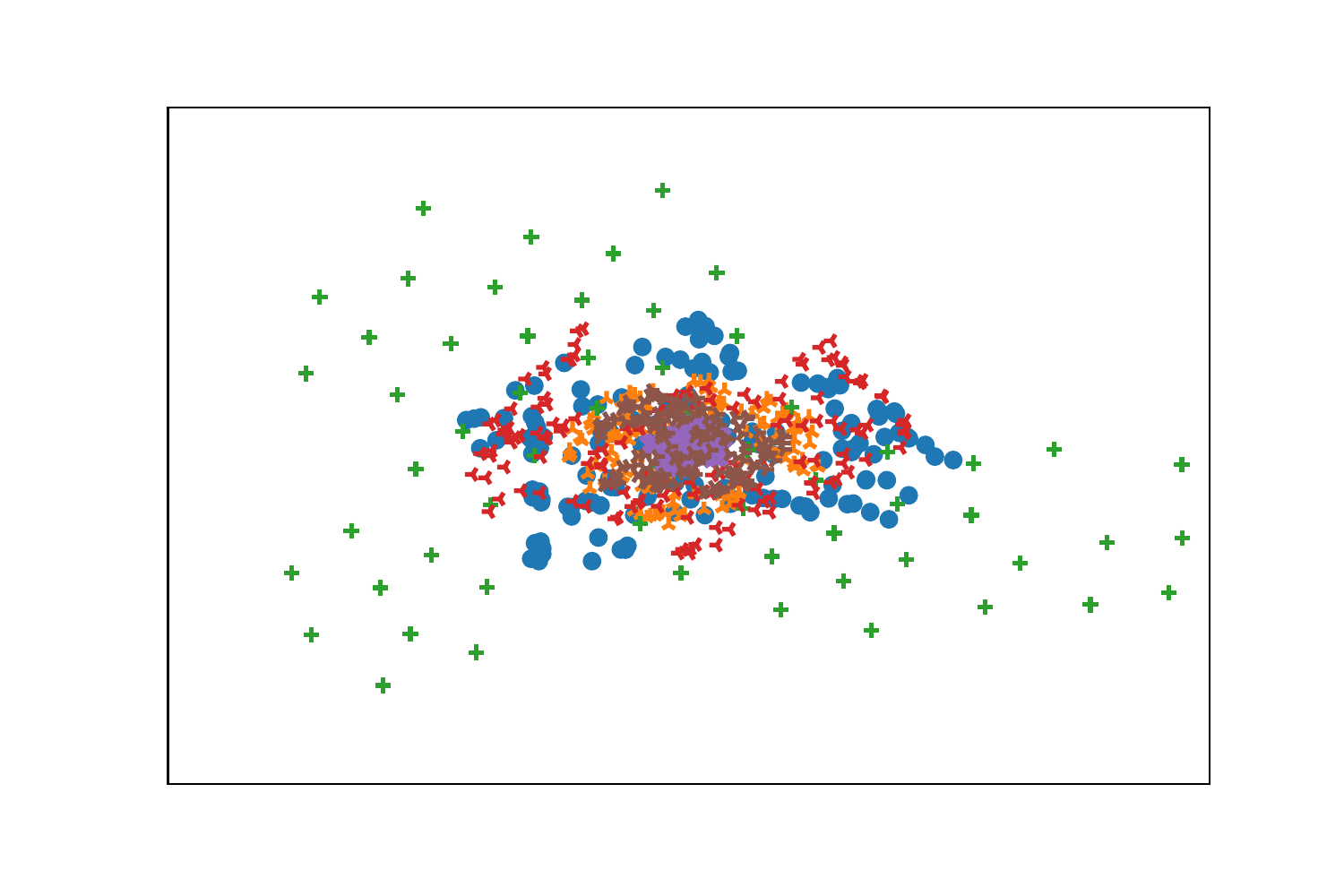}
\end{minipage}%
}
\subfigure[JAN]{
\begin{minipage}[t]{0.32\linewidth}
\centering
\includegraphics[width=2.35in]{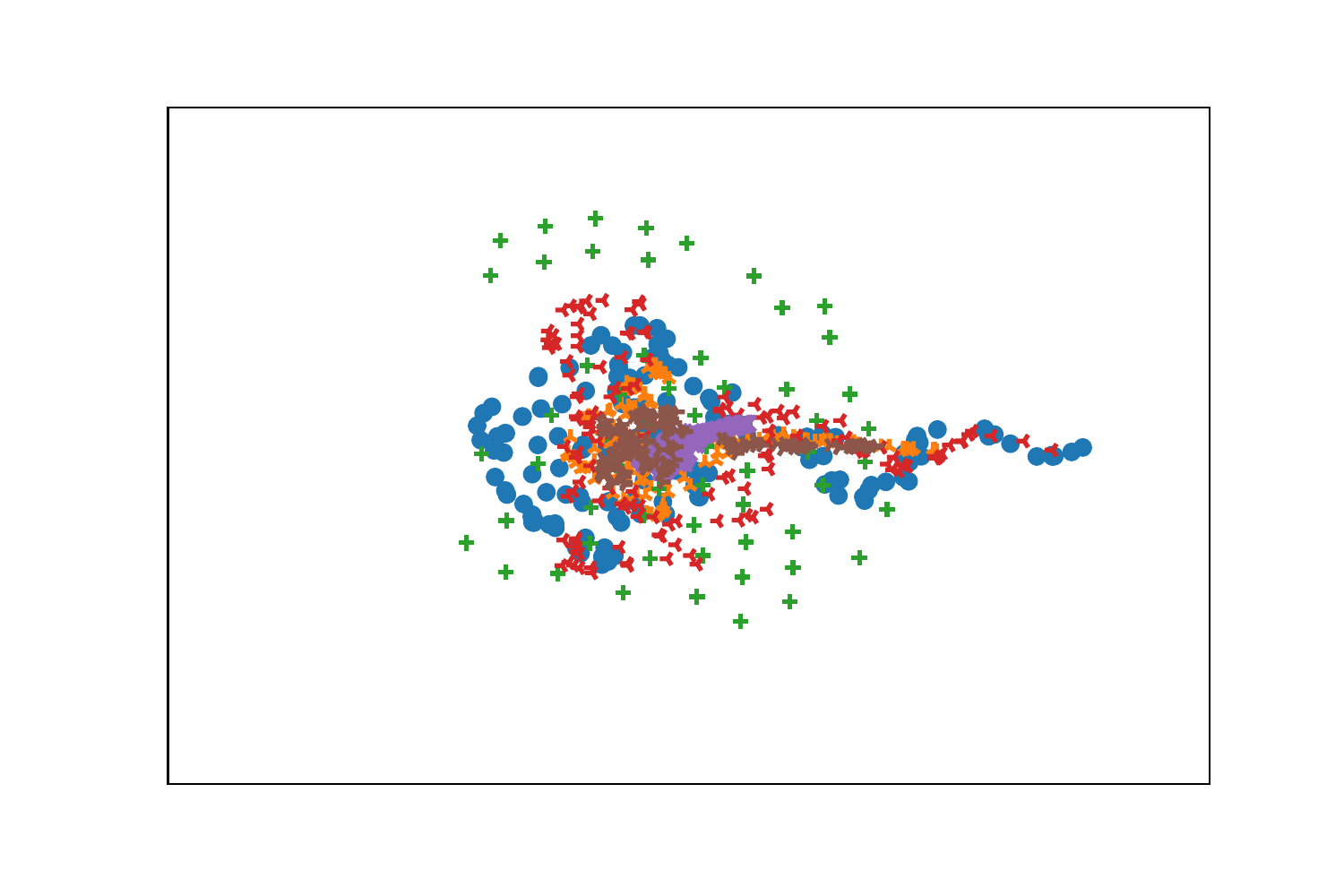}
\end{minipage}%
}%
\setlength{\abovecaptionskip}{-1pt}
\caption{The $t$-SNE feature visualizations when meta-training datasets are \textbf{PH2}, \textbf{UDA}, and \textbf{MSK}.}
\label{fig:fig7}
\end{figure}
\vspace{-1.0cm}

\begin{figure}[H]
\centering
\subfigcapskip = -15pt
\subfigure[CLKM(full)]{
\begin{minipage}[t]{0.32\linewidth}
\centering
\includegraphics[width=2.35in]{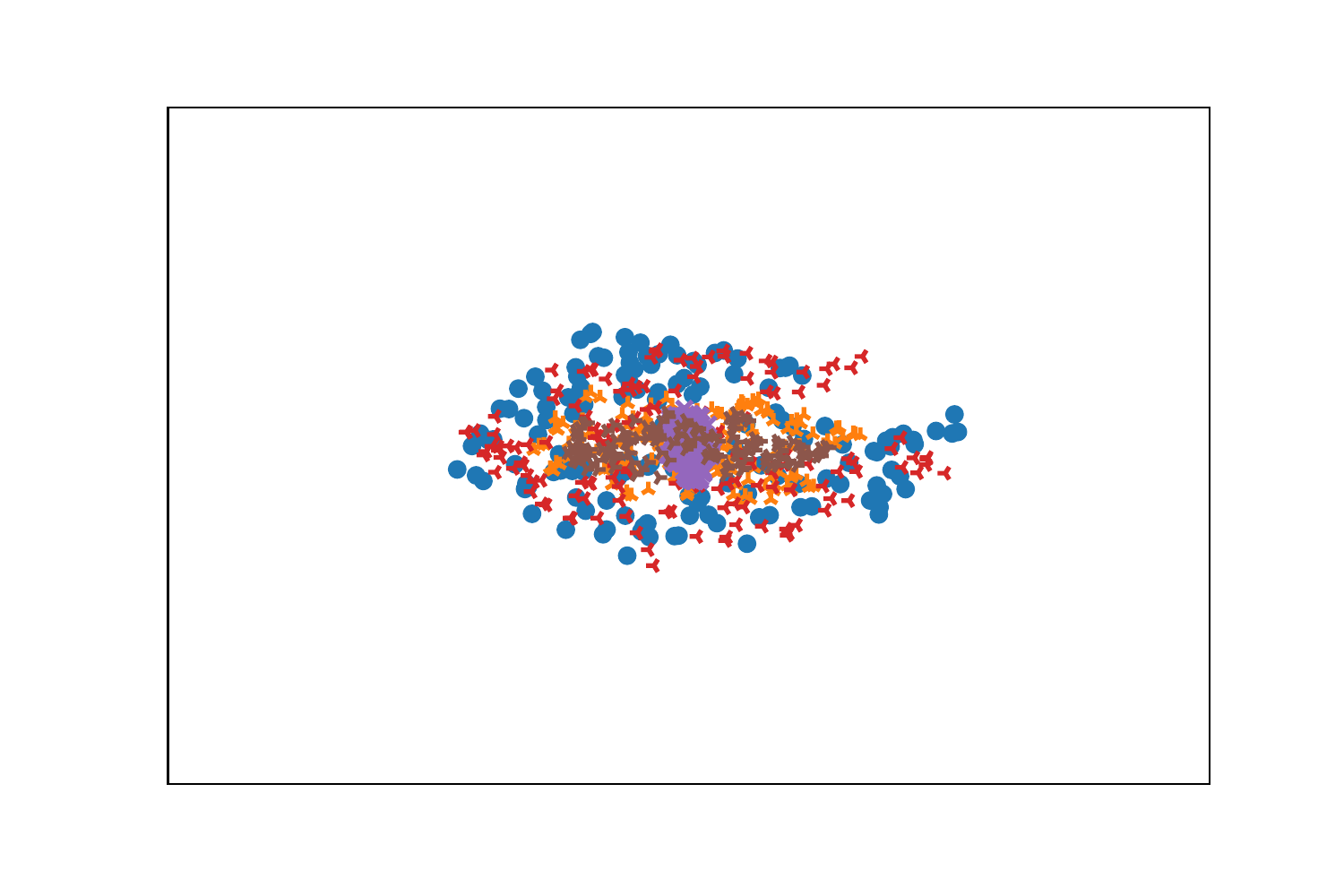}
\end{minipage}%
}
\subfigure[DAN]{
\begin{minipage}[t]{0.32\linewidth}
\centering
\includegraphics[width=2.35in]{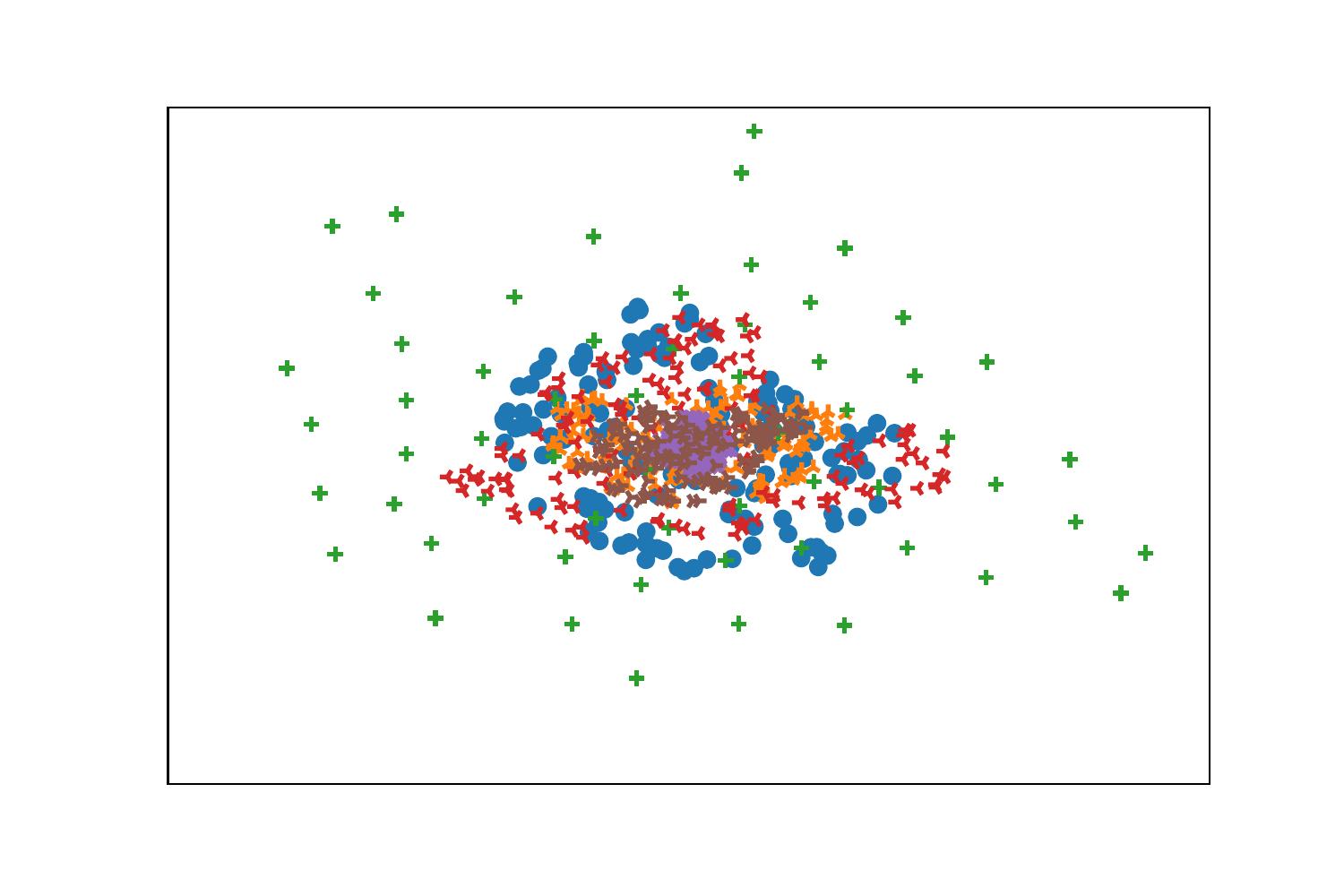}
\end{minipage}%
}
\subfigure[JAN]{
\begin{minipage}[t]{0.32\linewidth}
\centering
\includegraphics[width=2.35in]{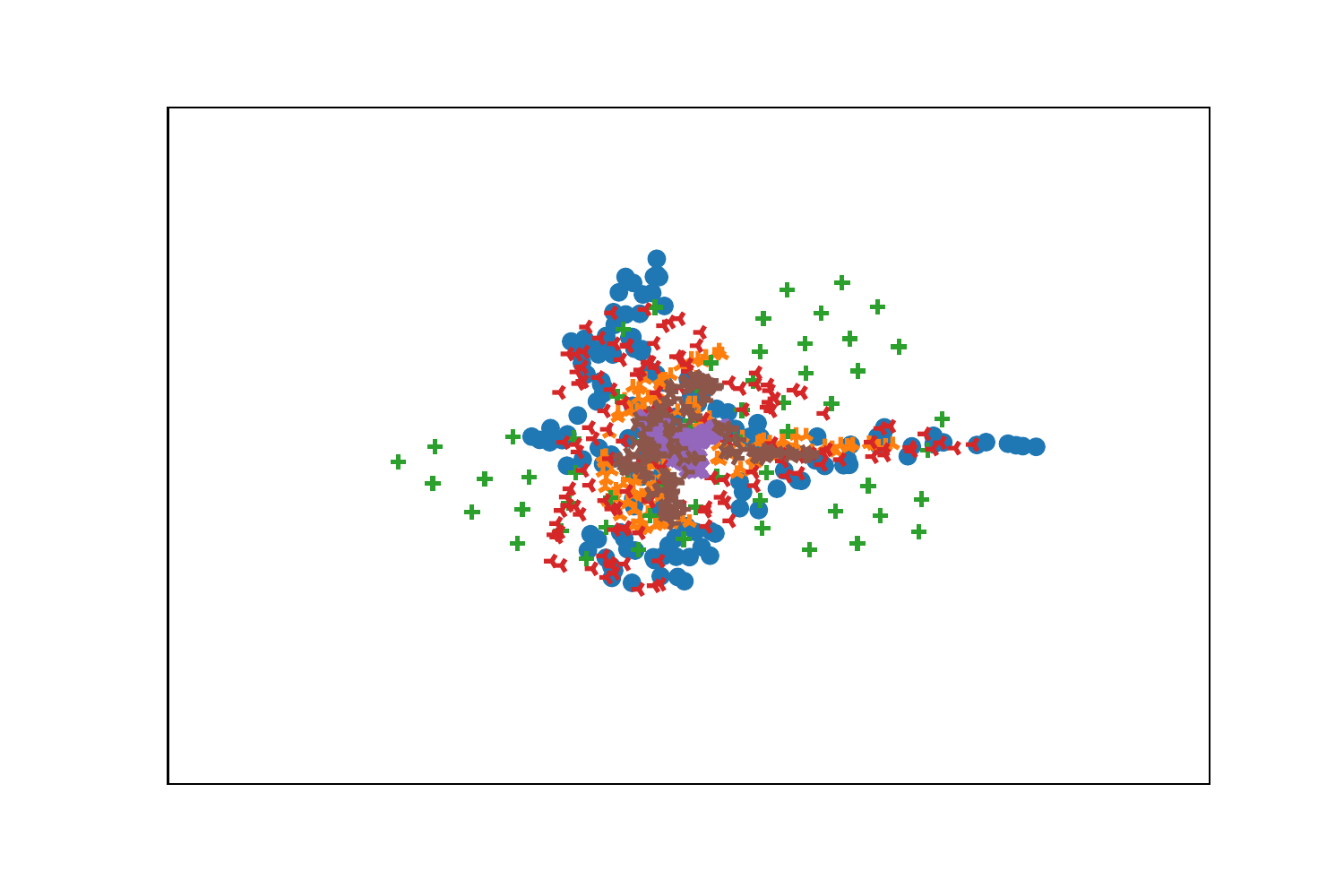}
\end{minipage}%
}%
\setlength{\abovecaptionskip}{-1pt}
\caption{The $t$-SNE feature visualizations when meta-training datasets are \textbf{SONIC}, \textbf{UDA} ,and \textbf{PH2}.}
\label{fig:fig8}
\end{figure}

\begin{figure}[H]
    \centering
    \includegraphics[width = 9cm]{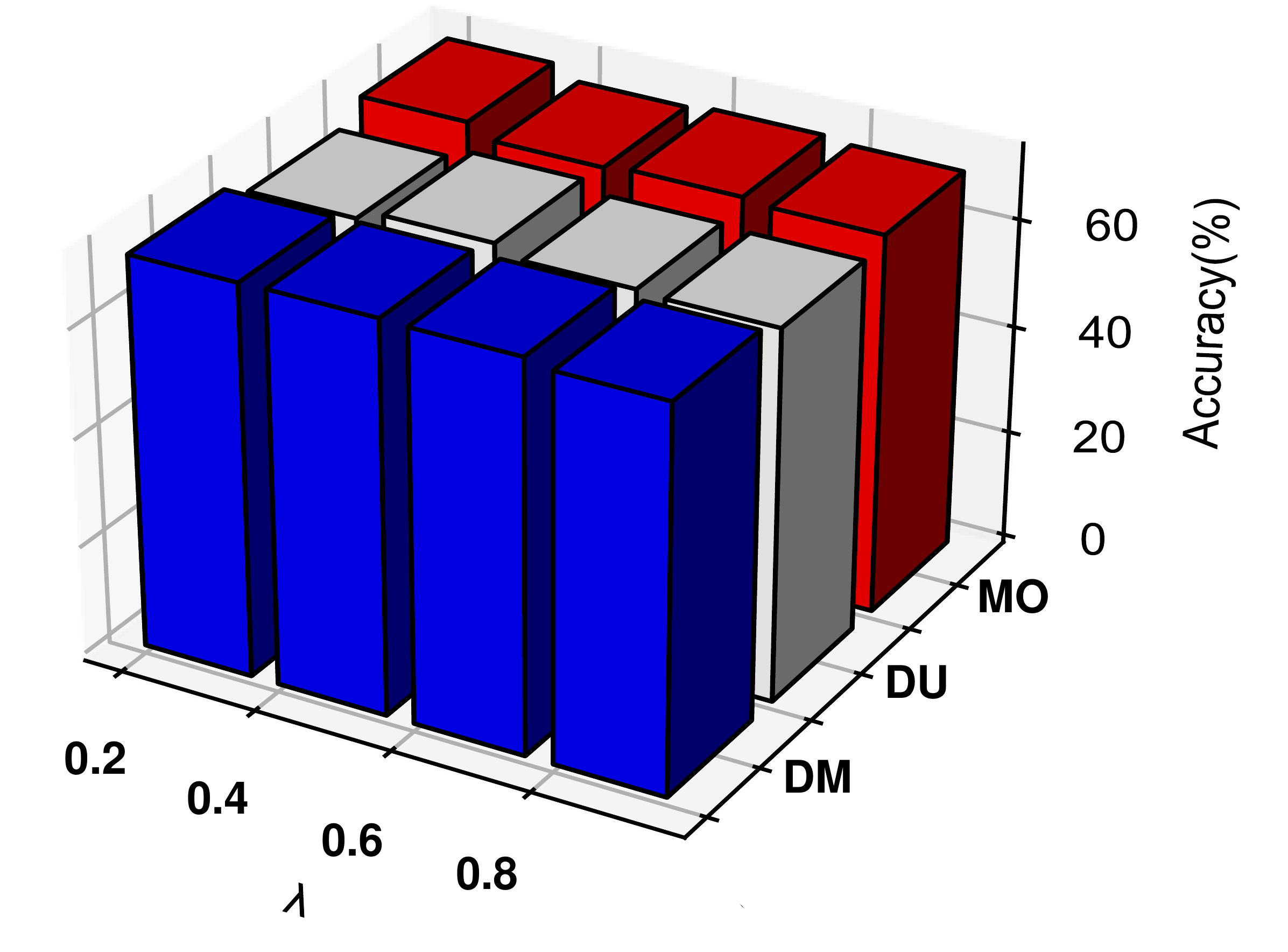}
    \caption{The model performance with different levels $\lambda$ on different meta-testing datasets.}
    \label{fig11}
\end{figure}

\textbf{(e) Meta-testing datasets: D7P, UDA.}
When we select MSK, PH2, and SON as the meta-training datasets, note that the performance of DAN is worse than SourceOnly.
DAN force-distinguishes the source and target domains, which has the opposite effect.
Although the performance is improved after adding the regularization constraint, still inferior to SourceOnly.

\textbf{(f) Meta-testing datasets: UDA, PH2.} 
In this setting, we found that the domain adaptation method JAN performed even worse than the model without adaptation.
The data distributions of MSK, D7P, and SON are significantly different, however, JAN intends to narrow the discrepancy across multiple domains at the same time, resulting in a degradation of the model performance.

\begin{center}
    Table 8: Classification accuracy(\%) for two meta-training datasets
\end{center}

\begin{minipage}{\textwidth}
    \begin{minipage}[l]{0.31\textwidth}
    %1
        \centering
        \setlength{\tabcolsep}{1.5mm}
        \setlength{\abovecaptionskip}{0cm}
        \setlength{\belowcaptionskip}{0cm}
        \footnotesize
            \makeatletter\def\@captype{table}\makeatother

            \caption*{(a) Meta-training: \textbf{UDA, PH2}}
            
            \begin{tabular}{c c c c c}
                \hline
                
                \hline
                 & fe & dq & f\&d & full \\
                 \hline
                 
                 \hline
                 HAM& 67.6 & 68.1 & 68.0 & 67.6\\
                 D7P& 53.0 & 52.5 & 54.0 & 53.5\\
                 SON& 90.1 & 85.4 & 92.1 & 93.8\\
                 MSK& 62.1 & 58.0 & 61.2 & 63.5\\
                 \textbf{UDA}& 58.0 & 60.0 & 61.1 & 61.0\\
                 \textbf{PH2}& 75.0 & 80.0 & 83.3 & 85.0\\
                 \hline
                 Avg& 67.6 & 67.3 & 69.9 & 70.7\\

                \hline
                
                \hline
                \end{tabular}
                
    \end{minipage}%
    \hspace{0mm}
    \begin{minipage}[l]{0.31\textwidth}
    %1
        \centering
        \setlength{\tabcolsep}{1.5mm}
        \footnotesize
            \makeatletter\def\@captype{table}\makeatother \setlength{\abovecaptionskip}{0cm}    \setlength{\belowcaptionskip}{0cm}
            \caption*{(b) Meta-training: \textbf{MSK, PH2}}
            
                % \begin{tabular}{c c c c c c c c}
                \begin{tabular}{c c c c c}
                \hline
                
                \hline
                 & fe & dq & f\&d & full \\
                 \hline
                 
                 \hline
                 HAM& 71.2 & 70.5 & 71.5 & 69.5\\
                 D7P& 52.4 & 53.5 & 54.3 & 56.4\\
                 SON& 75.0 & 80.0 & 83.3 & 92.5\\
                 \textbf{MSK}& 60.5 & 58.3 & 63.7 & 63.3\\
                 UDA& 56.5 & 54.9 & 58.0 & 58.0\\
                 \textbf{PH2}& 78.3 & 81.7 & 80.0 & 85.0\\
                 \hline
                 Avg& 65.7 & 66.5 & 68.5 & 70.8\\
                
                \hline
                
                \hline
                \end{tabular}
                
    \end{minipage}%
    \hspace{0mm}
    \begin{minipage}[l]{0.31\textwidth}
    %1
        \centering
        \setlength{\tabcolsep}{1.5mm}
        \footnotesize
            \makeatletter\def\@captype{table}\makeatother \setlength{\abovecaptionskip}{0cm}    \setlength{\belowcaptionskip}{0cm}
            \caption*{(c) Meta-training: \textbf{D7P, MSK}}
            
                % \begin{tabular}{c c c c c c c c}
                \begin{tabular}{c c c c c}
                \hline
                
                \hline
                 & fe & dq & f\&d & full \\
                 \hline
                 
                 \hline
                 HAM& 67.3 & 68.5 & 68.5 & 68.0\\
                 \textbf{D7P}& 53.5 & 58.3 & 57.6 & 61.3\\
                 SON& 82.3 & 79.5 & 93.2 & 96.0\\
                 \textbf{MSK}& 59.5 & 63.2 & 60.1 & 67.3\\
                 UDA& 57.7 & 53.2 & 59.3 & 62.1\\
                 PH2& 73.3 & 78.3 & 81.7 & 83.3\\
                 \hline
                 Avg& 65.6 & 66.8 & 70.1 & 73.0\\
                
                \hline
                
                \hline

                \end{tabular}
                
    \end{minipage}
\end{minipage}

\begin{minipage}{\textwidth}
    \begin{minipage}[l]{0.3\textwidth}
    %1
        \centering
        \setlength{\tabcolsep}{1.5mm}
        \setlength{\abovecaptionskip}{0.2cm}
        \setlength{\belowcaptionskip}{0.2cm}
        \footnotesize
            \makeatletter\def\@captype{table}\makeatother

            \caption*{(d) Meta-training: \textbf{MSK, UDA}}
            
                % \begin{tabular}{c c c c c c c c}
                \begin{tabular}{c c c c c}
                \hline
                
                \hline
                 & fe & dq & f\&d & full \\
                 \hline
                 
                 \hline
                 HAM& 68.2 & 68.4 & 67.7 & 67.1\\
                 D7P& 56.1 & 53.2 & 56.6 & 55.7\\
                 SON& 86.1 & 88.7 & 93.7 & 93.2\\
                 \textbf{MSK}& 58.3 & 60.1 & 63.2 & 61.9\\
                 \textbf{UDA}& 57.7 & 57.1 & 59.3 & 61.0\\
                 PH2& 80.0 & 81.7 & 83.3 & 85.0\\
                 \hline
                 Avg& 67.7 & 68.2 & 70.6 & 70.7\\

                \hline
                
                \hline
                \end{tabular}
                
    \end{minipage}
    \hspace{1mm}
    \begin{minipage}[l]{0.3\textwidth}
    %1
        \centering
        \setlength{\tabcolsep}{1.5mm}
        \setlength{\abovecaptionskip}{0.2cm}
        \setlength{\belowcaptionskip}{0.2cm}
        \footnotesize
            \makeatletter\def\@captype{table}\makeatother

            \caption*{(e) Meta-training: \textbf{SON, PH2}}
            
                % \begin{tabular}{c c c c c c c c}
                \begin{tabular}{c c c c c}
                \hline
                
                \hline
                 & fe & dq & f\&d & full \\
                 \hline
                 
                 \hline
                 HAM& 67.3 & 68.6 & 67.5 & 67.5\\
                 D7P& 55.5 & 54.4 & 55.5 & 56.1\\
                 \textbf{SON}& 93.1 & 86.1 & 96.0 & 97.5\\
                 MSK& 60.1 & 65.1 & 61.9 & 62.1\\
                 UDA& 54.9 & 52.2 & 58.8 & 59.3\\
                 \textbf{PH2}& 76.7 & 78.3 & 80.0 & 86.7\\
                 \hline
                 Avg& 67.9 & 67.5 & 70.0 & 71.5\\
                
                \hline
                
                \hline
                \end{tabular}
                
    \end{minipage}
    \hspace{1mm}
    \begin{minipage}[l]{0.3\textwidth}
    %1
        \centering
        \setlength{\tabcolsep}{1.5mm}
        \setlength{\abovecaptionskip}{0.2cm}
        \setlength{\belowcaptionskip}{0.2cm}
        \footnotesize
            \makeatletter\def\@captype{table}\makeatother

            \caption*{(f) Meta-training: \textbf{D7P, SON}}
            
                % \begin{tabular}{c c c c c c c c}
                \begin{tabular}{c c c c c}
                \hline
                
                \hline
                 & fe & dq & f\&d & full \\
                 \hline
                 
                 \hline
                 HAM& 67.5 & 68.6 & 68.4 & 68.1\\
                 \textbf{D7P}& 55.4 & 55.9 & 58.6 & 60.1\\
                 \textbf{SON}& 81.5 & 86.9 & 91.3 & 94.5\\
                 MSK& 61.2 & 59.5 & 65.8 & 66.2\\
                 UDA& 56.6 & 61.0 & 61.0 & 59.3\\
                 PH2& 81.7 & 85.0 & 85.0 & 86.7\\
                 \hline
                 Avg& 67.3 & 69.5 & 71.7 & 72.5\\
                
                \hline
                
                \hline
                \end{tabular}
                
    \end{minipage}
\end{minipage}

\begin{minipage}{\textwidth}
    \begin{minipage}[l]{0.3\textwidth}
    %1
        \centering
        \setlength{\tabcolsep}{1.5mm}
        \setlength{\abovecaptionskip}{0.2cm}
        \setlength{\belowcaptionskip}{0.2cm}
        \footnotesize
            \makeatletter\def\@captype{table}\makeatother

            \caption*{(g) Meta-training: \textbf{D7P, PH2}}
            
                % \begin{tabular}{c c c c c c c c}
                \begin{tabular}{c c c c c}
                \hline
                
                \hline
                 & fe & dq & f\&d & full \\
                 \hline
                 
                 \hline
                 HAM& 65.7 & 66.3 & 67.0 & 67.5\\
                 \textbf{D7P}& 52.3 & 53.4 & 55.9 & 58.6\\
                 SON& 81.5 & 76.8 & 86.3 & 84.0\\
                 MSK& 52.0 & 48.8 & 56.6 & 55.8\\
                 UDA& 53.9 & 49.5 & 56.0 & 56.0\\
                 \textbf{PH2}& 76.7 & 78.3 & 81.7 & 85.0\\
                 \hline
                 Avg& 63.7 & 62.2 & 67.3 & 67.8\\

                \hline
                
                \hline
                \end{tabular}
                
    \end{minipage}
    \hspace{1mm}
    \begin{minipage}[l]{0.3\textwidth}
    %1
        \centering
        \setlength{\tabcolsep}{1.5mm}
        \setlength{\abovecaptionskip}{0.2cm}
        \setlength{\belowcaptionskip}{0.2cm}
        \footnotesize
            \makeatletter\def\@captype{table}\makeatother

            \caption*{(h) Meta-training: \textbf{SON, MSK}}
            
                % \begin{tabular}{c c c c c c c c}
                \begin{tabular}{c c c c c}
                \hline
                
                \hline
                 & fe & dq & f\&d & full \\
                 \hline
                 
                 \hline
                 HAM& 68.1 & 68.3 & 67.9 & 67.7\\
                 D7P& 53.2 & 50.5 & 55.5 & 56.1\\
                \textbf{SON}& 86.1 & 88.8 & 86.9 & 88.8\\
                 \textbf{MSK}& 55.2 & 56.6 & 59.5 & 60.1\\
                 UDA& 54.9 & 56.0 & 56.6 & 59.3\\
                 PH2& 75.0 & 78.3 & 80.0 & 85.0\\
                 \hline
                 Avg& 65.4 & 66.4 & 67.7 & 69.5\\
                
                \hline
                
                \hline
                \end{tabular}
                            
    \end{minipage}
    \hspace{1mm}
    \begin{minipage}[l]{0.3\textwidth}
    %1
        \centering
        \setlength{\tabcolsep}{1.5mm}
        \setlength{\abovecaptionskip}{0.2cm}
        \setlength{\belowcaptionskip}{0.2cm}
        \footnotesize
            \makeatletter\def\@captype{table}\makeatother
            \caption*{(i) Meta-training: \textbf{SON, UDA}}
            \begin{tabular}{c c c c c}
                \hline
                
                \hline
                 & fe & dq & f\&d & full \\
                 \hline
                 
                 \hline
                 HAM& 68.4 & 68.5 & 67.4 & 67.7\\
                 \textbf{D7P}& 55.9 & 54.6 & 54.8 & 59.2\\
                 \textbf{SON}& 89.3 & 88.7 & 91.3 & 96.0\\
                 MSK& 56.6 & 55.8 & 55.2 & 59.5\\
                 UDA& 52.2 & 57.7 & 55.5 & 56.6\\
                 PH2& 81.7 & 83.3 & 91.7 & 85.0\\
                 \hline
                 Avg& 67.4 & 68.1 & 69.3 & 70.7\\
                
                \hline
                
                \hline
                \end{tabular}
                
    \end{minipage}

\end{minipage}
\vspace{5pt}

The bar plot in Figure~\ref{fig10} shows the accuracy of CLKM(fe), CLKM(f\&d), and CLKM(full) on different meta-testing datasets.
As the performance of the model on target domains improves, the performance on the source domain experiences slight degradation.
However, we think this degeneration on source is insignificant compared to the promotion on targets.
Compared with CLKM(fe), when the domain-quantizer is added, the performance of the model on meta-testing datasets almost improves, meanwhile, the performance on meta-training datasets is maintained.
Finally, the self-adaptive kernel helps to measure the gap across the source and target lesion domains accurately, making the learned knowledge transfer better, and thus the overall accuracy increases.
For the SON domain, the accuracy of the model can easily reach a high level on it due to the domain containing only one diagnostic category.
However, the learned self-adaptive kernel still improves the performance, as illustrated in Figure~\ref{fig10}(c) and Figure~\ref{fig10}(d).

We briefly analyzed the impact of data distribution on model performance, 
and the results show that discrete adaptation methods are not suitable for CLDA.
In general, these methods slightly improve performance compared with no adaptation.
However, the traditional discrete adaptive method sometimes will degrade the performance of the model, as shown in Table 2.
Besides, EWC, the traditional way of overcoming catastrophic forgetting, sometimes also hurts the performance of the model.
In contrast to the above methods extremely, the method CLKM our proposed harness the consecutive lesion domains accurately. 
Moreover, for alleviating catastrophic forgetting, it is a more flexible and general approach to let the model itself learn what to retain during adaptation.

\textbf{Meta-representation visualization.}
To further analyze the transfer of knowledge in representation space, we visualize it learned by different methods via $t$-SNE in Figure~\ref{fig:fig6}, Figure~\ref{fig:fig7}, and Figure~\ref{fig:fig8}.
Different colors and shapes scattered dots in the figure represent the data sampled from diverse lesion domains. 
As shown in these $t$-SNE figures, CLKM can align the source and multiple target domain in feature space properly.
Although CLKM can't align the representation feature well when we select D7P, UDA, and SONIC as meta-training datasets, its performance still surpasses the other two methods.
Figure~\ref{fig:fig7} and Figure~\ref{fig:fig8} validate the effectiveness of CLKM.
\begin{figure}[!t]
\centering
\subfigcapskip = -8pt
\subfigure[meta-training: UDA, PH2]{
\begin{minipage}[t]{0.33\linewidth}
\centering
\includegraphics[width=2.35in]{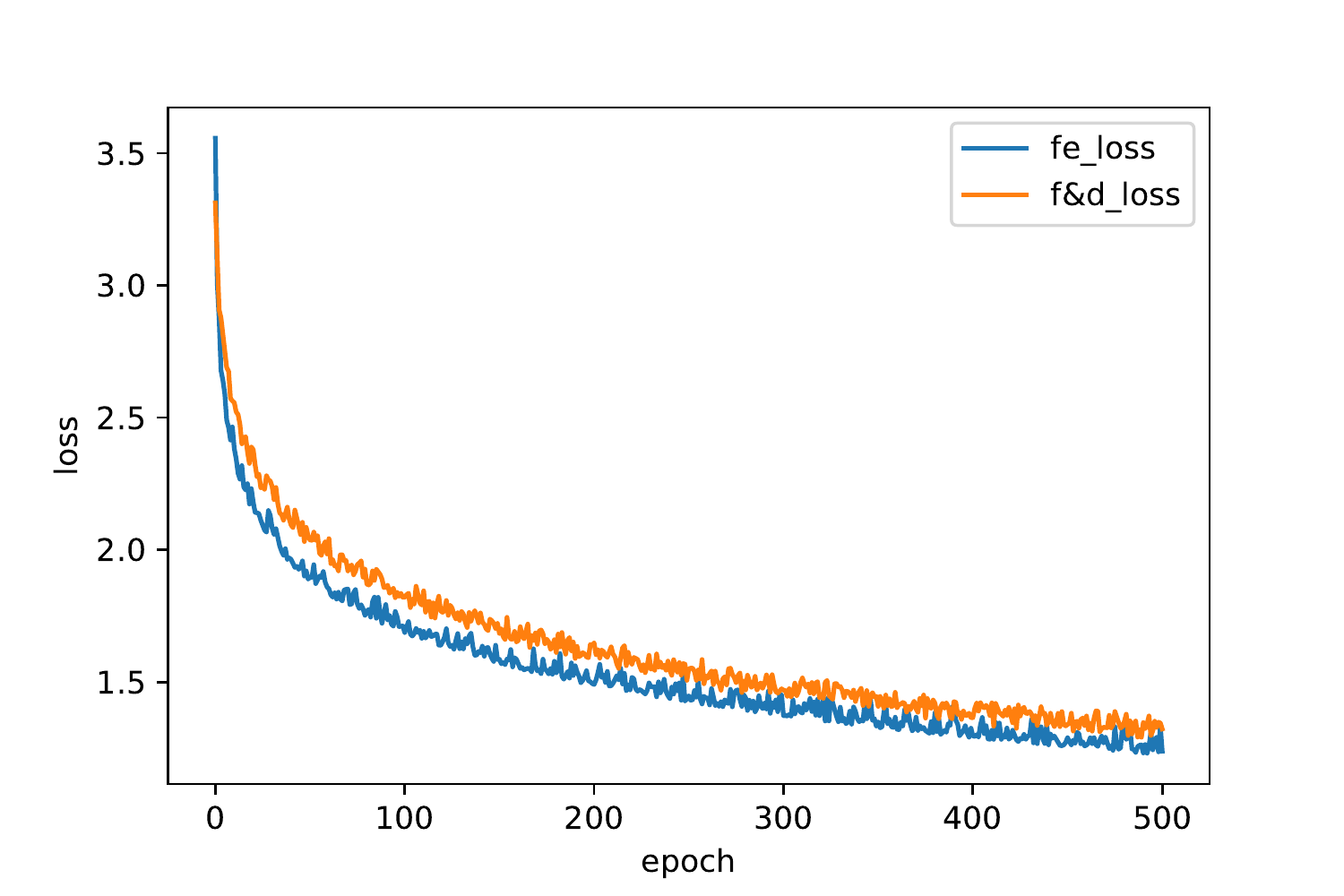}
\end{minipage}%
}%
\subfigure[meta-training: MSK, PH2]{
\begin{minipage}[t]{0.33\linewidth}
\centering
\includegraphics[width=2.35in]{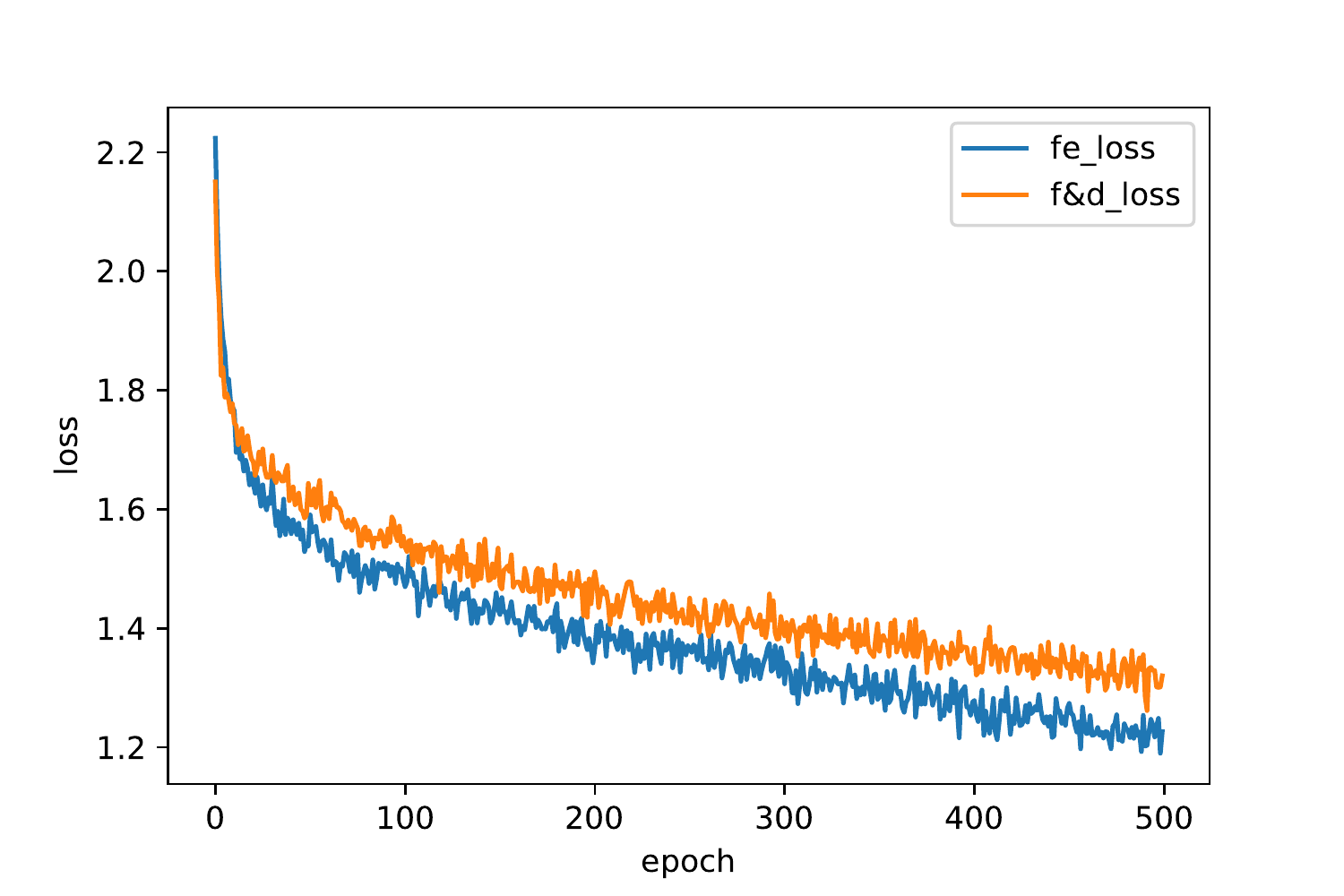}
\end{minipage}%
}%
\subfigure[meta-training: D7P, MSK]{
\begin{minipage}[t]{0.33\linewidth}
\centering
\includegraphics[width=2.35in]{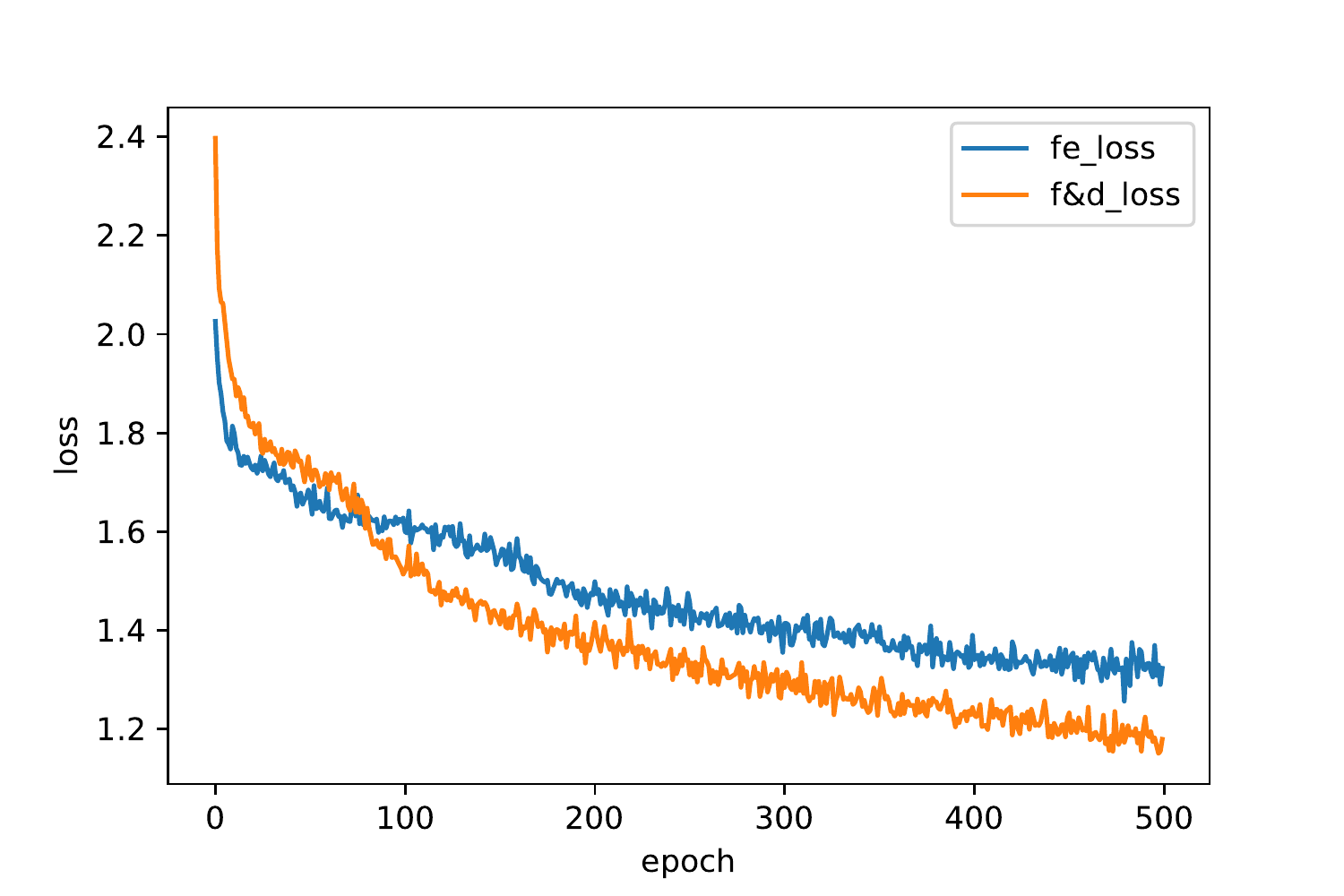}
\end{minipage}
}%
\vspace{-0.65cm}
\subfigure[meta-training: MSK, UDA]{
\begin{minipage}[t]{0.33\linewidth}
\centering
\includegraphics[width=2.35in]{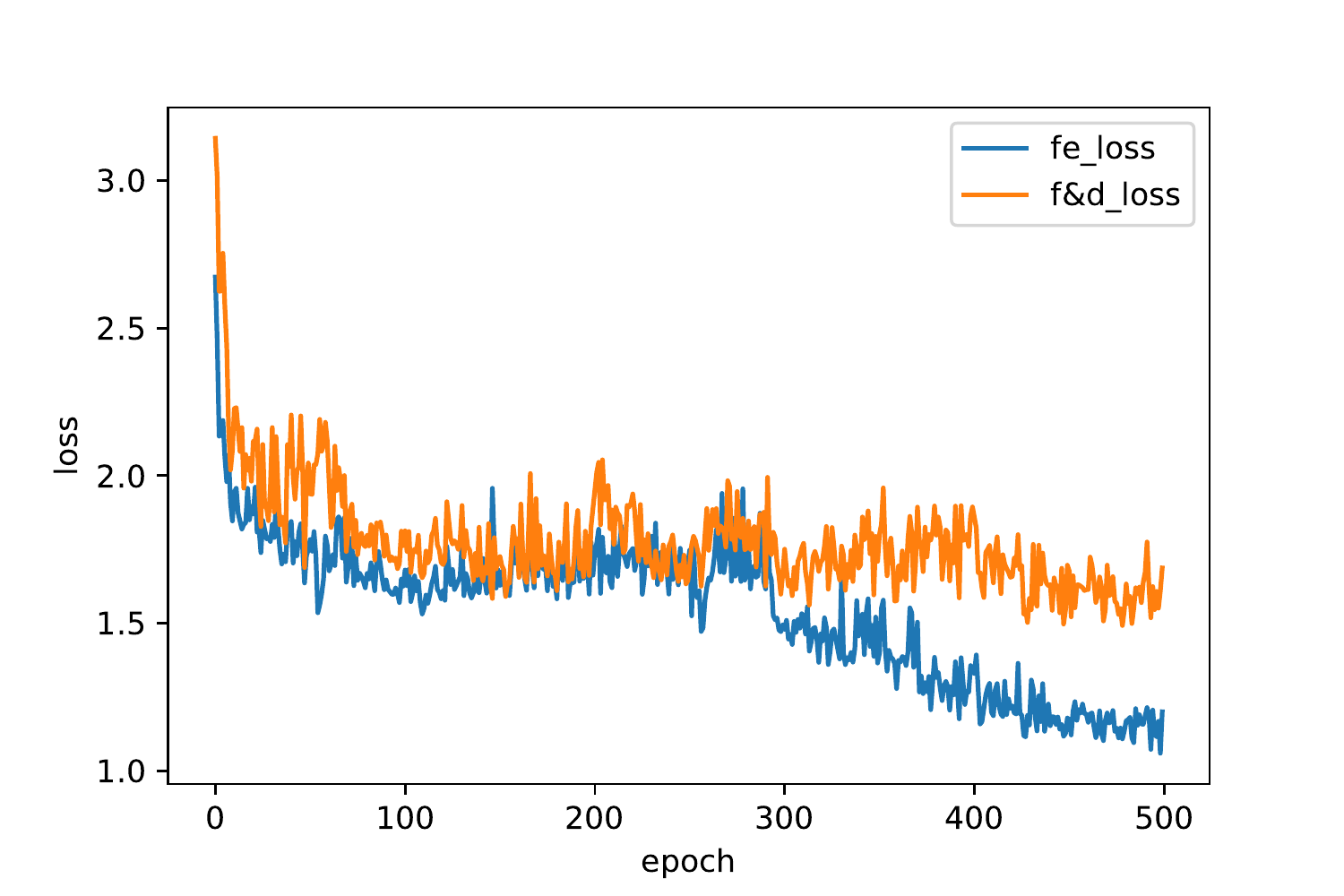}
\end{minipage}%
}%
\subfigure[meta-training: SON, PH2]{
\begin{minipage}[t]{0.33\linewidth}
\centering
\includegraphics[width=2.35in]{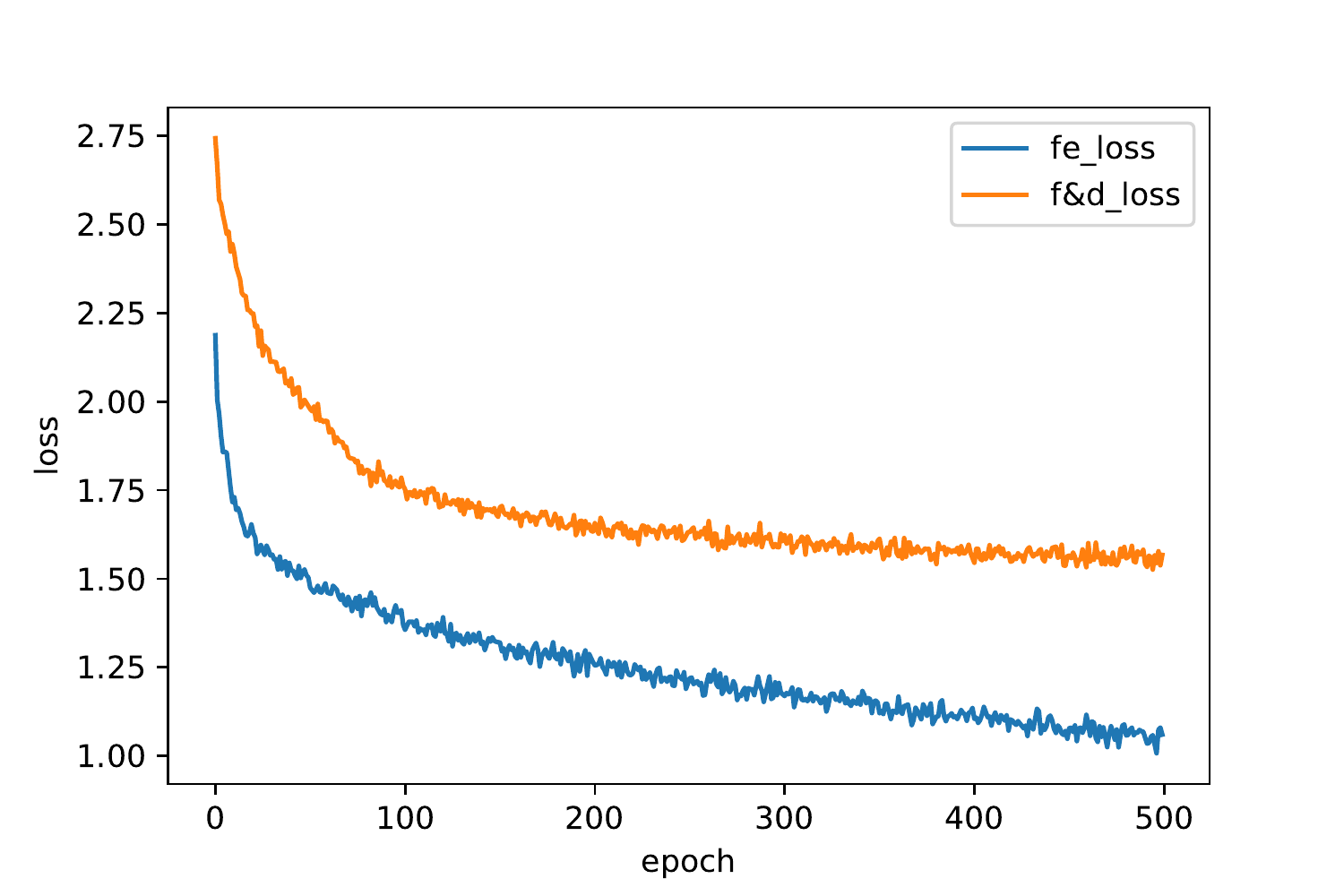}
\end{minipage}%
}%
\subfigure[meta-training: D7P, SON]{
\begin{minipage}[t]{0.33\linewidth}
\centering
\includegraphics[width=2.35in]{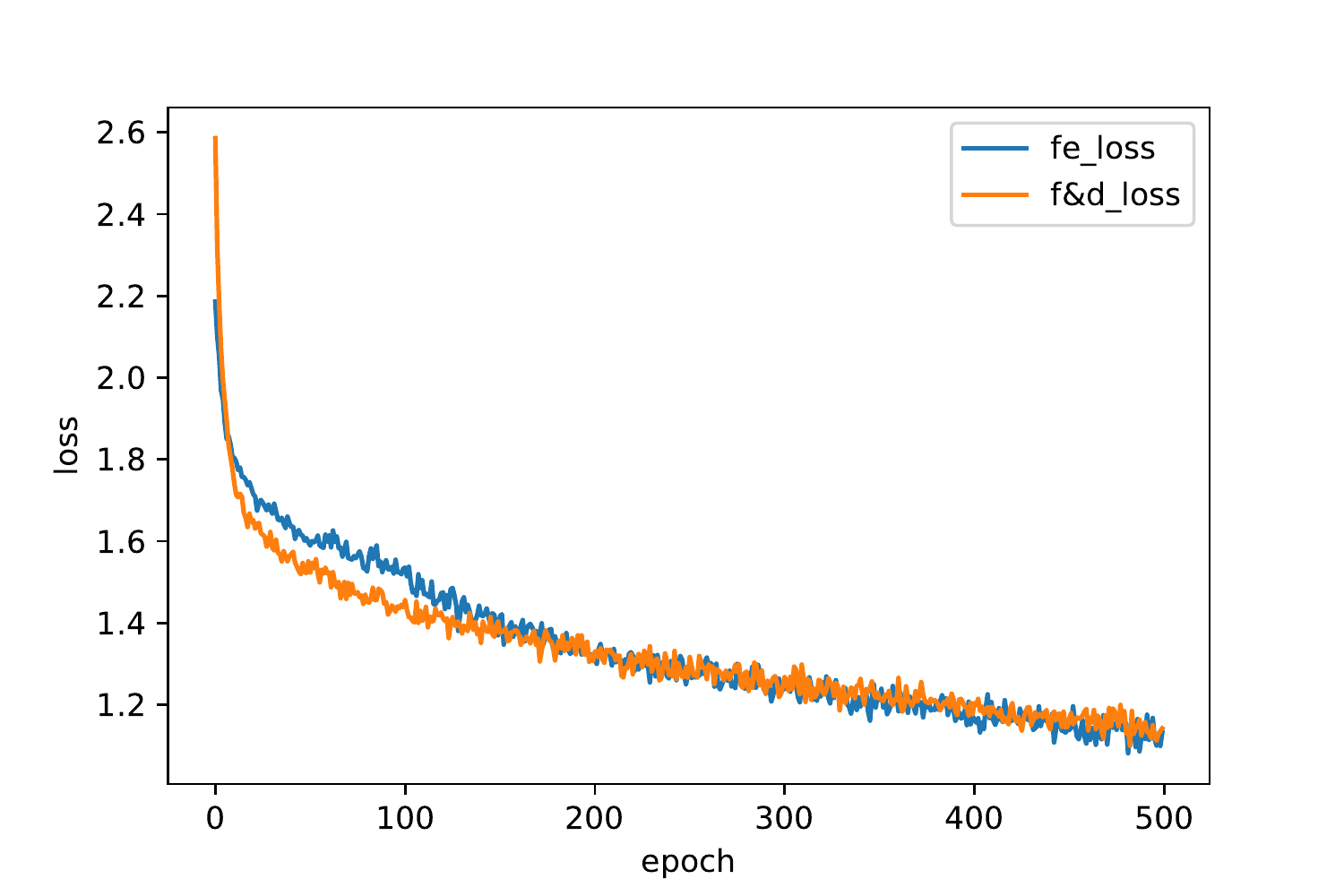}
\end{minipage}
}%
\vspace{-0.65cm}
\subfigure[meta-training: D7P, PH2]{
\begin{minipage}[t]{0.33\linewidth}
\centering
\includegraphics[width=2.35in]{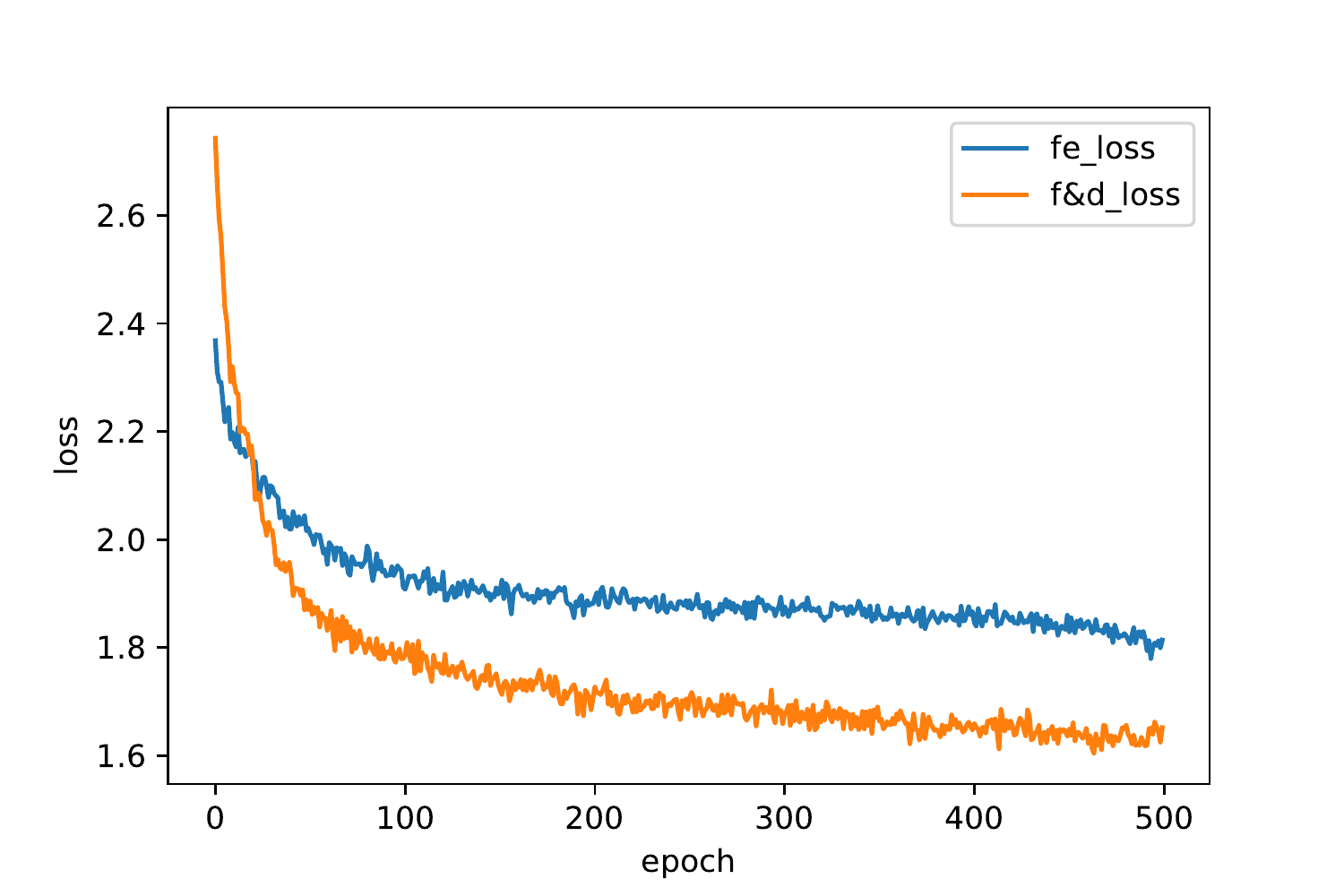}
\end{minipage}%
}%
\subfigure[meta-training: SON, MSK]{
\begin{minipage}[t]{0.33\linewidth}
\centering
\includegraphics[width=2.35in]{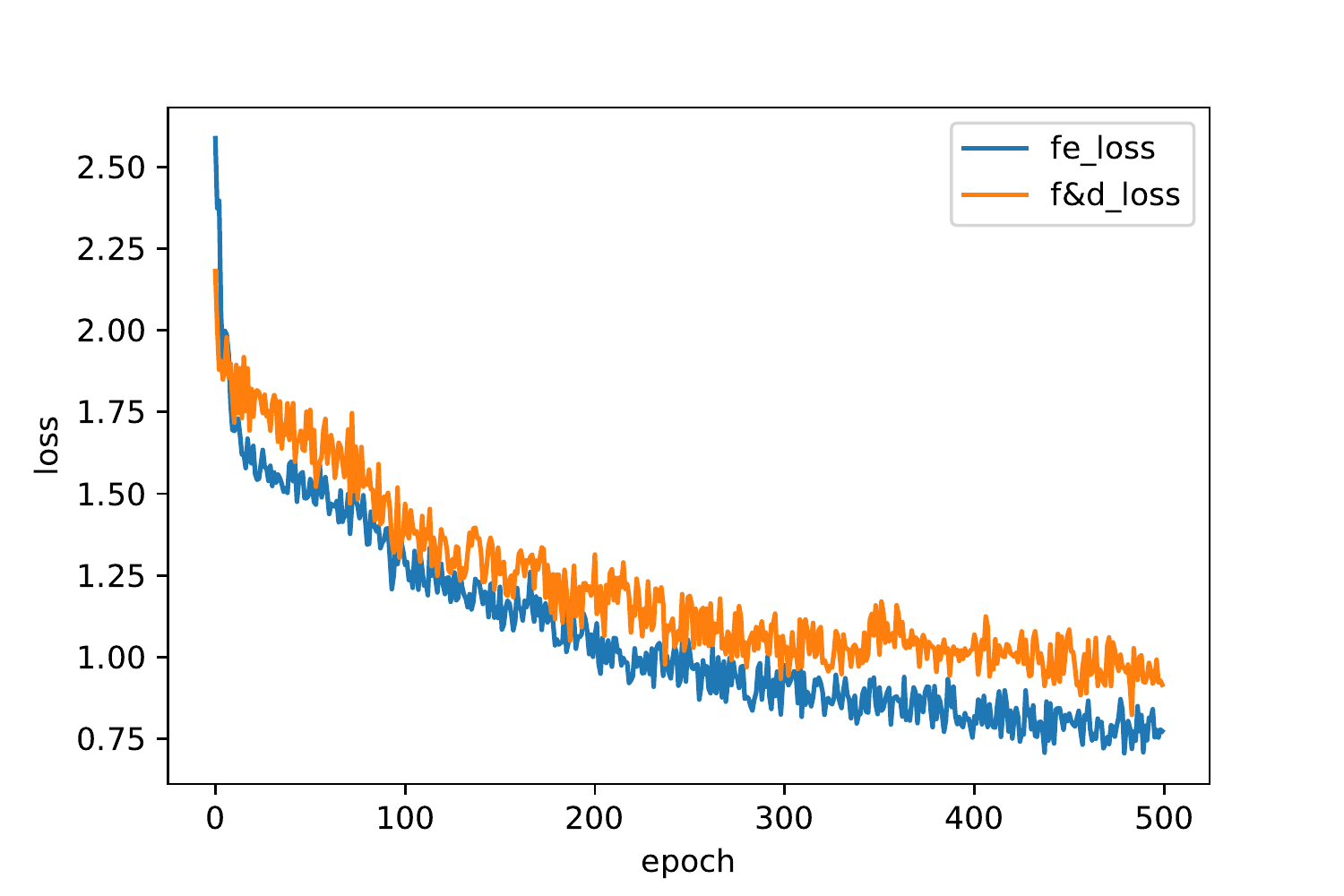}
\end{minipage}%
}%
\subfigure[meta-training: SON, UDA]{
\begin{minipage}[t]{0.33\linewidth}
\centering
\includegraphics[width=2.35in]{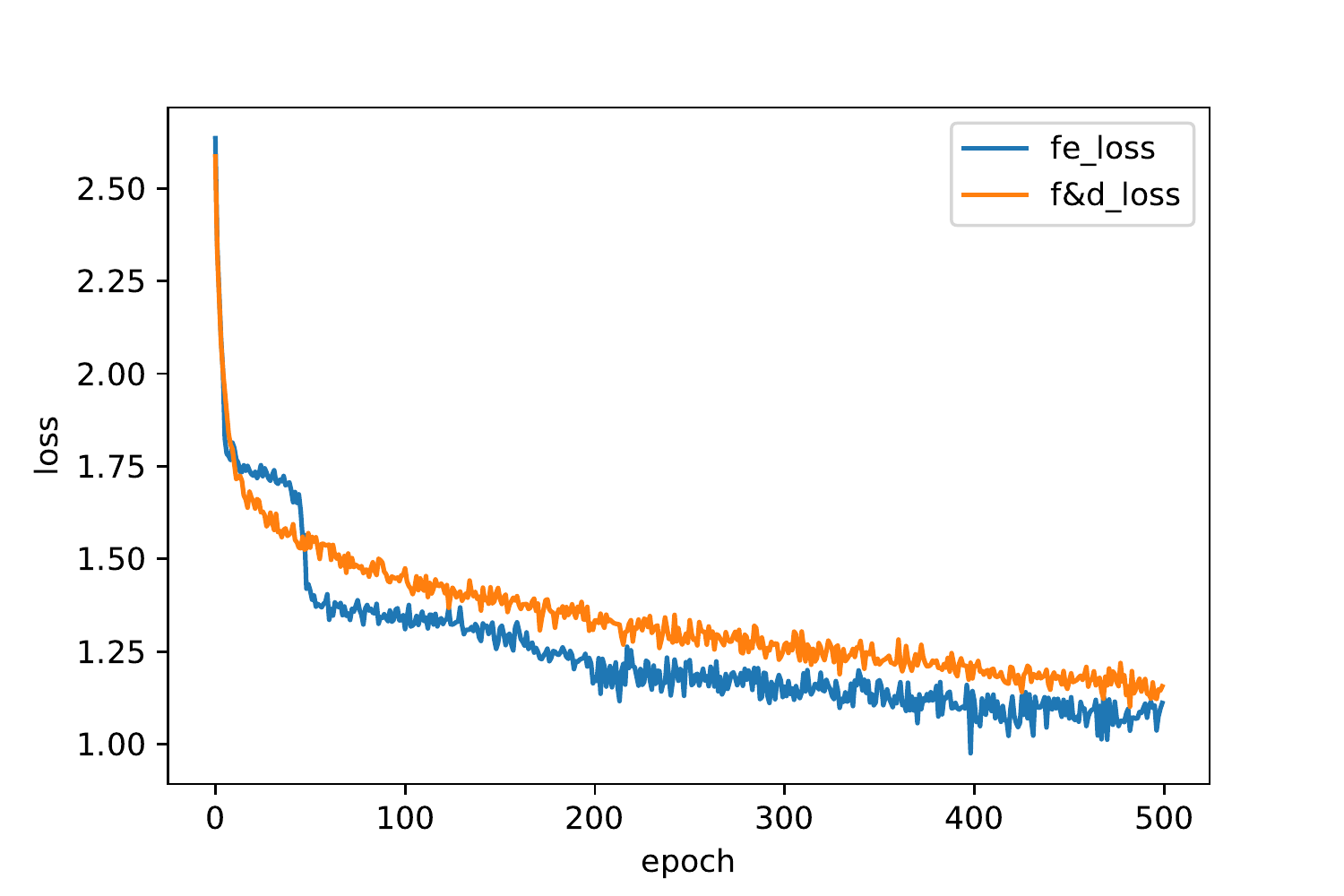}
\end{minipage}
}%
\setlength{\abovecaptionskip}{-0.5pt}
\caption{The training loss curves on different meta-training datasets.}
\label{fig:fig12}
\end{figure}

\textbf{Ablation Study.} We first performed CLKM with only a feature-extractor (\emph{i.e.}, CLKM(fe)), showing the performance without overcoming catastrophic forgetting. 
Then we added a domain-quantizer (\emph{i.e.}, CLKM(f\&d)) into our model, and the result tells us that the model accuracy on perviously learned datasets is maintained.
In Figure~\ref{fig:fig9}, we plotted the training loss curves for six meta-testing datasets scenarios. 
As shown in the figure, the loss increased slightly when the anti-forgetting term added.
Finally, we add the learned self-adaptive kernel (\emph{i.e.}, CLKM(full)), which further help mitigate the gap across the source and target lesion domains in the feature space. 
Detailed results are shown in the Table~\ref{tab1} and Table~\ref{tab2}-Table~\ref{tab7}.

\textbf{Balance adaptation and anti-forgetting.}
Review the algorithm~\ref{alg1}, the loss function Eq.(\ref{eq16}) in SAP, including a penalty term to against catastrophic forgetting, $\lambda \mathcal{L}_m$.
The hyper-parameter $\lambda$ plays a moderating role between anti-forgetting and domain adaptation.
Intuitively, the larger $\lambda$, the greater penalty for the forgetting. 
To explore the effects of different levels of $\lambda$, we set four levels from 0.2 to 0.8 and visualized the result in Figure~\ref{fig11}.
Without loss of generality, we choose D7P and MSK, D7P and UDA, MSK and SON as meta-testing datasets to build three cases respectively.
As shown in Figure~\ref{fig11}, the CLKM is not sensitive to the coefficient of the penalty term. 
It means that, for CLKM, the balance between alleviating forgetting and domain adaptation isn't hard to achieve.

\begin{figure}[!t]
\centering
\subfigcapskip = -10pt
\subfigure[meta-training: UDA, PH2]{
\begin{minipage}[t]{0.32\linewidth}
\centering
\includegraphics[width=2.35in]{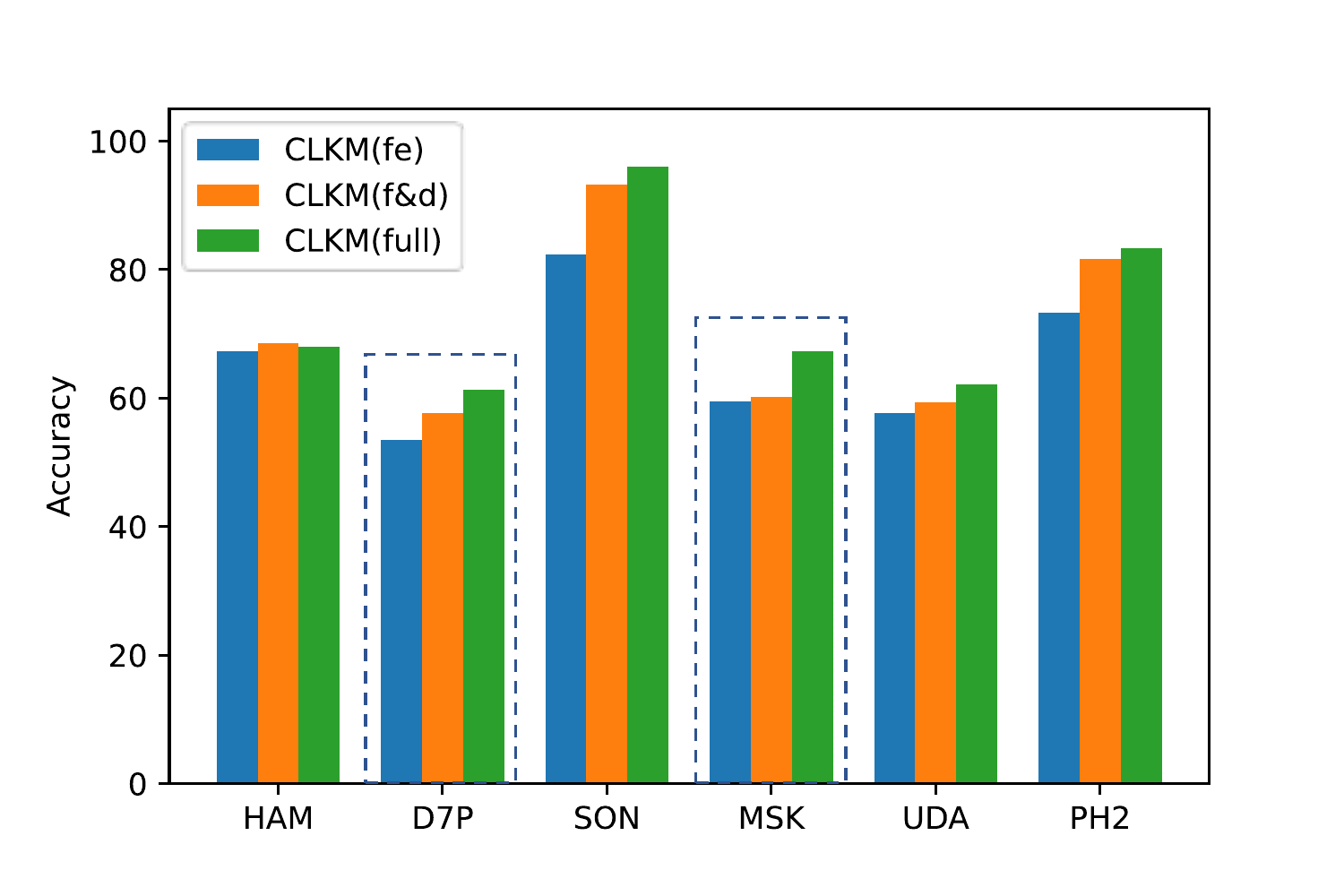}
\end{minipage}%
}
\subfigure[meta-training: MSK, PH2]{
\begin{minipage}[t]{0.32\linewidth}
\centering
\includegraphics[width=2.35in]{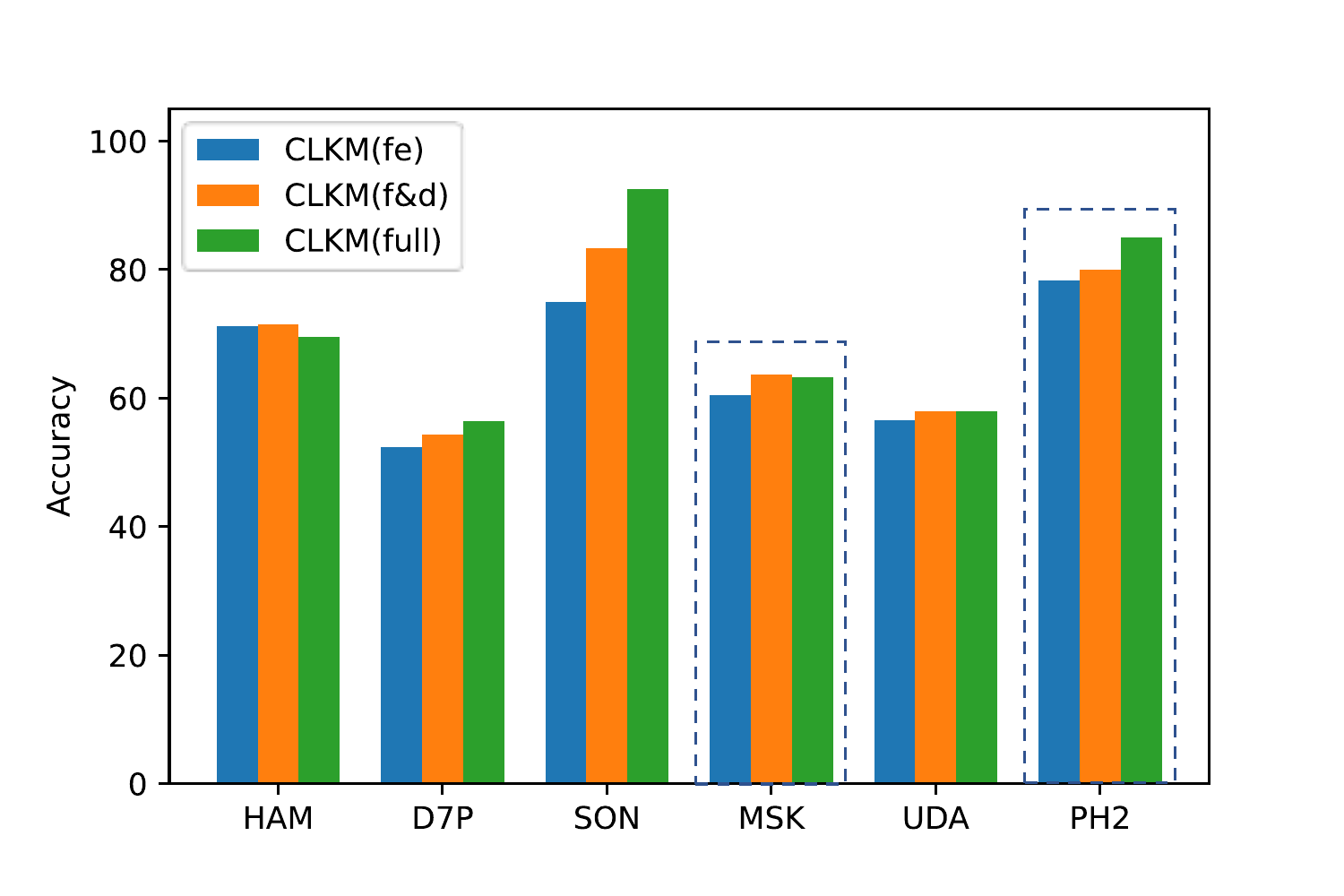}
\end{minipage}%
}
\subfigure[meta-training: D7P, MSK]{
\begin{minipage}[t]{0.32\linewidth}
\centering
\includegraphics[width=2.35in]{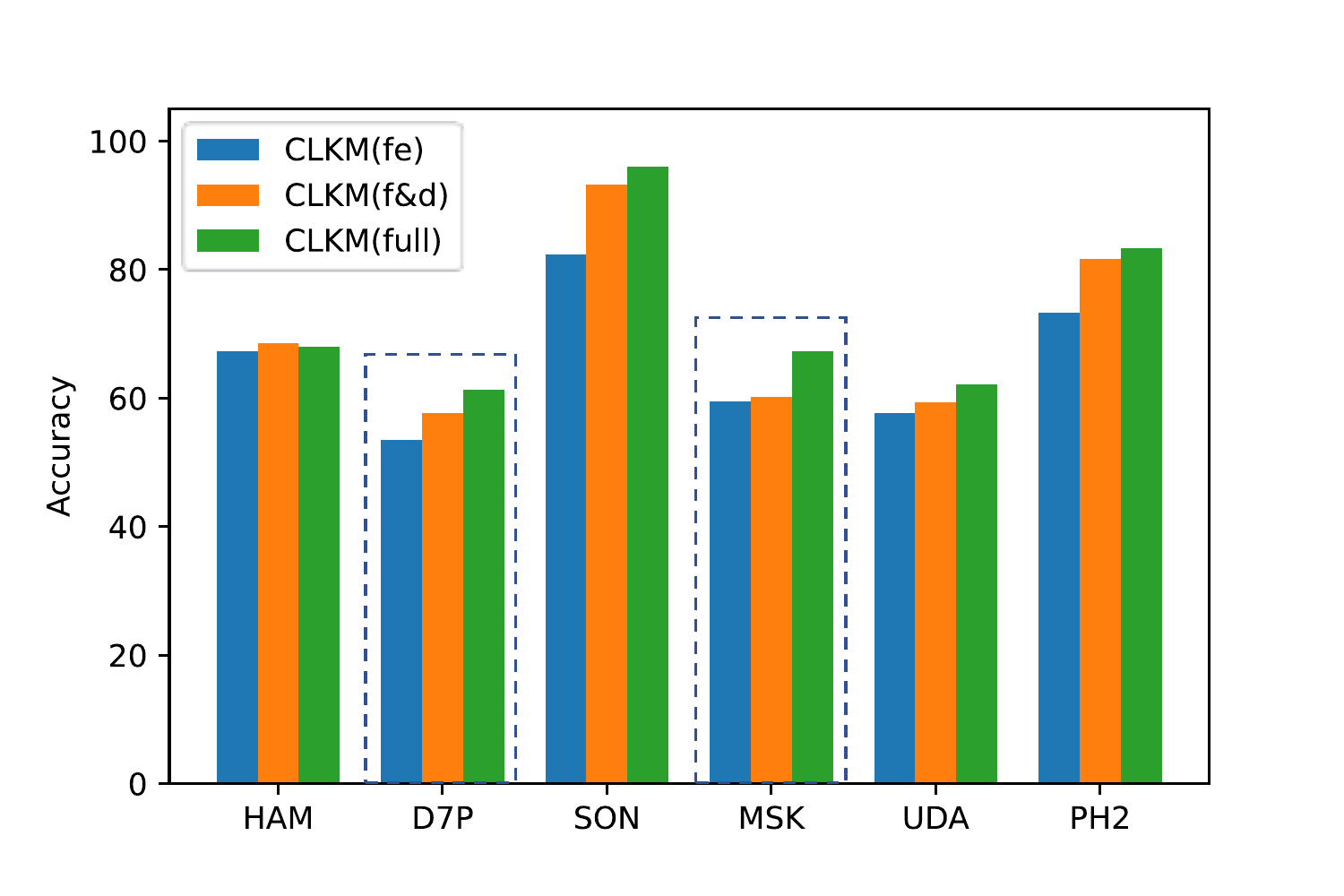}
\end{minipage}%
}%
\vspace{-0.2cm}
\subfigure[meta-training: MSK, UDA]{
\begin{minipage}[t]{0.32\linewidth}
\centering
\includegraphics[width=2.35in]{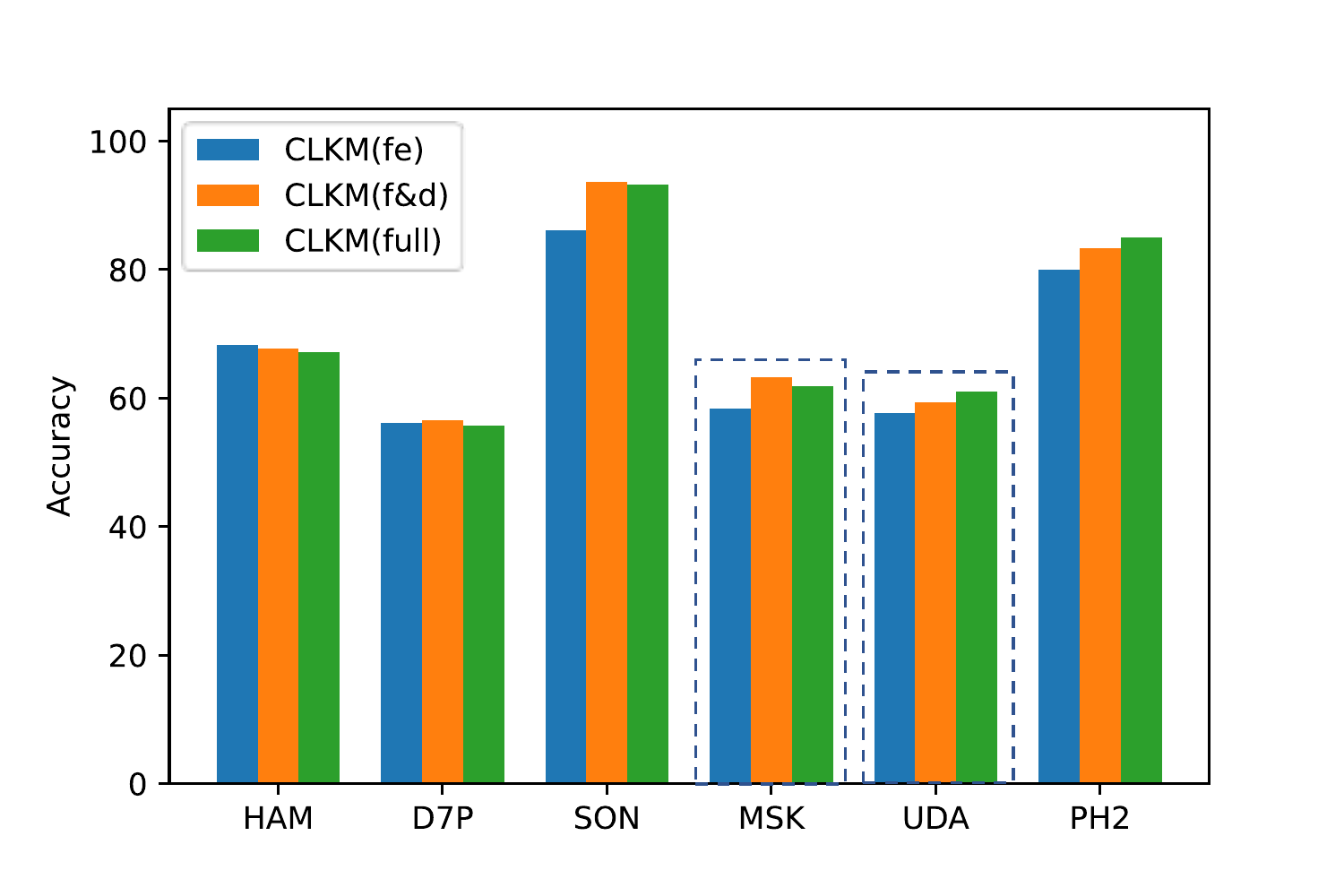}
\end{minipage}%
}
\subfigure[meta-testing: SON, PH2]{
\begin{minipage}[t]{0.32\linewidth}
\centering
\includegraphics[width=2.35in]{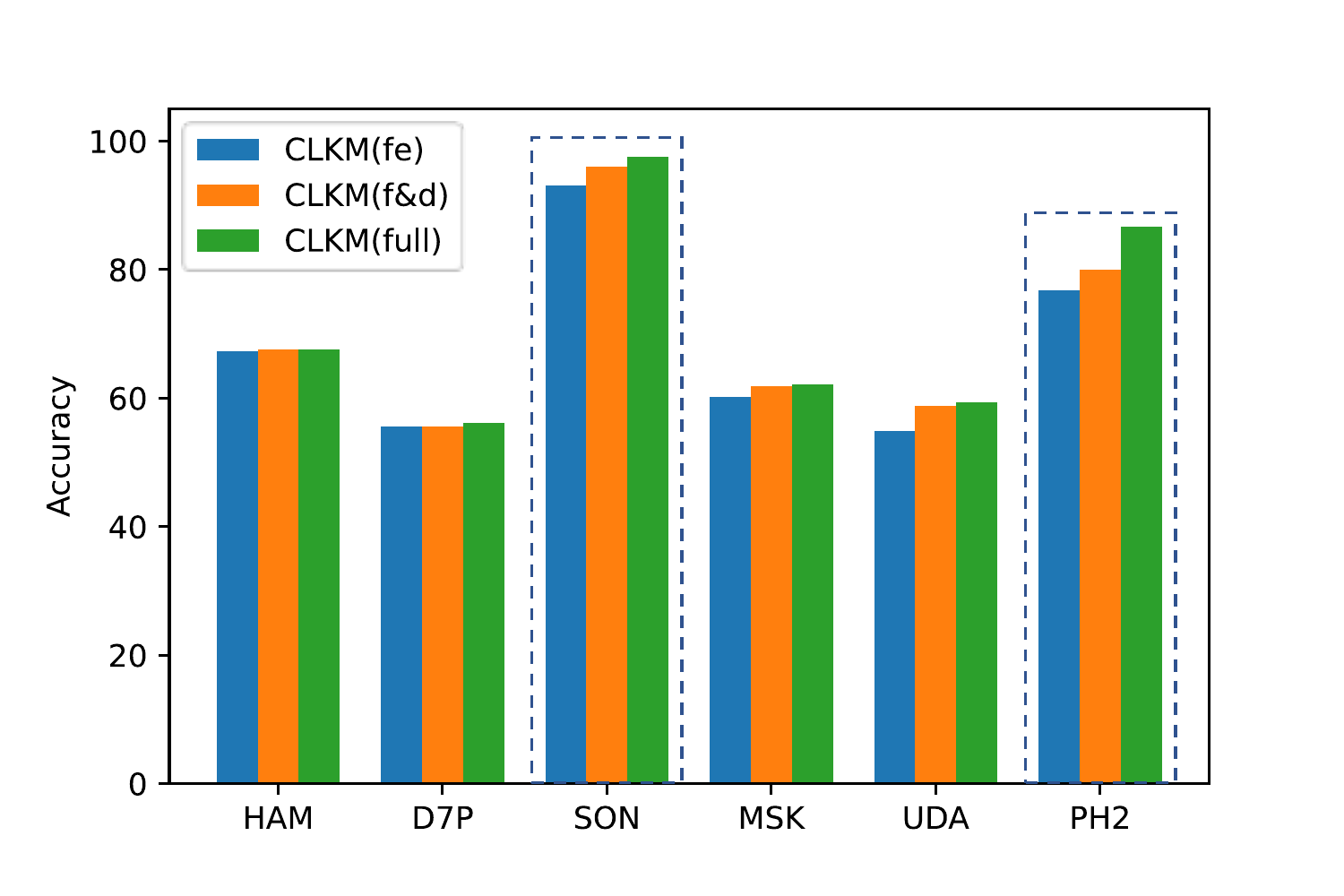}
\end{minipage}%
}
\subfigure[meta-training: D7P, SON]{
\begin{minipage}[t]{0.32\linewidth}
\centering
\includegraphics[width=2.35in]{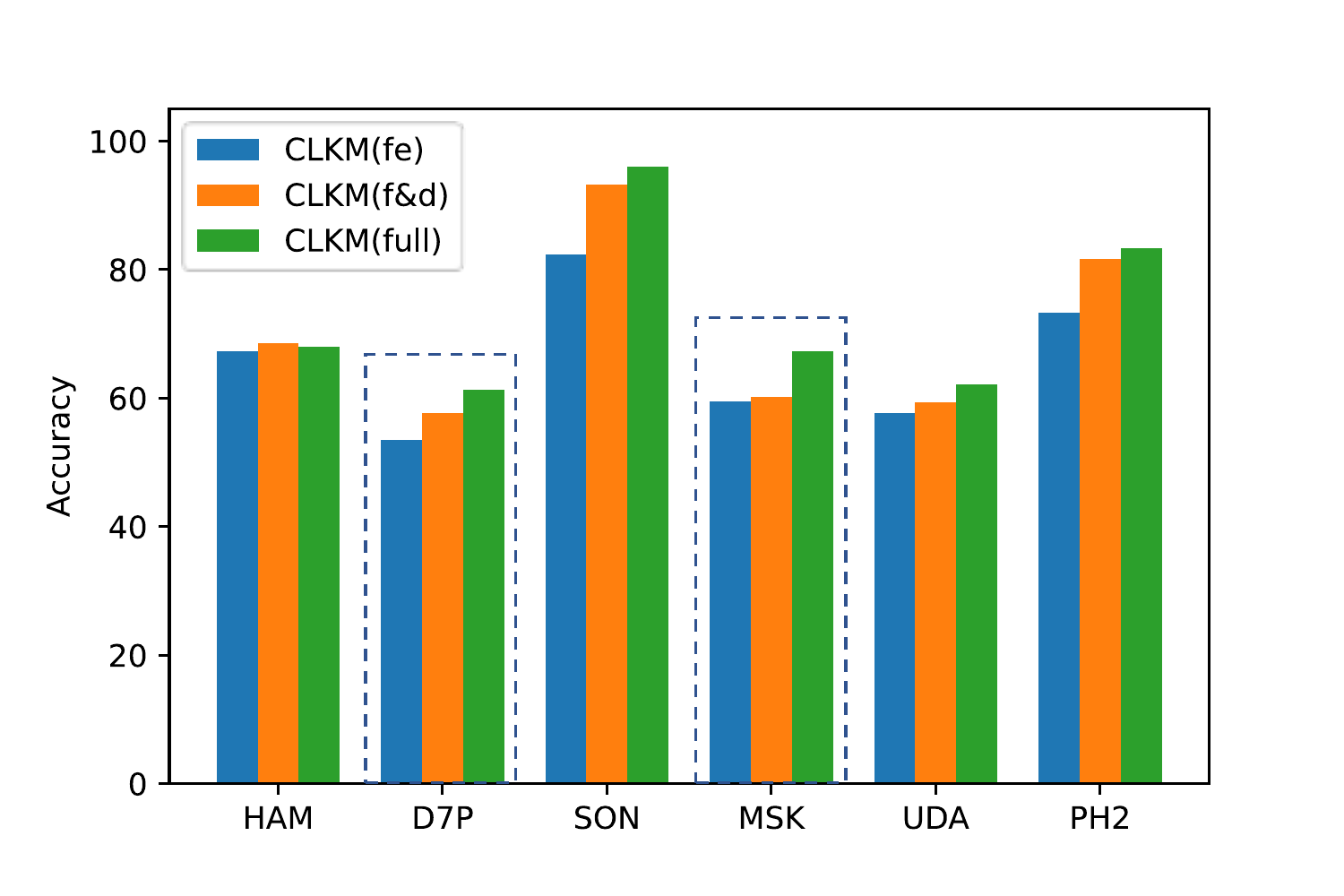}
\end{minipage}%
}%
\vspace{-0.2cm}
\subfigure[meta-training: D7P, PH2]{
\begin{minipage}[t]{0.32\linewidth}
\centering
\includegraphics[width=2.35in]{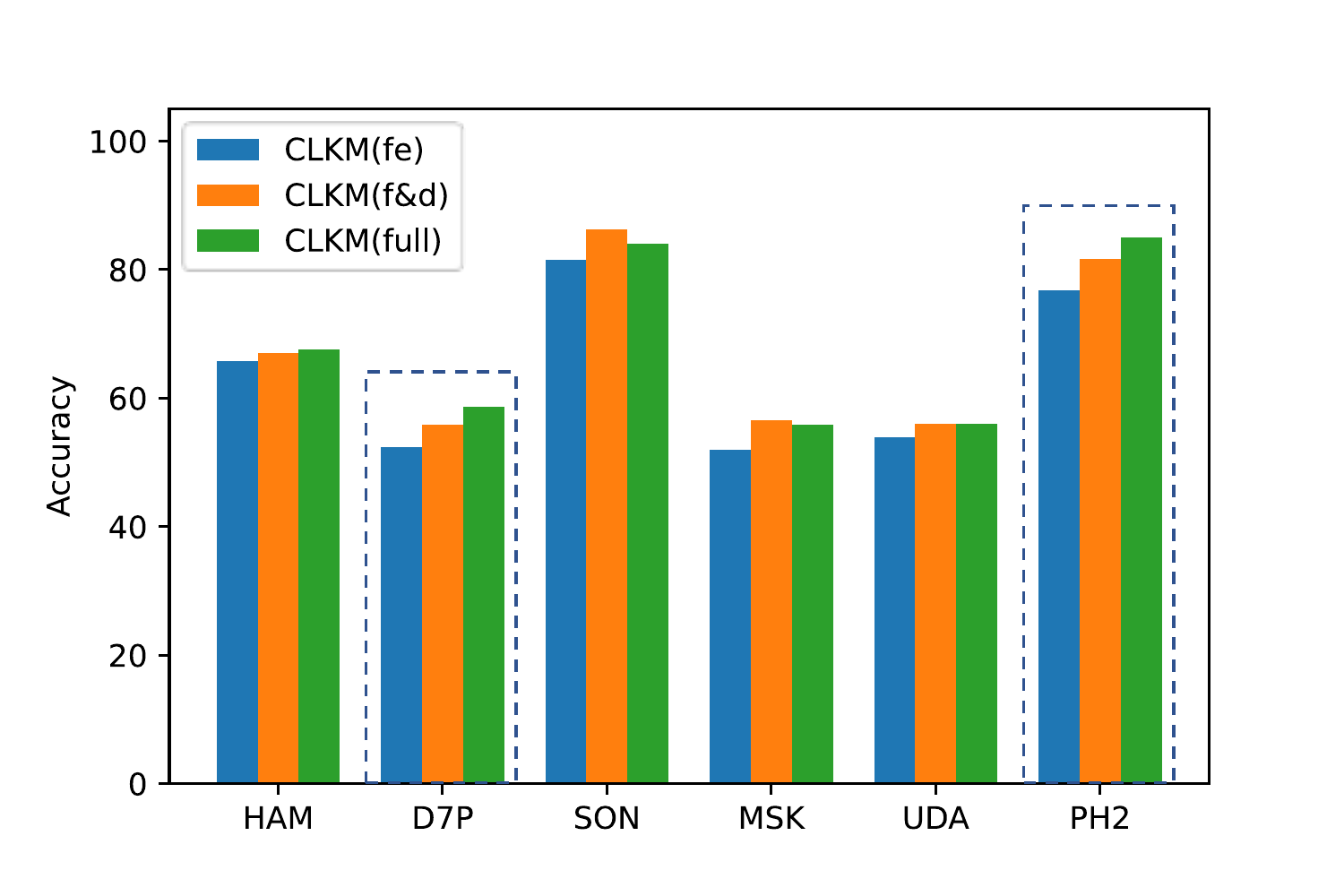}
\end{minipage}%
}
\subfigure[meta-testing: SON, MSK]{
\begin{minipage}[t]{0.32\linewidth}
\centering
\includegraphics[width=2.35in]{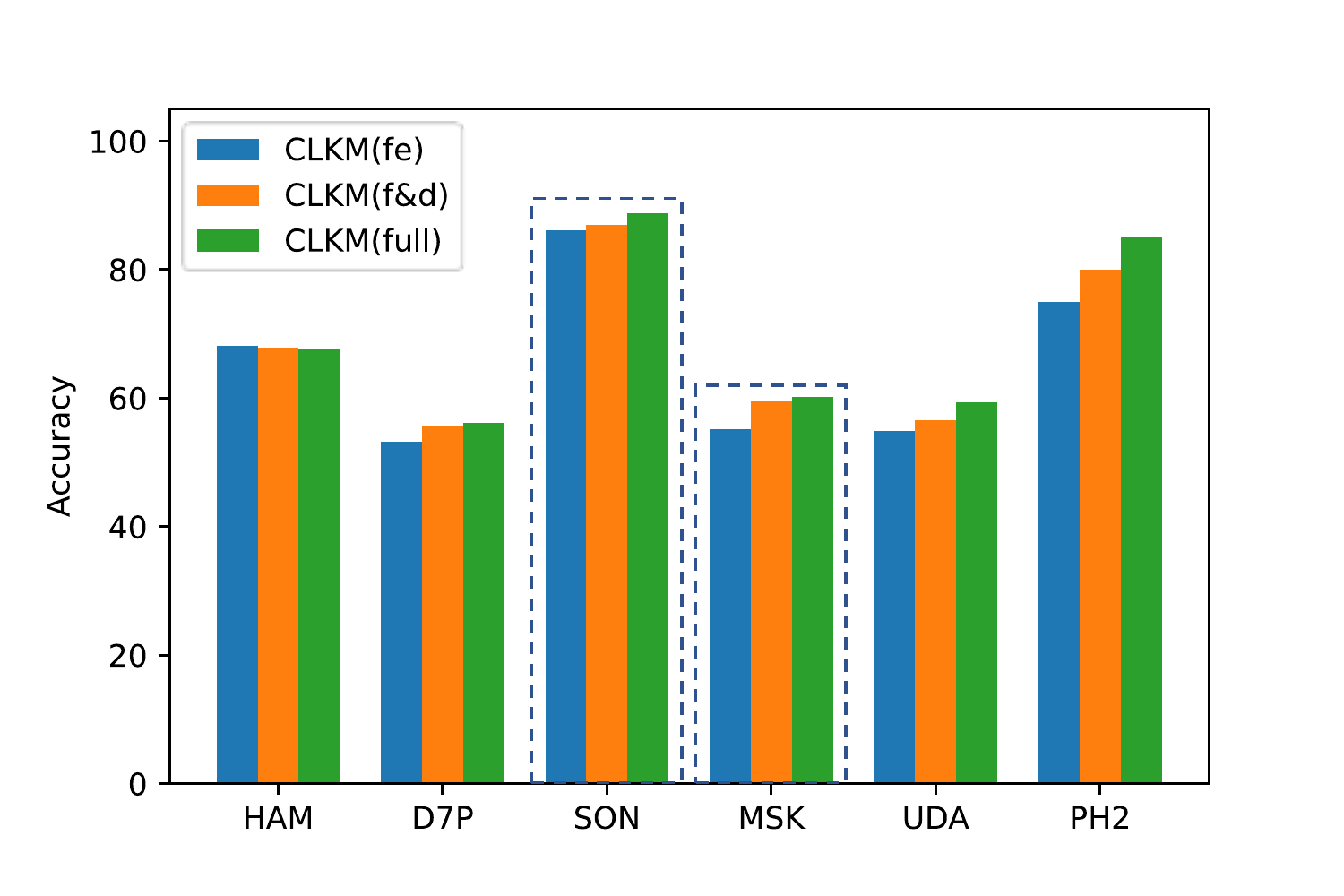}
\end{minipage}%
}
\subfigure[meta-training: SON, UDA]{
\begin{minipage}[t]{0.32\linewidth}
\centering
\includegraphics[width=2.35in]{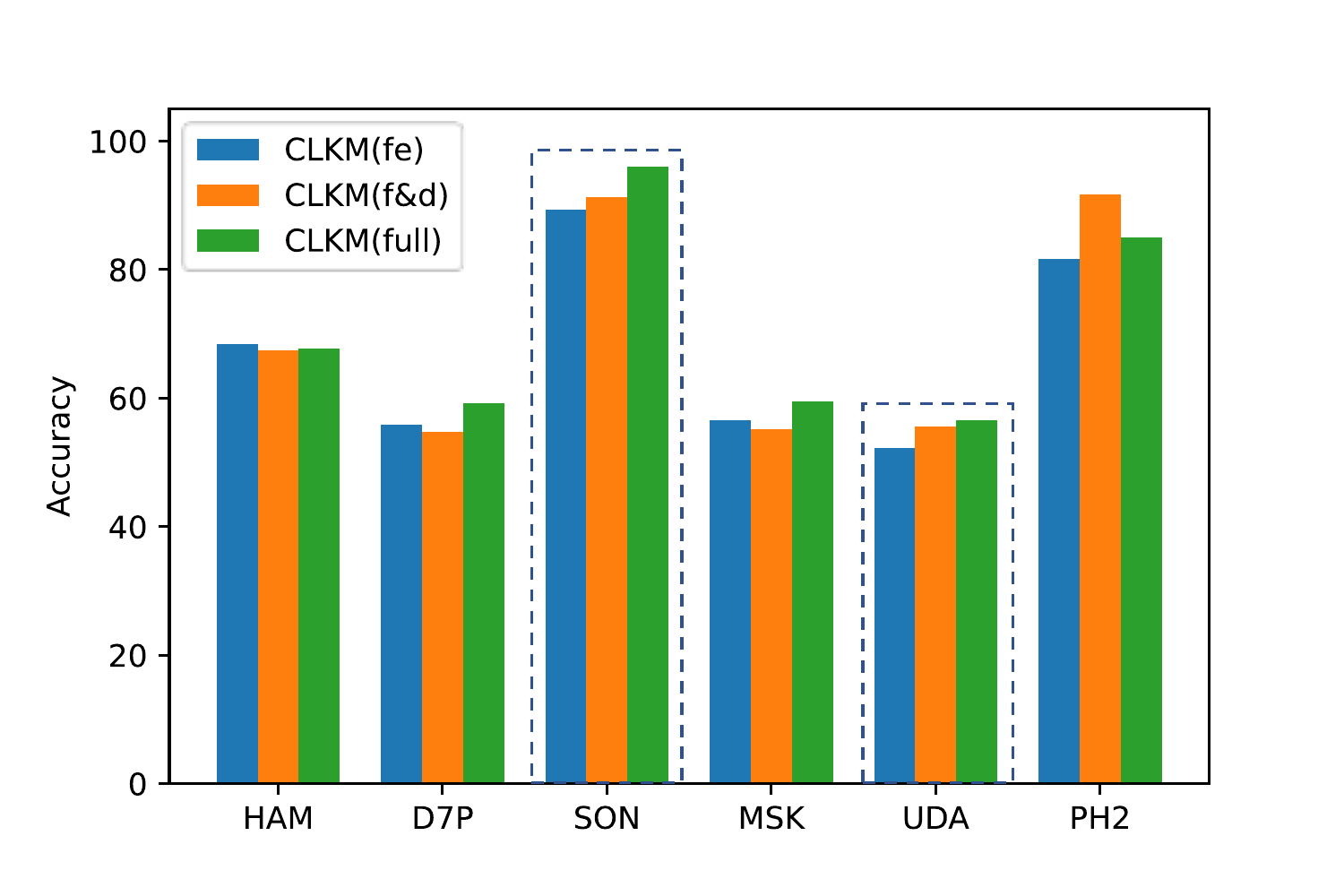}
\end{minipage}%
}%
\setlength{\abovecaptionskip}{-0.5pt}
\caption{The accuracy of CLKM(fe), CLKM(f\&e), and CLKM(full) on different meta-training datasets.}
\label{fig13}
\end{figure}

\textbf{Robustness analysis.}
In the above experiments, we adapt the diagnosis model from the source lesion domain to three target lesion domains in the meta-training phase, and we set two novel target lesion domains as meta-testing datasets to test the model performance.
In the clinical environment, the number of target lesion domains that can be accessed isn't static, and to further analyze CLKM robustness, we reduce the target lesion domain in the meta-training phase by one, and set three target lesion domains as meta-testing datasets.
We summarized the results in Table 8, and compared with the previous, the model performance slightly decrease.
Analogously, when we add the domain-quantizer (\emph{i.e.} f\&d), the model maintained the accuracy on previously learned datasets, and thus improve the overall performance.
Further, the learned self-adaptive kernel help to align the semantic feature across source and target lesion domains. 
We visualized the result in Figure~\ref{fig13}.
Besides, we plotted the training loss curves in Figure~\ref{fig:fig12}.
Not all the loss curves are raised by adding the penalty term, for example, when the D7P is selected as meta-training dataset, the loss curve of only feature-extractor (\emph{i.e.}, fe) always lightly higher than f\&d.
Compared to other lesion domains, the distribution of D7P is closest to the source lesion domain, which perhaps helps the model further harness the characteristics of the source lesion domain while preserving the utility knowledge of D7P.

\section{Conclusion}
We have proposed a meta-adaptation framework named Consecutive lesion Adaptive Meta-Learning (CLKM) to continually adapt to new unlabeled target lesion domains without catastrophic forgetting.
Specifically, CLKM mainly consists of two phases: Semantic Adaptation Phase (SAP) and Representation Adaptation Phase (RAP).
We introduce self-adaptive kernel in SAP to flexibly measure discrepancy across the source lesion distribution and consecutive target lesion distributions.
Besides, the penalty term from the domain-quantizer is added to the loss function for alleviating catastrophic forgetting during semantic adaptation.
In the RAP, the feature-extractor is optimized to align the transferable learned representation features across the source lesion domain and multiple target lesion domains.
Meanwhile, the domain-quantizer is also optimized to learn reserve more important learned knowledge from the previously learned target lesion domains.
The effectiveness of our CLKM model is justified well via the experiments.
In the future, we plan to extend our method into other unsupervised domain adaptive visual tasks,
such as objection detection and semantic segmentation.

\appendix 
\section{Theoretical analysis}\label{appendix}

We begin by reviewing the maximum mean discrepancy and its use in two-sample tests.
Let $\delta_{ak}$ be the self-adaptive kernel, and assume that kernel is measurable and bounded.
Then the MMD between two distribution $\mathcal{P}^s$ and $\mathcal{P}^t$ is:
\begin{equation}
    d^2(\mathcal{P}^s, \mathcal{P}^t, \delta_{ak}) = \mathbb{E}_{x^s, x^{s'}}[\delta_{ak}(x^s, x^{s'};\Theta_F)] + \mathbb{E}_{x^t,x^{t'}}[\delta_{ak}(x^t, x^{t'};\Theta_F)]-2\mathbb{E}_{x^s,x^t}[\delta_{ak}(x^s,x^t;\Theta_F)],
    \nonumber
\end{equation}
where $x^s, x^{s'}\stackrel{i.i.d}{\sim}\mathcal{P}^s$, and $x^t, x^{t'}\stackrel{i.i.d}{\sim}\mathcal{P}^t$.

Given the sample sets $\mathcal{D}^s = \{\textbf{x}^s_i, \textbf{y}^s_i\}_{i=1}^{N^s} \subset \mathcal{P}^s$ and $\mathcal{D}^t = \{\textbf{x}^t_j\}_{j=1}^{N^t} \subset \mathcal{P}^t$, and suppose $n = N^s = N^t$. The unbiased estimator of $d^2$ is:
\begin{equation}
\begin{split}
    \hat{d}_u^2(\mathcal{D}^s, \mathcal{D}^t, \delta_{ak}) = & \frac{1}{N^s (N^s - 1)} \sum_{i = 1}^{N^s} \sum_{j \neq i}^{N^s}\delta_{ak}(\textbf{x}^s_i, \textbf{x}^s_j;\Theta_F) + \frac{1}{N^t (N^t - 1)} \sum_{i = 1}^{N^t}\sum_{j \neq i}^{N^t}\delta_{ak}(\textbf{x}^t_i, \textbf{x}^t_j;\Theta_F) \\
    & - \frac{2}{N^s N^t}\sum_{i=1}^{N^s}\sum_{j=1}^{N^t}\delta_{ak}(\textbf{x}^s_i, \textbf{x}^t_j;\Theta_F).
\end{split}
    \nonumber
\end{equation}
Suppose $n = N^s = N^t$, we can get a slightly simpler empirical estimator
\begin{equation}
    \hat{d}_u^2(\mathcal{D}^s, \mathcal{D}^t, \delta_{ak}) = \frac{1}{n(n-1)}\sum_{i\neq j}^n \mathcal{M}(u_i,u_j;\Theta_F),
    \nonumber
\end{equation}
where $u_i$ denote the pair $(\textbf{x}_i^s, \textbf{x}_i^t)$, and
\begin{equation}
    \mathcal{M}_{ak}(u_i, u_j;\Theta_F) = \delta_{ak}(\textbf{x}^s_i, \textbf{x}^s_j;\Theta_F) + \delta_{ak}(\textbf{x}^t_i, \textbf{x}^t_j;\Theta_F)-\delta_{ak}(\textbf{x}^s_i, \textbf{x}^t_j;\Theta_F) - \delta_{ak}(\textbf{x}^s_j, \textbf{x}^t_i;\Theta_F).
    \nonumber
\end{equation}
Obviously, the $d_u^2 = \mathbb{E}[\mathcal{M}_{ak}(u_1,u_2;\Theta_F)]$.
The variance of $\hat{d}_u^2$ is 
\begin{equation}
    \mathrm{Var}[\hat{d}_u^2] = \frac{4(n-2)}{n(n-1)}\zeta_1 + \frac{2}{n(n-2)}\zeta_2 = \frac{4}{n}\zeta_1 + \frac{2\zeta_2 - 4\zeta_1}{n(n-1)},
\end{equation}
where 
$$\zeta_1 = \mathbb{E}[\mathcal{M}_{ak}(u_1, u_2;\Theta_F) \mathcal{M}_{ak}(u_1, u_3;\Theta_F)] - \mathbb{E}[\mathcal{M}_{ak}(u_1,u_2;\Theta_F)]^2,$$
$$ \zeta_2 = \mathbb{E}[(\mathcal{M}_{ak}(u_1, u_2;\Theta_F))^2]-\mathbb{E}[\mathcal{M}_{ak}(u_1, u_2;\Theta_F)]^2.$$
Thus as $n \rightarrow \infty$, 
$$n \mathrm{Var}[\hat{d}_u^2] \rightarrow 4\zeta_1 = \sigma^2_{\mathrm{H}_1}$$
\begin{proof}
According to Lemma A in Section 5.2.1 of \cite{serfling2009approximation}, the following equation is hold
\begin{equation}
\begin{split}
\zeta_1 & = \mathrm{Var}_u[\mathbb{E}_{u^{'}}[\delta_{ak}(u,u^{'};\Theta_F)]] \\
& = \mathbb{E}_u[\mathbb{E}_{u^{'}}[\delta_{ak}(u, u^{'};\Theta_F)]]\mathbb{E}_{u^{''}}[\delta_{ak}(u,u^{''};\Theta_F)]-\mathbb{E}_u[\mathbb{E}_{u^{'}}[\delta_{ak}(u, u^{'};\Theta_F)]]^2 \\
& = \mathbb{E}[\mathcal{M}(u_1, u_2;\Theta_F) \mathcal{M}(u_1, u_3;\Theta_F)] - \mathbb{E}[\mathcal{M}(u_1, u_2;\Theta_F)]^2
\end{split}\nonumber
\end{equation}
and
\begin{equation}
    \zeta_2 = \mathrm{Var}_{u,u^{'}}[\mathcal{M}(u,u^{'};\Theta_F)] = \mathbb{E}[(\mathcal{M}(u_1, u_2;\Theta_F))^2] - \mathbb{E}[\mathcal{M}(u_1, u_2;\Theta_F)]^2.
\nonumber
\end{equation} 
\end{proof}

We conduct a hypothesis H$_0:\mathcal{P}^s = \mathcal{P}^t$ and alternative H$_1:\mathcal{P}^s \neq \mathcal{P}^t$.
Then, the $\hat{d}_u^2$ converges in distribution to a Gaussian according to 
\begin{equation}
    n^{\frac{1}{2}}(d^2-\hat{d}^2_u)\stackrel{\mathbf{D}}{\rightarrow} \mathcal{N}(0, \sigma_{\rm{H_1}}^2).
    \nonumber
\end{equation}
Although \cite{sutherland2019unbiased} give a quadratic-time estimator unbiased for $\sigma_{\mathrm{H}_1}$, it is much more complicated to implement and analysis, and can be negative, especially when the kernel is inappropriate.
Thus, we adopt a $V$-statistic $\hat{\sigma}_{\mathrm{H}_1}^2$ as the estimator of $\sigma^2_{\mathrm{H}_1}$:

\begin{equation}
    \hat{\sigma}^2_{\mathrm{H}_1} = 4\left[\frac{1}{n}\left(\frac{1}{n}\sum_{j=1}^n \mathcal{M}(u_i, u_j;\Theta_F)\right)^2 - \left(\frac{1}{n^2}\sum_{i=1}^{n}\sum_{j=1}^n \mathcal{M}(u_i, u_j;\Theta_F)\right)^2\right].
\end{equation}
For a given significance level $\alpha$, we choose a test threshold $c_\alpha$ and reject H$_0$ if $n \hat{d}_u^2 > c_\alpha$.
The power of our test is,
\begin{equation}
    {\rm{Pr}}_{1}\left(n\hat{d}_u^2 > c_\alpha\right)={\rm{Pr}}_{1}\left(\frac{{\hat{d}}_u^2-{\hat{d}}^2}{\hat{\sigma}_{\rm{H_1}}} > \frac{c_\alpha / n - {d}^2}{\hat{\sigma}_{\rm{H}_1}}\right)\rightarrow \Phi\left(\frac{\sqrt{n}{d}^2}{\hat{\sigma}_{\rm{H}_1}} - \frac{c_\alpha}{\sqrt{n}\hat{\sigma}_{\rm{H}_1}}\right).
\end{equation}
The $\rm{Pr}_1$ denote the probability under $\rm{H}_1$, and $\Phi$ is the cumulative distribution function of the standard normal distribution.
When $n \rightarrow \infty$, the test power is dominated by the first term.

Note that, in our implementation, we use $\hat{\sigma}_{\mathrm{H}_1,\lambda}$ as the estimator of $\sigma_{\mathrm{H}_1}$ instead of $\hat{\sigma}_{\mathrm{H}_1}$ directly.
Further, we prove that the $\hat{\sigma}_{\mathrm{H}_1}$ is replaced by its regularization $\hat{\sigma}_{\mathrm{H}_1, \lambda}$ is feasible.  
\begin{theorem}
We assume the kernels $\delta_{ak}$ are uniformly bounded in a Banach space of dimension $\mathcal{B}_d$ with the set of possible kernel parameters $\mathcal{F}$, which is bounded by $||\Theta_F|| \leq \mathcal{R}_F$, i.e., $$\sup_{\Theta_F \in \mathcal{F}} \sup_{x}\delta_{ak}(x,x;\Theta_F) \leq v,$$
and the kernel parameterization is Lipschitz, that for all $x^s, x^t $ and $\Theta_F, \Theta_F^{'} \in \mathcal{F}$, 
$$|\delta_{ak}(x^s, x^t;\Theta_F) - \delta_{ak}(x^s, x^t; \Theta_F^{'})| \leq L_{\delta} ||\Theta_F - \Theta_F^{'} ||.$$
Let $\Bar{\mathcal{F}}_s$ be the set of kernel parameters for which $\mathrm{\sigma_{H_1}^2}\geq s^2 > 0$
and let $\lambda = n^{-1/3}$.
Then, with probability $1 - \beta$,
\begin{equation}
\sup_{\Theta_F \in \Bar{\mathcal{F}}_s}|\frac{\hat{d}^2_u}{\hat{\sigma}_\mathrm{{H_1,\lambda}}}-\frac{d^2}{\sigma_\mathrm{H_1}}| = \mathcal{O}\left(\frac{1}{s^2 n^{1/3}}\left[\frac{1}{s} + L_\delta +  \sqrt{\mathcal{B}_d}\right]\right)
\label{ap4}
\end{equation}
\end{theorem}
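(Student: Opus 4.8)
The plan is to control the numerator and the denominator of the ratio separately and then recombine. Write $\hat\sigma^2_{\mathrm{H}_1,\lambda}=\hat\sigma^2_{\mathrm{H}_1}+\lambda$ and $\hat\sigma_{\mathrm{H}_1,\lambda}=\sqrt{\hat\sigma^2_{\mathrm{H}_1}+\lambda}$. The triangle inequality gives, for every $\Theta_F\in\bar{\mathcal F}_s$,
\[
\left|\frac{\hat d^2_u}{\hat\sigma_{\mathrm{H}_1,\lambda}}-\frac{d^2}{\sigma_{\mathrm{H}_1}}\right|
\le \frac{|\hat d^2_u-d^2|}{\hat\sigma_{\mathrm{H}_1,\lambda}}
+ d^2\,\frac{|\hat\sigma_{\mathrm{H}_1,\lambda}-\sigma_{\mathrm{H}_1}|}{\hat\sigma_{\mathrm{H}_1,\lambda}\,\sigma_{\mathrm{H}_1}}.
\]
Because $\sup_{\Theta_F}\sup_x\delta_{ak}(x,x;\Theta_F)\le v$, the kernel is bounded, so $|\mathcal M_{ak}|\le 4v$ and $d^2\le 4v$; this is the only place $v$ enters and it is absorbed into the constant. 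It thus suffices to (i) bound $|\hat d^2_u-d^2|$ and $|\hat\sigma^2_{\mathrm{H}_1}-\sigma^2_{\mathrm{H}_1}|$ uniformly over $\bar{\mathcal F}_s$, and (ii) show that $\hat\sigma_{\mathrm{H}_1,\lambda}$ and $\sigma_{\mathrm{H}_1}$ are bounded away from $0$ there.

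For (i), fix $\Theta_F$ first. Both $\hat d^2_u$ (a $U$-statistic) and $\hat\sigma^2_{\mathrm{H}_1}$ (a $V$-statistic) are averages of the bounded quantity $\mathcal M_{ak}$, so Hoeffding-/McDiarmid-type inequalities for $U$- and $V$-statistics give pointwise deviations of order $v/\sqrt n$ (the $V$-statistic carries in addition an $\mathcal O(v^2/n)$ bias, of lower order). To make this uniform over the infinite family $\mathcal F$, I would use an $\epsilon$-net of the parameter ball $\{\Theta_F:\|\Theta_F\|\le\mathcal R_F\}$, which in the $\mathcal B_d$-dimensional Banach space has cardinality at most $(1+2\mathcal R_F/\epsilon)^{\mathcal B_d}$. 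A union bound over the net multiplies the deviation by $\sqrt{\mathcal B_d\log(1+2\mathcal R_F/\epsilon)+\log(1/\beta)}$, while the Lipschitz hypothesis $|\delta_{ak}(\cdot\,;\Theta_F)-\delta_{ak}(\cdot\,;\Theta_F')|\le L_\delta\|\Theta_F-\Theta_F'\|$ bounds the off-net error of each of $\hat d^2_u$, $d^2$, $\hat\sigma^2_{\mathrm{H}_1}$, $\sigma^2_{\mathrm{H}_1}$ by $\mathcal O(L_\delta\epsilon)$. Choosing $\epsilon\asymp n^{-1/2}$ balances discretisation against the stochastic rate and absorbs the logarithm, yielding, with probability at least $1-\beta$,
\[
\sup_{\Theta_F\in\bar{\mathcal F}_s}\max\bigl\{|\hat d^2_u-d^2|,\,|\hat\sigma^2_{\mathrm{H}_1}-\sigma^2_{\mathrm{H}_1}|\bigr\}
=\mathcal O\!\bigl((L_\delta+\sqrt{\mathcal B_d})/\sqrt n\bigr).
\]

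For (ii), on $\bar{\mathcal F}_s$ we have $\sigma^2_{\mathrm{H}_1}\ge s^2$, hence $\sigma_{\mathrm{H}_1}\ge s$; and for $n$ large the bound from (i) forces $\hat\sigma^2_{\mathrm{H}_1}\ge s^2-\mathcal O(n^{-1/2})\ge s^2/2$, so $\hat\sigma_{\mathrm{H}_1,\lambda}\ge s/\sqrt2$. Using $|\sqrt a-\sqrt b|=|a-b|/(\sqrt a+\sqrt b)$,
\[
|\hat\sigma_{\mathrm{H}_1,\lambda}-\sigma_{\mathrm{H}_1}|
\le \frac{|\hat\sigma^2_{\mathrm{H}_1}-\sigma^2_{\mathrm{H}_1}|+\lambda}{\hat\sigma_{\mathrm{H}_1,\lambda}+\sigma_{\mathrm{H}_1}}
\le \frac{1}{s}\Bigl(\lambda+\mathcal O\!\bigl(\tfrac{L_\delta+\sqrt{\mathcal B_d}}{\sqrt n}\bigr)\Bigr),
\]
and with the calibrated choice $\lambda=n^{-1/3}$ the regularisation bias $\lambda$ dominates the $n^{-1/2}$ stochastic term. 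Plugging back into the two-term bound, the first term is $\mathcal O\!\bigl((L_\delta+\sqrt{\mathcal B_d})/(s\sqrt n)\bigr)$ and the second is $d^2(\hat\sigma_{\mathrm{H}_1,\lambda}\sigma_{\mathrm{H}_1})^{-1}\cdot\mathcal O(\lambda/s)=\mathcal O\!\bigl(v/(s^{3}n^{1/3})\bigr)$; since $n^{-1/2}\le n^{-1/3}$ and $s$ is at most a constant, both fit under $\mathcal O\!\bigl(s^{-2}n^{-1/3}[\,s^{-1}+L_\delta+\sqrt{\mathcal B_d}\,]\bigr)$, which is (\ref{ap4}).

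The \textbf{main obstacle} is step (i): turning the single-kernel concentration into a bound uniform over the whole parameterised family. This needs the metric-entropy/covering estimate for the parameter set together with a careful check that $L_\delta$ propagates --- through the four-term combination defining $\mathcal M_{ak}$ and through the nonlinear $V$-statistic $\hat\sigma^2_{\mathrm{H}_1}$ --- into an $\mathcal O(L_\delta\epsilon)$ off-net error, carried out simultaneously for all four quantities. A secondary point is the calibration of $\lambda$: it must be large enough to keep $\hat\sigma_{\mathrm{H}_1,\lambda}$ stable yet small enough not to degrade the rate, and $\lambda=n^{-1/3}$ is exactly the balancing value, which is why the final rate is $n^{-1/3}$ rather than $n^{-1/2}$.
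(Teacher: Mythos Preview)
Your proof is correct and reaches the stated bound, but it proceeds differently from the paper and this difference matters for understanding the role of $\lambda$.

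The paper uses a \emph{three}-term decomposition, inserting the intermediate quantities $\hat d_u^2/\sigma_{\mathrm H_1,\lambda}$ and $\hat d_u^2/\sigma_{\mathrm H_1}$ (with $\sigma_{\mathrm H_1,\lambda}=\sqrt{\sigma_{\mathrm H_1}^2+\lambda}$), and then bounds $\hat\sigma_{\mathrm H_1,\lambda}$ below by the \emph{deterministic} inequality $\hat\sigma_{\mathrm H_1,\lambda}\ge\sqrt\lambda$. This introduces a factor $1/\sqrt\lambda$ in front of the uniform deviation $\sup|\hat\sigma^2_{\mathrm H_1}-\sigma^2_{\mathrm H_1}|=O(n^{-1/2})$, giving a term of order $1/(s^{2}\sqrt{\lambda n})$; balancing it against the bias term $\lambda/s^{3}$ is exactly what forces $\lambda=n^{-1/3}$ and yields the $n^{-1/3}$ rate. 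The paper then cites Liu et al.\ (2020) for the uniform-in-$\Theta_F$ concentration, which is the covering-plus-Lipschitz argument you outlined in step~(i).

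Your two-term decomposition is also valid, but because you lower-bound $\hat\sigma_{\mathrm H_1,\lambda}$ via the \emph{high-probability} inequality $\hat\sigma_{\mathrm H_1}^2\ge s^2/2$ (coming from step~(i)), no $1/\sqrt\lambda$ factor ever appears. In your argument $\lambda$ contributes only bias, so the choice $\lambda=n^{-1/3}$ is not a ``balancing value'' at all: with your bounds, any $\lambda\le n^{-1/3}$ would give the same (or a better) rate, and $\lambda=n^{-1/2}$ would even yield $n^{-1/2}$. Your final paragraph therefore misattributes the origin of the $n^{-1/3}$ rate---in your route it comes purely from the imposed choice $\lambda=n^{-1/3}$, whereas in the paper's route it is genuinely optimal. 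The upshot is that your approach is actually sharper (it would support a stronger theorem), while the paper's approach explains why the regularisation $\lambda$ is there in the first place: it allows one to avoid relying on the event $\hat\sigma_{\mathrm H_1}^2\ge s^2/2$ and hence avoids the ``for $n$ large enough'' caveat you needed.
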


\begin{proof}
The $\hat{\sigma}^2_{\mathrm{H_1},\lambda} = \sigma^2_{\mathrm{H_1}} + \lambda$, and $|\hat{d}^2_u| \leq 4v$. 
Thus, the Eq.(\ref{ap4}) can be decomposed as follows, 
\begin{equation}
\begin{split}
\sup_{\Theta_F \in \Bar{\mathcal{F}}_{s}} |\frac{\hat{d}_u^2}{\hat{\sigma}_{\mathrm{H_1},\lambda}} - \frac{d^2}{\sigma_\mathrm{H_1}}| &\leq 
\sup_{\Theta_F \in \Bar{\mathcal{F}}_{s}}|\frac{\hat{d}^2_u}{\hat{\sigma}_{\mathrm{H_1},\lambda}}-\frac{\hat{d}^2_u}{\sigma_{\mathrm{H_1},\lambda}}|+
\sup_{\Theta_F \in \Bar{\mathcal{F}}_{s}}|\frac{\hat{d}^2_u}{\sigma_{\mathrm{H_1},\lambda}}-\frac{\hat{d}^2_u}{\sigma_{\mathrm{H_1}}}|+
\sup_{\Theta_F \in \Bar{\mathcal{F}}_{s}}|\frac{\hat{d}^2_u}{\sigma_{\mathrm{H_1}}}-\frac{{d}^2}{\sigma_{\mathrm{H_1}}}| \\
&= \sup_{\Theta_F \in \Bar{\mathcal{F}}_s}|\hat{d}_u^2|\frac{1}{\hat{\sigma}_{\mathrm{H_1},\lambda}}\frac{1}{\sigma_{\mathrm{H}_1,\lambda}}\frac{|\hat{\sigma}^2_{\mathrm{H}_1,\lambda}-\sigma^2_{\mathrm{H}_1,\lambda}|}{\hat{\sigma}_{\mathrm{H}_1, \lambda} + \sigma_{\mathrm{H}_1,\lambda}} + 
\sup_{\Theta_F \in \Bar{\mathcal{F}}_s}|\hat{d}_u^2|\frac{1}{\hat{\sigma}_{\mathrm{H_1},\lambda}}\frac{1}{\sigma_{\mathrm{H}_1}}\frac{|\hat{\sigma}^2_{\mathrm{H}_1,\lambda}-\sigma^2_{\mathrm{H}_1}|}{\hat{\sigma}_{\mathrm{H}_1, \lambda} + \sigma_{\mathrm{H}_1}} \\ 
&+ \sup_{\Theta_F \in \Bar{\mathcal{F}}_s}\frac{1}{\sigma_{\mathrm{H}_1}}|\hat{d}^2_u - d^2| \\
&\leq \sup_{\Theta_F \in \Bar{\mathcal{F}}_s} \frac{4v}{\sqrt{\lambda}s(s + \sqrt{\lambda})}|\hat{\sigma}^2_{\mathrm{H}_1} - \sigma^2_{\mathrm{H}_1}| + \frac{4 v \lambda}{\sqrt{s^2 + \lambda}s(\sqrt{s^2 + \lambda} + s)} + \sup_{\Theta_F \in \Bar{\mathcal{F}}_s}\frac{1}{s}|\hat{d}^2_u - d^2| \\
& \leq \frac{4v}{s^2 \sqrt{\lambda}}\sup_{\Theta_F \in \mathcal{F}}|\hat{\sigma}^2_{\mathrm{H_1}} - \sigma^2_{\mathrm{H_1}}| + \frac{2v}{s^3}\lambda + \frac{1}{s}\sup_{\Theta_F \in \mathcal{F}}|\hat{d}^2_u - d^2|
\end{split}
\nonumber
\end{equation}
According to the Supplementary of \cite{liu2020learning1},
the $\hat{\sigma}_{\mathrm{H_1}}$ and $\hat{d}_u^2$ are uniform convergence. 
Then, with least probability $1 - \beta$, the error bound is, 
\begin{equation}
    \frac{2v}{s^3}\lambda + \left[\frac{8v}{s\sqrt{n}} + \frac{1792v}{\sqrt{n}s^2\sqrt{\lambda}}\right]\sqrt{2\log\frac{2}{\beta} + 2\mathcal{B}_d\log(4\mathcal{R}_F\sqrt{n})}+\left[\frac{8}{s\sqrt{n} }+ \frac{2048 v^2}{\sqrt{n}s^2\sqrt{\lambda}}\right]L_{\delta} + \frac{4608v^3}{s^2 n \sqrt{\lambda}}.
\nonumber
\end{equation}
Then, taking $\lambda = n^{-\frac{1}{3}}$,
\begin{equation}
\begin{split}
    \frac{2v}{s^3 n^{1/3}} + \left[\frac{8v}{s\sqrt{n}} + \frac{1792v}{s^2 n^{1/3}}\right]\sqrt{2 \log \frac{2}{\beta} + 2 \mathcal{B}_d \log(4\mathcal{R}_F\sqrt{n})} + \left[\frac{8}{s\sqrt{n}} + \frac{2048v^2}{s^2 n^{1/3}}\right]L_{\delta} + \frac{4608v^3}{s^2 n^{5/6}} \\
    \leq 
    \frac{2v}{s^3 n^{1/3}} + \left[\frac{8v}{s\sqrt{n}} + \frac{2048v^2}{s^2n^{1/3}}\right]
    \left[L_{\delta} + \sqrt{2\log\frac{2}{\beta}+2\mathcal{B}_d\log(4\mathcal{R}_F \sqrt{n})}\right] + \frac{4608 v^3}{s^2 n^{5/6}}.
\end{split}
\nonumber
\end{equation}
Treating $v$ as a constant, we can get the Eq.(\ref{ap4}).
\end{proof}

%% If you have bibdatabase file and want bibtex to generate the
%% bibitems, please use
%%
 \bibliographystyle{elsarticle-num} 
 \bibliography{ref}

%% else use the following coding to input the bibitems directly in the
%% TeX file.

% \begin{thebibliography}{00}

% %% \bibitem{label}
% %% Text of bibliographic item

% \bibitem{}
% \end{thebibliography}
\end{document}